\newtheorem{theorem}{Theorem}
\newtheorem{lemma}{Lemma}
\newtheorem{example}{Example}
\newtheorem{definition}{Definition}
\definecolor{dkgreen}{rgb}{0,0.6,0} 
\definecolor{gray}{rgb}{0.5,0.5,0.5} 
\definecolor{mauve}{rgb}{0.58,0,0.82}
\tiny\color{gray},
\title{Smoothing Linear Codes by Rényi Divergence and Applications to Security Reduction }
\date{} 
\begin{document}

\begin{titlepage}
\author{ Hao Yan, Cong Ling \\ Imperial College London\\  \texttt{h.yan22@imperial.ac.uk}, \texttt{c.ling@imperial.ac.uk} }
\maketitle

\thispagestyle{empty}
\begin{abstract}
 The concept of the smoothing parameter plays a crucial role in both lattice-based and code-based cryptography, primarily due to its effectiveness in achieving nearly uniform distributions through the addition of noise. Recent research by Pathegama and Barg has determined the optimal smoothing bound for random codes under Rényi Divergence for any order $\alpha \in (1, \infty)$ \cite{pathegama2024r}. Considering the inherent complexity of encoding/decoding algorithms in random codes, our research introduces enhanced structural elements into these coding schemes. Specifically, this paper presents a novel derivation of the smoothing bound for random linear codes, maintaining the same order of Rényi Divergence and achieving optimality for any $\alpha\in (1,\infty)$. We extend this framework under KL Divergence by transitioning from random linear codes to random self-dual codes, and subsequently to random quasi-cyclic codes, incorporating progressively more structures. As an application, we derive an average-case to average-case reduction from the Learning Parity with Noise (LPN) problem to the average-case decoding problem. This reduction aligns with the parameter regime in \cite{debris2022worst}, but uniquely employs Rényi divergence and directly considers Bernoulli noise, instead of combining ball noise and Bernoulli noise.
\end{abstract}
\end{titlepage}

\newpage
\tableofcontents


\section{Introduction}

\par{\textbf{Smoothing Parameter.}} 
In the context of lattices or codes, smoothing refers to the phenomenon where adding sufficiently large noise causes the final distribution to approximate uniformity over the entire Euclidean or Hamming space. The smoothing parameter is defined as the threshold noise level associated with a specific approximation error, \(\epsilon\). Conceptually, it quantifies the degree of "smoothness" required in the noise distribution to ensure the discrete structure of the lattice \(\Lambda\) or code \(C\) becomes indiscernible. This parameter is pivotal in transforming lattice or code decoding problems into cryptographic security proofs. Specifically, in lattice-based cryptography, the security of problems such as the Short Integer Solution (SIS) and Learning With Errors (LWE) is reduced to the difficulty of lattice problems involving the smoothing parameter, as measured by Statistical Distance (SD) \cite{micciancio2007worst}\cite{regev2009lattices}\cite{peikert2009public}\cite{brakerski2013classical}, establishing a foundation for post-quantum cryptography. Smoothing parameters can be also used in other lattice based problems such as lattice isomorphism problem \cite{ducas2022lattice}. Additionally, Rényi divergence serves as another metric for evaluating approximation error within cryptographic contexts \cite{bai2018improved}\cite{prest2017sharper}.

The smoothing parameter, a critical tool in cryptography, has also been optimized by researchers in the information theory community under the concept of channel resolvability.  Channel resolvability addresses the problem of determining the amount of information required to simulate a given channel and its output. Initially proposed by Han and Verdú \cite{han1993approximation}, this concept used SD and Kullback-Leibler (KL) divergence to measure approximation error. Hayashi later extended the solution to the KL divergence framework \cite{hayashi2006general}\cite{hayashi2011exponential}, while Yu and Tan generalized it to the Rényi divergence for parameters \(\alpha \in [0, 2] \cup \{\infty\}\) \cite{yu2018renyi}. For \(\alpha \in (2,\infty)\), Pathegama and Barg addressed the problem using random codes, and even proposed that Reed-Muller code can achieve the optimal bound under Bernouli noise, albeit limited to uniform distribution targets \cite{pathegama2023smoothing}. Yu provided a more comprehensive solution \cite{yu2024renyiresolvability}, presenting an achievable coding scheme through random codes, constant composition codes and typical sets.

\par{\textbf{Code-Based Cryptography.}} 
Coding theory is an important subject aiming to correct errors, where usually linear codes are originally used in digital communication and data transmission that contains encoding and decoding processes. It is computationally difficult to decode messages, and to decrease the decoding algorithm complexity and improve the coding efficiency, various code structures and decoding algorithms were proposed. 
 
The first code-based cryptosystem was proposed by McEliece \cite{mceliece1978public}, where binary Goppa code is used for encryption. The security of McEliece's cryptosystem relies on hardness of decoding codewords and distinguishing random matrices and permuted generator matrices. Following this work, various codes are used to build up cryptosystem, among which the MDPC-code based scheme BIKE \cite{aragon2022bike} and binary Goppa code based scheme Classic McEliece \cite{bernstein2017classic} become fourth round candidates in the NIST call for PQC standarization, which are the only two schemes of McEliece's framework.

 There are also numerous work aiming to improve efficiencies of McEliece's scheme,  leading to the development of alternative frameworks. In 2003, Aleknovich proposed a new framework with a security proof that it's solely based on decoding problem \cite{alekhnovich2003more}. Aleknovich's cryptosystem can be regarded based on the code version of LWE, i.e. Learning Parity With Noise (LPN).

 Similar to R-LWE which improve efficiency by ring structures, the Ring-LPN problem was also proposed to improve efficiency of LPN, and was used for a authentication protocol \cite{heyse2012lapin}. Several years later, the HQC scheme was proposed \cite{melchor2018HQC}, representing a specific version of Module-LPN. The security of the HQC scheme is grounded in the Quasi-Cyclic Syndrome Decoding problem. Notably, the HQC scheme stands out as the sole candidate within the non-McEliece framework during the fourth round of the Post-Quantum Cryptography (PQC) standardization process. Moreover Quang Dao and Aayush Jain proposed a new variant of LPN named Dense-Sparse LPN based on density of matrices \cite{dao2024lossyDensesparseLPN}. Roughly, the assumption states that $(\mathbf{TM}, \mathbf{s} \mathbf{TM} + \mathbf{e})$ is indistinguishable from $(\mathbf{TM}, \mathbf{u})$, for a random (dense) matrix $\mathbf{T}$, random sparse matrix $\mathbf{M}$, and sparse noise vector $\mathbf{e}$ drawn from the Bernoulli distribution with inverse polynomial noise probability. In addition to the existing frameworks centered around syndromes and decoding, a novel approach has emerged, grounded in the complexity of determining the linear isometry, an equivalence transformation that preserves the metric, between two codes, which shares similarities with LIP in lattice version. Code Equivalence problem was first studied in coding theory but it wasn't until 2020 that the first cryptographic scheme exclusively relying on the hardness of the Linear Code Equivalence problem (LEP) was introduced in \cite{biasse2020less}, named Linear Equivalence Signature Scheme (LESS). 

Despite the reliance of code-based cryptographic schemes on hard problems, many of them have faced lack of complete security reduction proof from worst case to average case and search problem to decision problem. In contrast, most problems in lattice-based cryptography have been addressed. Given the shared characteristics of codes and lattices, smoothing techniques have been applied to address the LPN problem. The first worst-case to average-case reduction for codes was established in \cite{brakerski2019worst} by smoothing a code with random walk noise, which is similar to LWE's classical reduction in \cite{regev2009lattices}, albeit with a requirement for balanced codes. This reduction has been further optimized in subsequent works \cite{yu2021smoothing} and \cite{debris2022worst} where the authors proposed an average-case to average-case reduction for LPN problem as well.  However, the need for balanced codes persists because terms involving codewords of both low and high Hamming weights cannot be eliminated as $n$ approaches infinity when estimating the smoothing bounds. As for code equivalence problem, up until now there is no worst-to-average reduction for such problems.

\par{\textbf{Main Contributions.}} 
In this paper we derive the smoothing bound for random linear codes by Rényi divergence all Rényi parameters $\alpha\in (1, \infty)$. To introduce more structure into the random code, we reduce random linear codes to a class of random self-dual codes and further reduce them to a class of random quasi-cyclic codes. As an application of the smoothing bound in code-based cryptography, we derive a average-case to average-case reduction from LPN to average-case decoding problem, which has the same parameter regime in \cite{debris2022worst} but our reduction utilizes Rényi divergence and consider directly Bernoulli noise instead of combining ball noise and Bernoulli noise together.

 In Section \ref{sec:Preliminary}, some notations and preliminaries of coding theory and cryptography will be introduced. Random linear code smoothing bound for Rényi parameters $\alpha\in(1,\infty)$ will be given in Section \ref{sec:Random_Linear_Code_Smoothing}. Section \ref{sec:Random_Self_Dual_Code_Smoothing} and \ref{sec:Random_Quasi_Cyclic_Code_Smoothing} will give the smoothing bound of random self dual code and random quasi cyclic code. Application of smoothing bound will be given in Section \ref{sec:Average_to_Average_Case_Reduction}.

 As we were finalizing this paper, we became aware of the independent and concurrent work by Pathegama and Barg \cite{pathegama2024r}. It is important to distinguish our paper from \cite{pathegama2024r} in the following ways: (a) While \cite{pathegama2024r} addresses the problem of hash functions, our focus is on code-based cryptography; (b) Our results are more general, as we are capable of handling all real-valued parameters $\alpha \in (1, \infty)$, whereas in \cite{pathegama2024r}, $\alpha$ is restricted to natural numbers $\alpha \in \mathbb{N}$; (c) We study self-dual codes, particularly quasi-cyclic codes, which are widely used in the practice of code-based cryptography. This aspect also distinguishes our work from \cite{pathegama2024r} which is restricted to random linear codes.

\section{Preliminary}
\label{sec:Preliminary}

We first given some basic notations that used in this paper. For a finite set \(\mathcal{X}\) of outcomes, a probability distribution \(P\) assigns a probability \(P(x)\) to each outcome \(x \in \mathcal{X}\) such that \(0 \leq P(x) \leq 1\) for all \(x \in \mathcal{X}\) and \(\sum_{x \in \mathcal{X}} P(x) = 1\). The probability mass function (PMF) \(P(x)\) specifies the probability that a discrete random variable \(X\) takes the value \(x\). A finite field, also known as a Galois field, is a field with a finite number of elements. The number of elements in a finite field is called its order, and it is always a power of a prime number. The finite field with \(q\) elements is denoted by \(\mathbb{F}_q\). For instance, \(\mathbb{F}_2\) is the finite field with two elements, typically \(\{0, 1\}\), with addition and multiplication defined modulo 2, which can be written as $\mathbf{Z}_{2} $. Denote channel as $W(\cdot |\cdot)$. $W(y|x)$ given input $x$ and output $y$.

\subsection{Rényi Entropy and Divergence}

Rényi entropy and Rényi divergence are fundamental concepts in information theory that generalize the classical Shannon entropy and Kullback-Leibler (KL) divergence. These measures incorporate a parameter \(\alpha\) that allows for a family of entropy and divergence measures, each with different sensitivity to the probability distributions' differences. Rényi entropy of order \(\alpha\), where \(\alpha > 0\) and \(\alpha \neq 1\), for a discrete probability distribution \(P\) over a finite set \(\mathcal{X}\), is defined as:

\[ H_\alpha(P) = \frac{1}{1 - \alpha} \log \left( \sum_{x \in \mathcal{X}} P(x)^\alpha \right) \]

Here, \(\mathcal{X}\) denotes the set of possible outcomes, and \(P(x)\) represents the probability of the outcome \(x\) under the distribution \(P\). As \(\alpha \to 1\), Rényi entropy converges to the Shannon entropy:

\[ H_\alpha(P) \to H(P) = - \sum_{x \in \mathcal{X}} P(x) \log P(x) \]

Rényi entropy provides a spectrum of entropy measures.

Rényi divergence measures the difference between two probability distributions \(P\) and \(Q\). Given two discrete probability distributions \(P\) and \(Q\) over a finite set \(\mathcal{X}\), the Rényi divergence of order \(\alpha\), where \(\alpha > 0\) and \(\alpha \neq 1\), is defined as:

\[ D_\alpha(P \| Q) = \frac{1}{\alpha - 1} \log \left( \sum_{x \in \mathcal{X}} P(x)^\alpha Q(x)^{1-\alpha} \right) \]

As \(\alpha \to 1\), Rényi divergence converges to the KL divergence:

\[ D_\alpha(P \| Q) \to D_{KL}(P \| Q) = \sum_{x \in \mathcal{X}} P(x) \log \frac{P(x)}{Q(x)} \]

For \(\alpha > 0\) and \(\alpha \neq 1\), \(D_\alpha(P \| Q) \geq 0\), with equality if and only if \(P = Q\). Additionally, for \(\alpha \leq \beta\), \(D_\alpha(P \| Q) \leq D_\beta(P \| Q)\). This property makes Rényi divergence a useful tool for controlling the trade-off between robustness and sensitivity in various applications. The choice of \(\alpha\) influences how the divergence measures the difference between \(P\) and \(Q\); for \(\alpha > 1\), the divergence is more sensitive to the regions where \(P(x)\) is larger than \(Q(x)\), while for \(0 < \alpha < 1\), it is more sensitive to the regions where \(P(x)\) is smaller than \(Q(x)\).

 By adjusting the parameter \(\alpha\), researchers and practitioners can tailor the Rényi divergence to meet the specific requirements of their applications, making it a versatile and powerful tool in information theory and beyond.  If we want to consider q-ary code, the base of log can be replaced with $q$.

\subsection{Linear, Self-Dual, and Quasi-Cyclic Codes}
Since we are interested in smoothing of linear code, self dual code and quasi cyclic code, we will first introduce the concepts of these codes.

\begin{definition}[Linear Code]
A \textit{linear code} \( C \) is a subspace of the vector space \( \mathbb{F}_q^n \), where \( \mathbb{F}_q \) is a finite field with \( q \) elements. For binary codes, \( q = 2 \), thus \( \mathbb{F}_2 \). The code \( C \) is characterized by its dimension \( k \) and length \( n \), and is typically denoted as an \([n, k]\) code.
\end{definition}

Linear codes possess several key properties. Firstly, as a subspace, they ensure that for any codewords \( \mathbf{c}_1, \mathbf{c}_2 \in C \) and any scalars \( a, b \in \mathbb{F}_q \), the linear combination \( a\mathbf{c}_1 + b\mathbf{c}_2 \) is also in \( C \). This closure under addition and scalar multiplication is fundamental to their structure. The dimension \( k \) of the code represents the number of linearly independent codewords, signifying the amount of information that can be encoded, with \( C \) having \( q^k \) distinct codewords.

A generator matrix \( G \) for an \([n, k]\) linear code is a \( k \times n \) matrix whose rows form a basis for \( C \). Any codeword \( \mathbf{c} \in C \) can be expressed as \( \mathbf{c} = \mathbf{u}G \), where \( \mathbf{u} \in \mathbb{F}_q^k \) is an information vector. Encoding a message vector \( \mathbf{u} \) into a codeword \( \mathbf{c} \) involves multiplying \( \mathbf{u} \) by the generator matrix \( G \). Conversely, a parity-check matrix \( H \) for an \([n, k]\) linear code is an \((n-k) \times n\) matrix that defines the dual code \( C^\perp \). A vector \( \mathbf{v} \in \mathbb{F}_q^n \) is a valid codeword of \( C \) if and only if it satisfies the parity-check equation \( H\mathbf{v}^T = \mathbf{0} \). The rate $R$ of a linear code is defined as $R=\frac{k}{n}$.

\begin{definition}[Self-Dual Code]
    A linear code $C$ is \emph{self-dual} if $C = C^\perp$, where $C^\perp$ is the dual code of $C$, defined as the set of all vectors orthogonal to every codeword in $C$.    
\end{definition}

In this paper we only consider binary self-dual code, and in this case the dimension \( k \) is \( \frac{n}{2} \) because \( C \) and \( C^\perp \) must span the entire space \( \mathbb{F}_2^n \). The generator matrix \( G \) and the parity-check matrix \( H \) of a self-dual code are related by \( G = H \). This implies that all codewords in a self-dual code are orthogonal to each other with respect to the standard dot product, i.e. $\left \langle  \mathbf{u} , \mathbf{v} \right \rangle  = 0 \quad \text{for all } \mathbf{u}, \mathbf{v} \in C$

\begin{definition}{$(n=2t, k=t)$ Quasi Cyclic Code\\}
    A systematic quasi-cyclic $(2t, t)$ code has the form $$[l(x), l(x)a(x)] \mod{x^t + 1}$$ where $i(x), a(x)\in \mathbb{F}_2[x]/x^t+1$
\end{definition}

Quasi-cyclic codes can be seen as generalizations of cyclic codes and often possess similar algebraic properties, making them easier to analyze and decode. The generator matrix of a quasi-cyclic code has a block circulant structure, which can be leveraged for efficient implementation of encoding and decoding algorithms. Quasi-cyclic codes are widely used in communication systems due to their balance of structure and performance, offering good error correction capabilities with relatively simple implementation.

\subsection{Smoothing Parameter And Learning Parity With Noise}

In code-based cryptography, researchers are interested in LPN problem and its hardness connection with other problems. 

 \begin{definition}[LPN problem]
    The (decisional) LPN problem with secret length $n$ and noise rate $\mu \in (0, 1/2)$, denoted by $LPN_{n,\mu}$, challenges to distinguish $(\mathbf{a}, \langle \mathbf{a} ,\mathbf{s}\rangle + e \mod{2})$ and $(\mathbf{a}, u)$, where $\mathbf{a} \overset{\$}{\leftarrow} \mathbf{Z}_{2}^{n}, \mathbf{s} \overset{\$}{\leftarrow} \mathbf{Z}_{2}^{n}, u \overset{\$}{\leftarrow} \mathbf{Z}_{2}$, and $e \overset{\$}{\leftarrow} Ber(\mu)$, where $Ber(\mu)$ denotes Bernoulli distribution with parameter $\mu$ and $u$ is uniform over binary field.
\end{definition}

\begin{definition}{(Linear Code) Average case Decoding Problem - aDP(n, k, t)}
        \begin{itemize}
        \item Input: $(\mathbf{G},\mathbf{y}\doteq \mathbf{xG}+\mathbf{e})$ where $\mathbf{G}$ is the random generator of an binary $[n, k]$ linear code. $\mathbf{x}\in \mathbb{F}_{2}^k, \mathbf{G}\in \mathbb{F}_{2}^{k\times n}$ and $\mathbf{e}\in \mathbb{F}_{2}^n$ with Hamming weight $t$.
        \item Output: $\mathbf{xG}$
    \end{itemize}
\end{definition}

LPN is a average case problem, and in order to finish a average case to average case reduction, i.e. reduction from LPN to aDP(n,k,t), smoothing technique plays the main role. As noted in \cite{brakerski2019worst}\cite{debris2022worst}, given a linear code with generator $\mathbf{G}\in \mathbb{F}_2^{k\times n}$, a codeword $\mathbf{c}=\mathbf{m}\mathbf{G}+\mathbf{e}$, and the noise weight $wt(\mathbf{e})=t$. We sample vector $\mathbf{r}\in \mathbb{F}_2^n$ uniformly from a noise distribution, s.t. $P(\left\langle \mathbf{r},\mathbf{e} \right\rangle=1)=p$, and then multiply it with codeword $\mathbf{c}$. Thus 
$${\left\langle \mathbf{r},\mathbf{c} \right\rangle}={\left\langle \mathbf{rG^T},\mathbf{m} \right\rangle}+{\left\langle \mathbf{r},\mathbf{e} \right\rangle}$$ which can be fed into the LPN oracle and the error is measured by statistical distance $\Delta \left((\mathbf{rG}^T, \langle \mathbf{r}, \mathbf{e}\rangle), (U_{\mathbb{F}_2^k}, \mathrm{Ber}_p)\right )$. When $\mathbf{G}$ is random over all $[n,k]$ linear code, we can achieve a average case reduction. In this paper, Rényi divergence is considered instead of statistical distance.

\subsection{Inequalities}
To help analyze smoothing bound, some inequalities will be required, especially rearrangement inequality, which can be refered to \cite{cvetkovski2012inequalities}.

\begin{lemma}[Rearrangement Inequality]
    \label{lemma:Rearrangement_inequality}
Let $a_1 \leq a_2 \leq \dots \leq a_n$ and $b_1 \leq b_2 \leq \dots \leq b_n$ be two sequences of real numbers. Then their rearrangement yields the maximum value when the sequences are ordered in the same way, that is,
\[
a_1b_1 + a_2b_2 + \dots + a_nb_n \geq a_1b_{\sigma(1)} + a_2b_{\sigma(2)} + \dots + a_nb_{\sigma(n)}
\]
for any permutation $\sigma$ of $\{1, 2, \dots, n\}$.
\end{lemma}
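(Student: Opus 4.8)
The plan is to prove the inequality by an \emph{exchange (sorting) argument}. Write $S(\sigma) := \sum_{k=1}^{n} a_k b_{\sigma(k)}$ for the value induced by a permutation $\sigma$. I would show that starting from an arbitrary $\sigma$, one can reach the identity permutation by a finite sequence of elementary swaps, each of which never decreases $S$; since the identity yields $a_1 b_1 + \cdots + a_n b_n$, this is the maximum, which is exactly the claim.

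The core computation is local. If $\sigma$ is not the identity, then (since $\sigma$ is a bijection) it cannot be nondecreasing, so there is an index $i$ with $\sigma(i) > \sigma(i+1)$. Let $\sigma'$ agree with $\sigma$ everywhere except that it swaps the values at positions $i$ and $i+1$, i.e. $\sigma'(i) = \sigma(i+1)$ and $\sigma'(i+1) = \sigma(i)$. Then only two terms of the sum change, and
\[
S(\sigma') - S(\sigma) = a_i b_{\sigma(i+1)} + a_{i+1} b_{\sigma(i)} - a_i b_{\sigma(i)} - a_{i+1} b_{\sigma(i+1)} = (a_{i+1} - a_i)\bigl(b_{\sigma(i)} - b_{\sigma(i+1)}\bigr) \ge 0,
\]
because $a_i \le a_{i+1}$ (as the $a$'s are sorted and $i < i+1$) and $b_{\sigma(i+1)} \le b_{\sigma(i)}$ (as the $b$'s are sorted and $\sigma(i) > \sigma(i+1)$).

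For termination I would track the number of inversions of $\sigma$, namely $\#\{(k,\ell) : k < \ell,\ \sigma(k) > \sigma(\ell)\}$. Swapping an \emph{adjacent} pair that forms an inversion decreases this count by exactly one, so after at most $\binom{n}{2}$ such steps $\sigma$ has no inversions and is therefore the identity. Chaining the inequalities $S(\sigma) \le S(\sigma^{(1)}) \le \cdots \le S(\mathrm{id})$ finishes the proof. The only subtlety — and the main thing to get right rather than a genuine obstacle — is to swap adjacent inversions specifically, so that the inversion count is a strictly decreasing termination measure; swapping arbitrary out-of-order pairs could fail to make progress. Ties among the $a_i$ or among the $b_j$ cause no difficulty, since the local inequality above is already stated non-strictly. (An alternative route is induction on $n$: fix the position assigned the largest $b$-value and argue it is optimal to attach it to $a_n$, then apply the inductive hypothesis to the remaining $n-1$ terms; I would keep the exchange argument as the primary proof for its brevity.)
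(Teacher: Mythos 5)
Your proof is correct and complete. Note, however, that the paper itself does not prove this lemma; it only states it and cites a reference (Cvetkovski's inequalities text), so there is no in-paper proof to compare against. Your exchange argument — locate an adjacent inversion of $\sigma$, swap it, observe the sum cannot decrease since $(a_{i+1}-a_i)(b_{\sigma(i)}-b_{\sigma(i+1)}) \ge 0$, and terminate via the strictly decreasing inversion count — is the standard and fully rigorous bubble-sort proof of the rearrangement inequality, and your remark about why one must swap \emph{adjacent} inversions (to guarantee the termination measure strictly decreases) shows you have correctly identified the one place a careless version would be incomplete.
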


\begin{lemma}
 \label{lemma:AMGMlike_inequality}
    For non-negative numbers $x_1, ..., x_n$, let \( \sigma_j \) be any permutation of \( 1, 2, \ldots, n \) for any $j$, then there is a inequality as follows,
    $$
        \sum_{i=1}^{n} \prod_{j}  x_{\sigma_j(i)}^{p_j} \leq \sum_{i=1}^{n} x_i^{p}.
    $$
    Here $p_j$'s satisfy $\sum_j p_j=p$, and are fractions with integers over a common denominator no smaller than $1$, i.e. $p_j\in\frac{\mathbb{Z}}{q}, q\geq 1$ for $1\leq j\leq n$.
   
\end{lemma}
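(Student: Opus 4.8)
The plan is to read off the bound from a single, term-by-term application of the weighted AM--GM inequality, with the permutation property of the $\sigma_j$ as the only structural ingredient. I read the hypothesis as requiring $p_j\ge 0$ for every $j$ (equivalently, non-negative numerators over the common denominator $q$): this is exactly what the AM--GM step consumes, and it is genuinely necessary, since a two-point instance with a negative exponent --- e.g.\ $n=2$, $p_1=2$, $p_2=-1$, $\sigma_1=\mathrm{id}$, $\sigma_2$ the transposition --- already violates the claimed inequality. The common denominator itself plays only a cosmetic role, ensuring every power $x_i^{p_j}$ is unambiguous for $x_i\ge 0$; in fact the argument below goes through verbatim for arbitrary real exponents $p_j\ge 0$, and if one prefers to route through integer powers one may substitute $y_i:=x_i^{1/q}$, replace $p_j$ by the integers $a_j:=qp_j$, and split each $y^{a_j}$ into $a_j$ copies of $y$ --- which is the "AM--GM-like" shape the name alludes to. The degenerate case $p=0$ is trivial (then every $p_j=0$ and both sides equal $n$), so assume $p>0$ from now on.

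\textbf{The core argument.} Fix $i\in\{1,\dots,n\}$ and set $w_j:=p_j/p\ge 0$, so that $\sum_j w_j=1$. Writing each factor as $x_{\sigma_j(i)}^{p_j}=\bigl(x_{\sigma_j(i)}^{\,p}\bigr)^{w_j}$ and applying the weighted AM--GM inequality to the non-negative reals $\{x_{\sigma_j(i)}^{\,p}\}_j$ with weights $\{w_j\}_j$ gives
\[
\prod_j x_{\sigma_j(i)}^{p_j}\;=\;\prod_j\bigl(x_{\sigma_j(i)}^{\,p}\bigr)^{w_j}\;\le\;\sum_j w_j\,x_{\sigma_j(i)}^{\,p}.
\]
Summing this over $i$ and exchanging the two finite sums,
\[
\sum_{i=1}^n\prod_j x_{\sigma_j(i)}^{p_j}\;\le\;\sum_j w_j\sum_{i=1}^n x_{\sigma_j(i)}^{\,p}\;=\;\sum_j w_j\sum_{i=1}^n x_i^{\,p}\;=\;\sum_{i=1}^n x_i^{\,p},
\]
where the middle equality uses that each $\sigma_j$ is a permutation of $\{1,\dots,n\}$, so the inner sum is invariant under reindexing, and the last equality uses $\sum_j w_j=1$. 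This is precisely the asserted inequality.

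I do not expect a substantive obstacle once the problem is framed this way; the only points that need attention are the boundary conventions (reading $0^0=1$ when some $p_j=0$ meets some $x_i=0$, and verifying both sides behave consistently under that convention) and being explicit that the non-negativity of the exponents --- encoded in the "integers over a common denominator" hypothesis --- is exactly what licenses the weighted AM--GM step. For completeness one could instead prove the integer-exponent case by iterating the rearrangement inequality (Lemma~\ref{lemma:Rearrangement_inequality}) after sorting the $x_i$ in increasing order, but the AM--GM route above is shorter and makes the equality condition --- all $\sigma_j$ agreeing on $\{\,i: x_i\neq 0\,\}$ --- transparent.
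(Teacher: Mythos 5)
Your proof is correct and takes essentially the same route as the paper's: both apply an AM--GM bound to each term $\prod_j x_{\sigma_j(i)}^{p_j}$ and then sum over $i$, exchanging sums and using that each $\sigma_j$ is a permutation to collapse $\sum_i x_{\sigma_j(i)}^p$ to $\sum_i x_i^p$; the paper reaches the per-term bound via unweighted AM--GM after splitting $p_j x_{\sigma_j(i)}^p$ into $p_j$ integer copies (then handles $q>1$ by the $x_i\mapsto x_i^{1/q}$ substitution you also mention), whereas you invoke weighted AM--GM with weights $w_j=p_j/p$ directly --- the same computation in a cleaner form. Your remark that the hypothesis must implicitly mean $p_j\ge 0$ (a negative $p_j$ breaks the inequality, as your two-point example shows) is a worthwhile clarification of something the paper leaves unstated.
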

\begin{proof}
    Firstly assume $q=1$, thus $p_j$'s are all integers. Notice that by AM-GM inequality, we have 
    \begin{align*}
        \sum_{j} p_j x_{\sigma_j(i)}^p &= \sum_{j} \underbrace{x_{\sigma_j(i)}^p + \cdots + x_{\sigma_j(i)}^p}_{{p_j}} \\
        &\geq \left(\sum_{j} p_j\right)\left(\prod_{j} x_{\sigma_j(i)}^{pp_j}\right)^{\frac{1}{\sum_{j} p_j}} = p \prod_{j} x_{\sigma_j(i)}^{p_j}.
    \end{align*}
    Thus lemma can be derivd as follows,
    \begin{align*}
        \sum_{i=1}^n \prod_{j} x_{\sigma_j(i)}^{p_j} &\leq \frac{1}{p} \sum_{i=1}^n \sum_{j} p_j x_{\sigma_j(i)}^p \\
        &= \frac{1}{p} \sum_{j} p_j \sum_{i=1}^n x_{\sigma_j(i)}^p \\
        &= \frac{1}{p} \sum_{j} p_j \sum_{i=1}^n x_{i}^p = \sum_{i=1}^n x_{i}^p.
    \end{align*}
    The proof for the case where $q>1$ can also be established using a similar procedure, with a slight modification by replacing each $x_i$ with $x_i^{{1}/{q}}$.

\end{proof}

\section{Random Linear Code Smoothing}
\label{sec:Random_Linear_Code_Smoothing}

Linear codes form  a balanced set. Denote $\mathcal{B}$ as the set of all linear codes. Then we have $|\mathcal{B}|={\begin{bmatrix}n\\k\end{bmatrix}}_{q}$ where the Gaussian binomial coefficient is $${\displaystyle {\begin{bmatrix}n\\k\end{bmatrix}}_{q}}
=\frac {(1-q^{n})(1-q^{n-1})\cdots (1-q^{n-k+1})}{(1-q)(1-q^{q})\cdots (1-q^{k})}.$$
Each non-zero codeword belongs to ${\begin{bmatrix}n-1\\k-1\end{bmatrix}}_{q}$ linear codes in $\mathcal{B}$. Thus it's easy to prove the averaging lemma of linear codes as follows.

\begin{lemma}[Averaging Lemma for Linear Codes \cite{loeliger1994basic}]
    For balanced set $\mathcal{B}$ containing all linear codes with encoder $F$, and any function $f(\cdot)$, there is an identity that 
    $$\frac{1}{|\mathcal{B}|} \sum_{F\in \mathcal{B}} \sum_{\mathbf{a}\in \mathbb{F}_q^k/\{\mathbf{0}\}} f(F(\mathbf{a})) = \frac{q^k - 1}{q^n - 1} \sum_{\mathbf{c}\in \mathbb{F}_q^n/\{\mathbf{0}\}} f(\mathbf{c}).  $$
    or equivalently,
    $$\mathbb{E}_{F\sim \mathcal{B}}\sum_{\mathbf{a}\in \mathbb{F}_q^k/\{\mathbf{0}\}} f(F(\mathbf{a})) = \frac{q^k - 1}{q^n - 1} \sum_{\mathbf{c}\in \mathbb{F}_q^n/\{\mathbf{0}\}} f(\mathbf{c}).$$
\end{lemma}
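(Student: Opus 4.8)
The plan is to prove the identity by a double-counting (Fubini-type) argument: rewrite the average over codes as a sum over all nonzero vectors of $\mathbb{F}_q^n$, weighted by the number of codes in $\mathcal{B}$ that contain each vector, and then show that weight is the same for every nonzero vector.

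First I would unwind the inner sum. For a linear code $C$ with encoder $F$, the map $F\colon \mathbb{F}_q^k \to \mathbb{F}_q^n$ is injective and linear with image $C$, so it restricts to a bijection from $\mathbb{F}_q^k \setminus \{\mathbf{0}\}$ onto $C \setminus \{\mathbf{0}\}$. Hence $\sum_{\mathbf{a}\in\mathbb{F}_q^k\setminus\{\mathbf{0}\}} f(F(\mathbf{a})) = \sum_{\mathbf{c}\in C\setminus\{\mathbf{0}\}} f(\mathbf{c})$, a quantity that depends only on the code $C$ and not on the particular encoder chosen for it. Therefore the left-hand side equals $\frac{1}{|\mathcal{B}|}\sum_{C\in\mathcal{B}}\sum_{\mathbf{c}\in C\setminus\{\mathbf{0}\}} f(\mathbf{c})$.

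Next I would exchange the order of summation to obtain $\frac{1}{|\mathcal{B}|}\sum_{\mathbf{c}\in\mathbb{F}_q^n\setminus\{\mathbf{0}\}} f(\mathbf{c})\,N(\mathbf{c})$, where $N(\mathbf{c}):=|\{C\in\mathcal{B}: \mathbf{c}\in C\}|$. The key point — which is exactly the \emph{balanced} property of $\mathcal{B}$ — is that $N(\mathbf{c})$ does not depend on the nonzero vector $\mathbf{c}$. One way to see this: $\mathrm{GL}_n(\mathbb{F}_q)$ acts transitively on $\mathbb{F}_q^n\setminus\{\mathbf{0}\}$ and permutes the $k$-dimensional subspaces, so for nonzero $\mathbf{c},\mathbf{c}'$ pick $g$ with $g\mathbf{c}=\mathbf{c}'$; then $C\mapsto gC$ is a bijection of $\mathcal{B}$ carrying the codes through $\mathbf{c}$ onto those through $\mathbf{c}'$. (Alternatively, $N(\mathbf{c})$ counts $k$-dimensional subspaces containing the line $\langle\mathbf{c}\rangle$, which correspond to $(k-1)$-dimensional subspaces of $\mathbb{F}_q^n/\langle\mathbf{c}\rangle\cong\mathbb{F}_q^{n-1}$, giving the value ${\begin{bmatrix}n-1\\k-1\end{bmatrix}_q}$ recorded before the lemma.) Denote this common value by $N$, so the left-hand side is $\frac{N}{|\mathcal{B}|}\sum_{\mathbf{c}\in\mathbb{F}_q^n\setminus\{\mathbf{0}\}} f(\mathbf{c})$.

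Finally I would pin down $N/|\mathcal{B}|$ by counting the incidence set $\{(C,\mathbf{c}): C\in\mathcal{B},\ \mathbf{c}\in C\setminus\{\mathbf{0}\}\}$ two ways: summing over $C$ first gives $|\mathcal{B}|(q^k-1)$, while summing over $\mathbf{c}$ first gives $N(q^n-1)$; hence $N/|\mathcal{B}| = (q^k-1)/(q^n-1)$, which is consistent with the explicit Gaussian-binomial values. Substituting back gives the stated identity, and the expectation form follows immediately because $F\sim\mathcal{B}$ is the uniform choice. I do not anticipate a genuine obstacle here; the only step that deserves care is the uniformity of $N(\mathbf{c})$, and that is precisely what the balancedness of $\mathcal{B}$ (equivalently, the $\mathrm{GL}_n$-symmetry of the family of linear codes) provides.
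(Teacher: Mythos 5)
Your proof is correct and follows essentially the same counting approach the paper sketches just before the lemma (that $|\mathcal{B}|={\begin{bmatrix}n\\k\end{bmatrix}}_q$ and each nonzero vector lies in ${\begin{bmatrix}n-1\\k-1\end{bmatrix}}_q$ codes, with the double-counting argument giving the ratio $(q^k-1)/(q^n-1)$ directly). The only mild addition on your part is the $\mathrm{GL}_n$-transitivity argument for the uniformity of $N(\mathbf{c})$, which is a clean alternative to the subspace count but not a different method.
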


\subsection{Rényi Divergence for $\alpha\in (1,2)$}
Combined lemma with the techniques in \cite{hayashi2016secure}, we will get the following theorem.
\begin{theorem}
    \label{theorem:smoothrandomlinearcode_1<Alpha<2}
    Let $W(\cdot|\cdot)$ denote the transition probability of the noisy channel $W$. Thus for $\alpha\in (1,2)$,
    when $W$ is an additive noise channel, for any rate $R$ satisfies 
    $$ R \geq 1 - \frac{H_\alpha(W)}{n}+\varepsilon,$$ where $\varepsilon>0$, then 
    $$\mathbb{E}_{F\sim \mathcal{B}} D_\alpha(U_{F}+N||U_{\mathbb{F}_q^n})\rightarrow 0$$ as $n\rightarrow \infty$. 
    
    Here we use $W_{F(\mathbf{a})}(\mathbf{y})$ to represent the probability $W(\mathbf{y} | F(\mathbf{a}))$ with the output $\mathbf{y}$ and input $F(\mathbf{a})$, where $F$ is a linear encoder contained in $\mathcal{B}$ and $\mathbf{a}$ is the message vector.  Here $H_\alpha(W)$ is the Rényi entropy of order $\alpha$, i.e. $$H_\alpha(W)\doteq \frac{1}{\alpha-1}\log\sum_{\mathbf{c}\in \mathbb{F}_q^n}  W(\mathbf{c}) ^{\alpha}.$$ 
    $U_{\mathbb{F}_q^n}$ is a uniform r.v. over the Hamming space,  $U_{F}$ is a uniform r.v. over the codewords of encoder $F$.
\end{theorem}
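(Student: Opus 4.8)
The plan is to bound $\mathbb{E}_{F\sim\mathcal{B}} D_\alpha(U_F+N \,\|\, U_{\mathbb{F}_q^n})$ by expanding the Rényi divergence, applying Jensen's inequality to pull the expectation inside, and then invoking the averaging lemma for linear codes to replace the sum over codewords by a sum over the whole Hamming space. Write $P_F(\mathbf{y}) = \frac{1}{q^k}\sum_{\mathbf{a}\in\mathbb{F}_q^k} W_{F(\mathbf{a})}(\mathbf{y})$ for the density of $U_F+N$, and $Q(\mathbf{y})=q^{-n}$. Since $D_\alpha(P_F\|Q) = \frac{1}{\alpha-1}\log\!\big(q^n\big)^{\alpha-1}\sum_{\mathbf{y}} P_F(\mathbf{y})^\alpha$, it suffices to control $\mathbb{E}_F \sum_{\mathbf{y}} P_F(\mathbf{y})^\alpha$. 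The key point is that for $\alpha\in(1,2)$ we have $\alpha-1\in(0,1)$, so $t\mapsto t^{\alpha-1}$ is concave; equivalently, writing $x^\alpha = x\cdot x^{\alpha-1}$, one extracts one factor of $P_F$ and raises the rest to a power in $(0,1)$, which lets Jensen move $\mathbb{E}_F$ inside. Concretely I would first split $P_F(\mathbf{y}) = q^{-k}W_{\mathbf{0}}(\mathbf{y}) + q^{-k}\sum_{\mathbf{a}\neq\mathbf{0}} W_{F(\mathbf{a})}(\mathbf{y})$ (the additive-noise assumption makes $W_{\mathbf{0}}(\mathbf{y})=W(\mathbf{y})$, a shift of the noise PMF), and use subadditivity of $t\mapsto t^{\alpha-1}$ (or of the whole expression via the inequality $(a+b)^\alpha \le \text{something}$ that keeps the cross terms manageable) to separate the zero-codeword contribution from the rest.

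The next step is to handle $\mathbb{E}_F \sum_{\mathbf{y}} \big(q^{-k}\sum_{\mathbf{a}\neq\mathbf{0}} W_{F(\mathbf{a})}(\mathbf{y})\big)^\alpha$. Using concavity of $u\mapsto u^{\alpha-1}$ on the inner sum in the form $\big(\sum_{\mathbf a\neq\mathbf 0} W_{F(\mathbf a)}(\mathbf y)\big)^{\alpha} \le \big(\sum_{\mathbf a\neq \mathbf 0}W_{F(\mathbf a)}(\mathbf y)\big)\cdot\big(\sum_{\mathbf a\neq\mathbf 0}W_{F(\mathbf a)}(\mathbf y)\big)^{\alpha-1}$ and then a further manipulation, or alternatively the cleaner route of applying Jensen directly to $\mathbb{E}_F[(\,\cdot\,)^\alpha]$ after normalizing, I would reduce to expressions of the form $\mathbb{E}_F \sum_{\mathbf a\neq\mathbf 0} W_{F(\mathbf a)}(\mathbf y)^{\alpha}$ and cross terms $\mathbb{E}_F \sum_{\mathbf a\neq \mathbf b} W_{F(\mathbf a)}(\mathbf y)W_{F(\mathbf b)}(\mathbf y)^{\alpha-1}$. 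The averaging lemma turns the single-codeword sums into $\frac{q^k-1}{q^n-1}\sum_{\mathbf c\neq\mathbf 0} W_{\mathbf c}(\mathbf y)^\alpha$. The pair sums require a two-codeword version of the averaging identity (pairs of linearly independent nonzero codewords are equidistributed), giving a factor roughly $\frac{(q^k-1)(q^k-q)}{(q^n-1)(q^n-q)}$ times $\sum_{\mathbf c,\mathbf c'} W_{\mathbf c}(\mathbf y) W_{\mathbf c'}(\mathbf y)^{\alpha-1}$, which by translation invariance of the additive channel factors through $\sum_{\mathbf y}W(\mathbf y)\cdot\big(\sum_{\mathbf c}W_{\mathbf c}(\mathbf y)^{\alpha-1}\big)$ — and here I would use Lemma \ref{lemma:AMGMlike_inequality} or the rearrangement inequality to bound $\sum_{\mathbf c} W_{\mathbf c}(\mathbf y)^{\alpha-1}$ type quantities, since $\alpha-1$ is a non-integer power in general and naive expansion fails.

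Assembling the pieces, the dominant term after taking the $\log$ and dividing by $\alpha-1$ is $\frac{1}{\alpha-1}\log\big(1 + q^{n(\alpha-1)} q^{-k(\alpha-1)} \cdot q^{-(n-k)(\alpha-1)}\cdot(\text{noise term}) \cdots\big)$; the exponent bookkeeping is where the condition $R \ge 1 - H_\alpha(W)/n + \varepsilon$ enters. Writing $\sum_{\mathbf c} W(\mathbf c)^\alpha = q^{-(\alpha-1)H_\alpha(W)}$ by definition of $H_\alpha(W)$, the bad term carries a factor $q^{(\alpha-1)(n - k - H_\alpha(W) + \text{l.o.t.})} = q^{-(\alpha-1)\varepsilon n + o(n)}$, which $\to 0$; hence $q^{n(\alpha-1)}\sum_{\mathbf y}\mathbb{E}_F P_F(\mathbf y)^\alpha \to 1$ and the divergence $\to 0$. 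The main obstacle I anticipate is Step 2: getting a clean bound on the cross terms $\mathbb{E}_F\sum_{\mathbf a\neq\mathbf b}W_{F(\mathbf a)}(\mathbf y)W_{F(\mathbf b)}(\mathbf y)^{\alpha-1}$ for non-integer $\alpha$, because one cannot simply multiply out — this is precisely where the AM–GM-type inequality (Lemma \ref{lemma:AMGMlike_inequality}) with fractional exponents over a common denominator does the real work, controlling products of distinct codeword probabilities raised to fractional powers by the single sum $\sum_{\mathbf c} W_{\mathbf c}(\mathbf y)^\alpha$; matching that denominator to a rational approximation of $\alpha$ and checking the error is negligible as $n\to\infty$ is the delicate part.
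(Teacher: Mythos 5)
Your first step — using $\alpha-1\in(0,1)$ to write $P_F(\mathbf y)^\alpha = P_F(\mathbf y)\cdot P_F(\mathbf y)^{\alpha-1}$ and invoke concavity/Jensen — is exactly the right idea and matches the paper. But you then pivot to a cross-term expansion $\mathbb{E}_F\sum_{\mathbf a\neq\mathbf b}W_{F(\mathbf a)}(\mathbf y)W_{F(\mathbf b)}(\mathbf y)^{\alpha-1}$, a two-codeword averaging lemma, and Lemma~\ref{lemma:AMGMlike_inequality} with a rational approximation of $\alpha$. You yourself flag this last step as ``the delicate part,'' and it is genuinely problematic: for non-integer $\alpha-1$ you cannot expand the power, the fractional-exponent AM--GM step requires a common denominator argument you have not carried out, and the error analysis in the rational approximation is not a formality. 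So as written there is a gap precisely where you anticipate one. This machinery is what the paper deploys for $\alpha\geq 2$; it is not needed here.

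The paper's proof for $\alpha\in(1,2)$ sidesteps all of this. After conditioning on $\Lambda(\mathbf a)$ (the paper uses an affine encoder $\Lambda=F+G$, with the random shift $G$ later removed by channel regularity — a technical replacement for your idea of splitting off the zero codeword) and applying Jensen to $\mathbb{E}_{F\mid\Lambda(\mathbf a)}$, the inner argument becomes
\[
W_{\Lambda(\mathbf a)}(\mathbf y)+\mathbb{E}_{F\mid\Lambda(\mathbf a)}\sum_{\mathbf a'\neq\mathbf a}W_{\Lambda(\mathbf a')}(\mathbf y)
= W_{\Lambda(\mathbf a)}(\mathbf y)+\frac{q^k-1}{q^n-1}\sum_{\mathbf c\neq\mathbf 0}W_{\Lambda(\mathbf a)+\mathbf c}(\mathbf y),
\]
and the crucial observation is that the last sum is at most $1$ (it is a sub-probability for the additive channel), so the entire ``cross-term'' collapses to the deterministic constant $\frac{q^k-1}{q^n-1}$. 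Then subadditivity $(x+y)^{\alpha-1}\le x^{\alpha-1}+y^{\alpha-1}$, valid since $\alpha-1\le 1$, separates the two pieces, yielding $q^{(\alpha-1)n(1-R-H_\alpha(W)/n)}+1$. No two-codeword averaging, no fractional AM--GM, no rational approximation. Your exponent bookkeeping at the end is correct, but you should replace the cross-term program with this one-line bound on the averaged noise mass; as proposed, the argument does not close.
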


\begin{proof}

    Define the affine encoder $\Lambda: \mathbf{a}\rightarrow F(\mathbf{a}) + G$, where $F$ is an linear encoder in $\mathcal{B}$ and $G$ is a independent r.v. in $\mathbb{F}_q^n$. Affine encoder has the property that for $\mathbf{a}\neq \mathbf{a}'$, $\Lambda(\mathbf{a}')=F(\mathbf{a}-\mathbf{a}') + \Lambda(\mathbf{a})$. Noted that here $F(\mathbf{a}-\mathbf{a}')$ is independent with r.v. $\Lambda(\mathbf{a})$.
\begin{align*}
&q^{(\alpha-1) D_\alpha(U_{\Lambda(\mathbf{a})}+N||U_{\mathbb{F}_q^n})}=\sum_{\mathbf{y}\in \mathbb{F}_q^n}\frac{\left (\frac{1}{q^k} \sum_{\mathbf{a}\in \mathbb{F}_q^k} W_{\Lambda(\mathbf{a})}(\mathbf{y}) \right )^{\alpha}}{(\frac{1}{q^n})^{\alpha-1}}=\frac{q^{n(\alpha-1)}}{q^{k\alpha}}\sum_{y\in \mathbb{F}_q^n}  \left ( \sum_{\mathbf{a}\in \mathbb{F}_q^k} W_{\Lambda(\mathbf{a})}(\mathbf{y}) \right )^{\alpha}\\
&\mathbb{E}_\Lambda q^{(\alpha-1) D_\alpha(U_{\Lambda(\mathbf{a})}+N||U_{\mathbb{F}_q^n})}\\
&=\frac{q^{n(\alpha-1)}}{q^{k\alpha}}\sum_{\mathbf{y}\in \mathbb{F}_q^n} \mathbb{E}_\Lambda \left ( \sum_{\mathbf{a}\in \mathbb{F}_q^k} W_{\Lambda(\mathbf{a})}(\mathbf{y}) \right )^{\alpha}\\
&=\frac{q^{n(\alpha-1)}}{q^{k\alpha}}\sum_{\mathbf{y}\in \mathbb{F}_q^n} \mathbb{E}_\Lambda \sum_{\mathbf{a}\in \mathbb{F}_q^k} W_{\Lambda(\mathbf{a})}(y)\left ( \sum_{\mathbf{a}'\in \mathbb{F}_q^k} W_{\Lambda(\mathbf{a}')}(y) \right )^{\alpha-1}\\
 \intertext{ By using Jensen's inequality for $\mathbb{E}[X^{\alpha-1}] \leq (\mathbb{E}[X])^{\alpha-1}, 1 < \alpha < 2$, we derive} 
&\leq\frac{q^{n(\alpha-1)}}{q^{k\alpha}}\sum_{\mathbf{y}\in \mathbb{F}_q^n}  \sum_{\mathbf{a}\in \mathbb{F}_q^k} \mathbb{E}_{\Lambda(\mathbf{a})}W_{\Lambda(\mathbf{a})}(\mathbf{y})\left ( W_{\Lambda(\mathbf{a})}+\mathbb{E}_{F|\Lambda(\mathbf{a})}\sum_{\mathbf{a}'\in \mathbb{F}_q^k,\mathbf{a}'\neq \mathbf{a}} W_{\Lambda(\mathbf{a}')}(\mathbf{y}) \right )^{\alpha-1}\\
&=\frac{q^{n(\alpha-1)}}{q^{k\alpha}}\sum_{\mathbf{y}\in \mathbb{F}_q^n}  \sum_{\mathbf{a}\in \mathbb{F}_q^k} \mathbb{E}_{\Lambda(\mathbf{a})}W_{\Lambda(\mathbf{a})}(\mathbf{y})\left ( W_{\Lambda(\mathbf{a})}(\mathbf{y}) +\mathbb{E}_{F|\Lambda(\mathbf{a})}\sum_{\mathbf{a}'\in \mathbb{F}_q^k,\mathbf{a}'\neq \mathbf{a}} W_{\Lambda(\mathbf{a})+F(\mathbf{a}'-\mathbf{a})}(\mathbf{y}) \right )^{\alpha-1}\\
\intertext{ By using the averaging lemma, let $\mathbf{x}=\mathbf{a}'-\mathbf{a}, f(F(\mathbf{x}))=W_{\Lambda(\mathbf{a})+F(\mathbf{a}'-\mathbf{a})}(\mathbf{y})$} 
&=\frac{q^{n(\alpha-1)}}{q^{k\alpha}}\sum_{\mathbf{y}\in \mathbb{F}_q^n}  \sum_{\mathbf{a}\in \mathbb{F}_q^k} \mathbb{E}_{\Lambda(\mathbf{a})}W_{\Lambda(\mathbf{a})}(\mathbf{y})\left ( W_{\Lambda(\mathbf{a})}(\mathbf{y})  +  \frac{q^k-1}{q^{n}-1}\sum_{\mathbf{c}\in \mathbb{F}_q^n,\mathbf{c}\neq \mathbf{0}} W_{\Lambda(\mathbf{a})+\mathbf{c}}(\mathbf{y}) \right )^{\alpha-1}\\
&\leq\frac{q^{n(\alpha-1)}}{q^{k\alpha}}\sum_{\mathbf{y}\in \mathbb{F}_q^n}  \sum_{\mathbf{a}\in \mathbb{F}_q^k} \mathbb{E}_{\Lambda(\mathbf{a})}W_{\Lambda(\mathbf{a})}(\mathbf{y})\left ( W_{\Lambda(\mathbf{a})}(\mathbf{y})  +  \frac{q^k-1}{q^{n}-1} \right )^{\alpha-1}\\
\intertext{ By using $(x+y)^{\alpha-1}\leq x^{\alpha-1}+y^{\alpha-1}$ and $\frac{q^k-1}{q^{n}-1}< \frac{q^k}{q^{n}}$}
&\leq\frac{q^{n(\alpha-1)}}{q^{k\alpha}}\sum_{\mathbf{y}\in \mathbb{F}_q^n}  \sum_{\mathbf{a}\in \mathbb{F}_q^k} \mathbb{E}_{\Lambda(\mathbf{a})}W_{\Lambda(\mathbf{a})}(\mathbf{y})^{\alpha}     + \frac{q^{n(\alpha-1)}}{2^{k\alpha}}\sum_{\mathbf{y}\in \mathbb{F}_2^n}  \sum_{\mathbf{a}\in \mathbb{F}_2^k} \mathbb{E}_{\Lambda(\mathbf{a})} \frac{2^{k(\alpha-1)}}{2^{n(\alpha-1)}}W_{\Lambda(\mathbf{a})}(\mathbf{y})\\
&= q^{(\alpha-1) n(1-R-\frac{H_\alpha(W)}{n})}  + 1
\end{align*}
    The limitation goes to $1$ as $1-R-\frac{H_\alpha(W)}{n}<0$. Since channel $W$ is regular \cite{delsarte1982algebraic}, we can remove the r.v. G, and the final result proven.
    
\end{proof}

\subsection{Rényi Divergence for $\alpha\in \mathbb{N}$}
The theorem above can be extended to order $\alpha \in (2, +\infty)$. To begin with, let's first extend linear codes averaging lemma to more general case.

\begin{lemma}[Extended Averaging Lemma for Linear Codes]
\label{lemma:Extended_Averaging_LinearCodeLemma}
    For balanced set $\mathcal{B}$ containing all linear codes with encoder $F$, integer $r$, real numbers $\alpha_1, \alpha_2, ..., \alpha_r$ and any function $f(\cdot)\geq 0$, there is an inequality that 
    $$\frac{1}{|\mathcal{B}|} \sum_{F\in \mathcal{B}} \sum_{\substack{\{\mathbf{a}_1,\mathbf{a}_2,...,\mathbf{a}_r\}\\ \subseteq  \mathbb{F}_q^k / \{\mathbf{0}\}}}\prod_{i=1}^{r}f^{p_i}(F(\mathbf{a}_i)) \leq  \sum_{j} (\frac{q^k - 1}{q^n - 1})^{r-j}  \sum_{\substack{\{\mathbf{c}_1,\mathbf{c}_2,...,\mathbf{c}_r\} \subseteq \mathbb{F}_q^n / \{\mathbf{0}\}\\rank\{\mathbf{c}_1,\mathbf{c}_2,...,\mathbf{c}_r\}=r-j}}\prod_{i=1}^{r}f^{\alpha_i}(\mathbf{c}_i).$$
    or equivalently,
    $$\mathbb{E}_{F\sim \mathcal{B}} \sum_{\substack{\{\mathbf{a}_1,\mathbf{a}_2,...,\mathbf{a}_r\}\\ \subseteq  \mathbb{F}_q^k / \{\mathbf{0}\}}}\prod_{i=1}^{r}f^{\alpha_i}(F(\mathbf{a}_i)) \leq  \sum_{j} (\frac{q^k - 1}{q^n - 1})^{r-j}  \sum_{\substack{\{\mathbf{c}_1,\mathbf{c}_2,...,\mathbf{c}_r\} \subseteq \mathbb{F}_q^n / \{\mathbf{0}\}\\rank\{\mathbf{c}_1,\mathbf{c}_2,...,\mathbf{c}_r\}=r-j}}\prod_{i=1}^{r}f^{\alpha_i}(\mathbf{c}_i).$$
\end{lemma}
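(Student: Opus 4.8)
The plan is to follow the template of the basic Averaging Lemma: first pass from message tuples to codeword tuples through the encoder, then exchange the order of summation over codes and over tuples, and finally stratify the codeword tuples by their rank and count how many codes contain a prescribed subspace. Throughout I read the sum $\sum_{\{\mathbf{a}_1,\dots,\mathbf{a}_r\}\subseteq\mathbb{F}_q^k/\{\mathbf{0}\}}$ as a sum over ordered $r$-tuples of \emph{distinct} nonzero message vectors; the unordered reading only introduces a global combinatorial factor that cancels on both sides, and with this convention the exponent $\alpha_i$ is unambiguously attached to position $i$.

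\textbf{Step 1 (pass to codewords).} For each $C\in\mathcal{B}$ let $F_C:\mathbb{F}_q^k\to C$ be its bijective linear encoder. Since $F_C$ is a linear isomorphism onto $C$, it maps $\mathbb{F}_q^k\setminus\{\mathbf{0}\}$ bijectively onto $C\setminus\{\mathbf{0}\}$, preserves the rank of any family of vectors, and respects the position indexing. Hence
\[
\mathbb{E}_{F\sim\mathcal{B}}\sum_{(\mathbf{a}_1,\dots,\mathbf{a}_r)}\prod_{i=1}^{r}f^{\alpha_i}(F(\mathbf{a}_i))=\frac{1}{|\mathcal{B}|}\sum_{C\in\mathcal{B}}\ \sum_{\substack{(\mathbf{c}_1,\dots,\mathbf{c}_r)\ \mathrm{distinct}\\ \mathbf{c}_i\in C\setminus\{\mathbf{0}\}}}\prod_{i=1}^{r}f^{\alpha_i}(\mathbf{c}_i).
\]

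\textbf{Step 2 (exchange sums and stratify by rank).} Swapping the two summations, the inner sum over $C$ becomes, for a fixed $r$-tuple $(\mathbf{c}_1,\dots,\mathbf{c}_r)$ of distinct nonzero vectors of $\mathbb{F}_q^n$, the number of $C\in\mathcal{B}$ with $\operatorname{span}\{\mathbf{c}_1,\dots,\mathbf{c}_r\}\subseteq C$. Grouping tuples by $s:=\operatorname{rank}\{\mathbf{c}_1,\dots,\mathbf{c}_r\}$ and writing $j:=r-s$, one gets
\[
\mathbb{E}_{F\sim\mathcal{B}}\sum_{(\mathbf{a}_1,\dots,\mathbf{a}_r)}\prod_{i=1}^{r}f^{\alpha_i}(F(\mathbf{a}_i))=\sum_{j}\frac{N(r-j)}{|\mathcal{B}|}\sum_{\substack{(\mathbf{c}_1,\dots,\mathbf{c}_r)\ \mathrm{distinct}\\ \operatorname{rank}=r-j}}\prod_{i=1}^{r}f^{\alpha_i}(\mathbf{c}_i),
\]
where $N(s)$ is the number of $[n,k]$ codes containing a prescribed $s$-dimensional subspace of $\mathbb{F}_q^n$; this is well defined, independent of the chosen subspace, because $\mathrm{GL}_n(\mathbb{F}_q)$ acts transitively on $s$-dimensional subspaces while permuting the $k$-dimensional ones. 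A fixed $s$-dimensional subspace $V$ lies in $C$ iff $C/V$ is a $(k-s)$-dimensional subspace of $\mathbb{F}_q^n/V$, so $N(s)={\begin{bmatrix}n-s\\k-s\end{bmatrix}}_{q}$ (which is $0$ for $s>k$, correctly discarding tuples of rank exceeding $k$), while $|\mathcal{B}|={\begin{bmatrix}n\\k\end{bmatrix}}_{q}$. Cancelling Gaussian binomial coefficients,
\[
\frac{N(s)}{|\mathcal{B}|}=\prod_{t=0}^{s-1}\frac{q^{k-t}-1}{q^{n-t}-1}\leq\left(\frac{q^{k}-1}{q^{n}-1}\right)^{s},
\]
the last inequality applied factor by factor: $\frac{q^{k-t}-1}{q^{n-t}-1}\leq\frac{q^{k}-1}{q^{n}-1}$ for every $t\geq 0$, since clearing denominators reduces it to $q^{k-t}(q^{t}-1)\leq q^{n-t}(q^{t}-1)$, which holds because $k\leq n$. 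Substituting with $s=r-j$ and noting all terms are nonnegative gives the asserted inequality; the equivalent expectation form is immediate, and the case $r=1$ recovers the basic Averaging Lemma.

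\textbf{Main obstacle.} The combinatorial heart (the count $N(s)$, the Gaussian-binomial cancellation, and the elementary inequality of Step 2) is routine. What needs care is the bookkeeping: keeping the ordered/unordered convention and the position-indexed exponents $\alpha_i$ consistent under the encoder bijection and under the regrouping by rank, and observing that the exact rank is preserved by $F_C$ so the stratification on the two sides matches. An alternative route would expand each rank-deficient codeword as a fixed linear combination of a chosen basis of the span and then re-collect the resulting products, which is exactly the situation Lemma \ref{lemma:AMGMlike_inequality} is designed for; the counting argument above sidesteps the need for it.
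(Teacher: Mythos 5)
Your proof is correct and takes essentially the same approach as the paper: pass through the encoder to codeword tuples, stratify by the rank $r-j$ of the tuple, identify the inner count as the probability that a random $[n,k]$ code contains a prescribed $(r-j)$-dimensional subspace, and bound that probability by $\bigl(\frac{q^k-1}{q^n-1}\bigr)^{r-j}$. The only difference is cosmetic — you spell out the factor-by-factor Gaussian-binomial cancellation $\frac{N(s)}{|\mathcal{B}|}=\prod_{t=0}^{s-1}\frac{q^{k-t}-1}{q^{n-t}-1}$ and the ordered-tuple bookkeeping more explicitly than the paper does, which is a helpful addition but not a different argument.
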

\begin{proof}
    Given a set of vectors $\{\mathbf{c}_1,\mathbf{c}_2,...,\mathbf{c}_r\}$ and its rank $r-j$, the span of this set forms a subspace with dimension $r-j$. Consequently, the number of linear codes containing this subspace is given by the gaussian binomial ${\begin{bmatrix}n-r+j\\k-r+j\end{bmatrix}}_{q}$. The probability of random linear codes containing the subspace is  
    $$ P\left(\begin{array}{c} \{\mathbf{c}_1,\mathbf{c}_2,...,\mathbf{c}_r\}\subseteq  F \\ \text{rank}\{\mathbf{c}_1,\mathbf{c}_2,...,\mathbf{c}_r\}=r-j \end{array}\right) = \frac{{\begin{bmatrix}n-r+j\\k-r+j\end{bmatrix}}_{q}}{{\begin{bmatrix}n\\k\end{bmatrix}}_{q}}\leq \left(\frac{q^k - 1}{q^n - 1}\right)^{r-j}. $$
    \begin{align*}
    &\mathbb{E}_{F\sim \mathcal{B}} \sum_{\substack{\{\mathbf{a}_1,\mathbf{a}_2,...,\mathbf{a}_r\}\\ \subseteq  \mathbb{F}_q^k / \{\mathbf{0}\}}}\prod_{i=1}^{r}f^{\alpha_i}(F(\mathbf{a}_i))\\
    &= \mathbb{E}_{F\sim \mathcal{B}} \sum_{j}\sum_{\substack{\{\mathbf{c}_1,\mathbf{c}_2,...,\mathbf{c}_r\} \subseteq \mathbb{F}_q^n / \{\mathbf{0}\}\\ \text{rank}\{\mathbf{c}_1,\mathbf{c}_2,...,\mathbf{c}_r\}=r-j}}\prod_{i=1}^{r}f^{\alpha_i}(\mathbf{c}_i) \mathbbm{1}\left\{\begin{array}{c} \{\mathbf{c}_1,\mathbf{c}_2,...,\mathbf{c}_r\}\subseteq  F \\ \text{rank}\{\mathbf{c}_1,\mathbf{c}_2,...,\mathbf{c}_r\}=r-j \end{array}\right\}\\
    &= \sum_{j}\sum_{\substack{\{\mathbf{c}_1,\mathbf{c}_2,...,\mathbf{c}_r\} \subseteq \mathbb{F}_q^n / \{\mathbf{0}\}\\rank\{\mathbf{c}_1,\mathbf{c}_2,...,\mathbf{c}_r\}=r-j}}\prod_{i=1}^{r}f^{\alpha_i}(\mathbf{c}_i) \mathbb{E}_{F\sim \mathcal{B}}  \mathbbm{1}\left\{\begin{array}{c} \{\mathbf{c}_1,\mathbf{c}_2,...,\mathbf{c}_r\}\subseteq  F \\ \text{rank}\{\mathbf{c}_1,\mathbf{c}_2,...,\mathbf{c}_r\}=r-j \end{array}\right\}\\
    &= \sum_{j}\sum_{\substack{\{\mathbf{c}_1,\mathbf{c}_2,...,\mathbf{c}_r\} \subseteq \mathbb{F}_q^n / \{\mathbf{0}\}\\rank\{\mathbf{c}_1,\mathbf{c}_2,...,\mathbf{c}_r\}=r-j}}\prod_{i=1}^{r}f^{\alpha_i}(\mathbf{c}_i) P\left(\begin{array}{c} \{\mathbf{c}_1,\mathbf{c}_2,...,\mathbf{c}_r\}\subseteq  F \\ \text{rank}\{\mathbf{c}_1,\mathbf{c}_2,...,\mathbf{c}_r\}=r-j \end{array}\right)\\
    &\leq \sum_{j} (\frac{q^k - 1}{q^n - 1})^{r-j}  \sum_{\substack{\{\mathbf{c}_1,\mathbf{c}_2,...,\mathbf{c}_r\} \subseteq \mathbb{F}_q^n / \{\mathbf{0}\}\\rank\{\mathbf{c}_1,\mathbf{c}_2,...,\mathbf{c}_r\}=r-j}}\prod_{i=1}^{r}f^{\alpha_i}(\mathbf{c}_i).
    \end{align*}
\end{proof}

\begin{lemma}
    \label{lemma:Each_Component_Coverge_SmoothingLemma}
    Suppose $W$ is an additive noise channel with $W(\mathbf{y}|\mathbf{x})=W(\mathbf{y}-\mathbf{x})$, and for any rate $R$ satisfies 
    $$ R \geq 1 - \frac{H_\alpha(W)}{n}+\varepsilon,$$ where $\varepsilon>0$ and  $H_\alpha(W)\doteq \frac{1}{\alpha-1}\log\sum_{ \mathbf{x}\in \mathbb{F}_q^n}  W(\mathbf{x}) ^{\alpha} $.
    For $0<j<r$, fixed $r$, denote $\{\mathbf{c}_1,\mathbf{c}_2,...,\mathbf{c}_r\}$ a arbitrary set of non-zero codewords from $\mathbb{F}_q^n$ with rank $r-j$.     
    Then for any subset $\{\mathbf{c}_{i_{j+1}},\mathbf{c}_{i_{j+2}},...,\mathbf{c}_{i_{r}}\}$ with full rank $r-j$, $1\leq i_{j+1}, i_{j+2}, ..., i_r\leq r$, and sufficient large $n$,
    $$\sum_{\mathbf{y}\in \mathbb{F}_q^n}\frac{q^{n(\alpha-1)}}{q^{k\alpha}}\left(\frac{q^k - 1}{q^n - 1}\right)^{r-j} \sum_{\substack{\{\mathbf{c}_1,\mathbf{c}_2,...,\mathbf{c}_r\} \subseteq \mathbb{F}_q^n / \{\mathbf{0}\} \\ \text{rank}\{\mathbf{c}_1,\mathbf{c}_2,...,\mathbf{c}_r\}=r-j\\ \text{rank}\{\mathbf{c}_{i_{j+1}},\mathbf{c}_{i_{j+2}},...,\mathbf{c}_{i_{r}}\}=r-j}} \prod_{i=1}^{r}W_{\mathbf{c}_i}(\mathbf{y})^{\alpha_i} \leq 
    O(q^{-\varepsilon n}) $$ where  $\alpha_i\in\frac{\mathbb{Z}}{q}$ with some $q>1$ and $\sum_{\substack{i=1}}^r\alpha_i=\alpha$.

    Furthur more, by summation over all subsets $\{\mathbf{c}_{i_{j+1}},\mathbf{c}_{i_{j+2}},...,\mathbf{c}_{i_{r}}\}$ with rank $r-j$, the following inequality is derived,
     $$\sum_{\mathbf{y}\in \mathbb{F}_q^n}\frac{q^{n(\alpha-1)}}{q^{k\alpha}}\left(\frac{q^k - 1}{q^n - 1}\right)^{r-j} \sum_{\substack{\{\mathbf{c}_1,\mathbf{c}_2,...,\mathbf{c}_r\} \subseteq \mathbb{F}_q^n / \{\mathbf{0}\} \\ \text{rank}\{\mathbf{c}_1,\mathbf{c}_2,...,\mathbf{c}_r\}=r-j}} \prod_{i=1}^{r}W_{\mathbf{c}_i}(\mathbf{y})^{\alpha_i} \leq 
    O(q^{-\varepsilon n}) $$ 
\end{lemma}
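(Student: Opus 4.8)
Fix one admissible basis-subset and, after relabelling, assume it is $\{\mathbf c_{j+1},\dots,\mathbf c_r\}$. The codeword sum attached to this subset is bounded above by the one obtained after \emph{dropping all rank constraints}: let $\mathbf c_{j+1},\dots,\mathbf c_r$ range over all of $\mathbb F_q^n$ and write $\mathbf c_i=\sum_{l=j+1}^r\mu_{il}\mathbf c_l$ for $i\le j$, with the coefficient vectors $\boldsymbol\mu_i=(\mu_{i,j+1},\dots,\mu_{i,r})\in\mathbb F_q^{r-j}$ running over everything; every omitted constraint only deletes non-negative terms. Since $W$ is additive, $W_{\mathbf c_i}(\mathbf y)=W(\mathbf y-\mathbf c_i)$, and for fixed $\mathbf y$ the substitution $\mathbf z_l:=\mathbf y-\mathbf c_l$ ($j+1\le l\le r$) is a bijection of $(\mathbb F_q^n)^{r-j}$ sending $\mathbf y-\mathbf c_i$ to the linear form $\nu_i\mathbf y+\sum_{l}\mu_{il}\mathbf z_l$ with $\nu_i:=1-\sum_{l=j+1}^r\mu_{il}\in\mathbb F_q$. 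So, for fixed $\boldsymbol\mu$, the quantity to bound is $\sum_{\mathbf y,\mathbf z_{j+1},\dots,\mathbf z_r\in\mathbb F_q^n}\prod_{i=1}^{j}W(\nu_i\mathbf y+\textstyle\sum_{l}\mu_{il}\mathbf z_l)^{\alpha_i}\prod_{l=j+1}^{r}W(\mathbf z_l)^{\alpha_l}$, a sum over $r-j+1$ free vector variables of $W$ evaluated at $r$ nonzero linear forms.

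\textbf{Peeling via Lemma~\ref{lemma:AMGMlike_inequality}.} I would then eliminate these variables one at a time. When a variable $\mathbf v^\ast$ is removed, collect all not-yet-consumed factors whose form has a nonzero coefficient on $\mathbf v^\ast$; each such form, as a function of $\mathbf v^\ast$ with the other variables frozen, is an affine bijection of $\mathbb F_q^n$, so (relabelling $\mathbb F_q^n$ and invoking Lemma~\ref{lemma:AMGMlike_inequality} with the $p_j$'s equal to the corresponding $\alpha_i$'s — which is exactly why the hypothesis $\alpha_i\in\mathbb Z/q$ is imposed) one gets $\sum_{\mathbf v^\ast}\prod_{i\in\text{group}}W(\cdot)^{\alpha_i}\le\sum_{\mathbf v\in\mathbb F_q^n}W(\mathbf v)^{\beta}$ with $\beta=\sum_{i\in\text{group}}\alpha_i$, a bound free of the remaining variables. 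A variable on which no remaining factor depends contributes a free $q^n$; this occurs only for $\mathbf y$, and only when every $\nu_i=0$, since each $\mathbf z_l$ carries $W(\mathbf z_l)^{\alpha_l}$ until it is removed. Iterating, the displayed sum for each $\boldsymbol\mu$ is at most $q^{nf}\prod_g\sum_{\mathbf v}W(\mathbf v)^{\beta_g}$, where $f\in\{0,1\}$, the groups partition $\{1,\dots,r\}$, $\beta_g=\sum_{i\in g}\alpha_i\ge1$ (using that in the decomposition where this lemma is applied one has $r\le\alpha$ and all $\alpha_i\ge1$), $\sum_g\beta_g=\alpha$, and the number of groups is $r-j+1-f$.

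\textbf{Worst configuration and conclusion.} Since $\beta\mapsto\log\sum_{\mathbf v}W(\mathbf v)^{\beta}$ is convex (a log-sum-exp in $\beta$), over all admissible configurations $\prod_g\sum_{\mathbf v}W(\mathbf v)^{\beta_g}$ is maximised when one group takes $\beta^\ast:=\alpha-r+j+f$ and all others take $1$, and $1\le\beta^\ast\le\alpha$ because $1\le j\le r-1\le\alpha-1$. Summing over the $O(1)$ choices of $\boldsymbol\mu$ and the $\binom{r}{r-j}=O(1)$ basis-subsets, and restoring the prefactor while using $\frac{q^k-1}{q^n-1}\le q^{k-n}$, $\sum_{\mathbf v}W(\mathbf v)^{\beta^\ast}=q^{-(\beta^\ast-1)H_{\beta^\ast}(W)}$ and $k=Rn$, the left-hand side of the lemma is at most a constant times $q^{E}$ with
\[
E=n(\alpha-1)-k\alpha+(k-n)(r-j)+nf-(\beta^\ast-1)H_{\beta^\ast}(W)=n\Big[(\beta^\ast-1)\Big(1-\tfrac{H_{\beta^\ast}(W)}{n}\Big)-R(\beta^\ast-f)\Big].
\]
As $\beta^\ast\le\alpha$ and Rényi entropy is non-increasing in its order, $H_{\beta^\ast}(W)\ge H_\alpha(W)$; substituting $R\ge1-H_\alpha(W)/n+\varepsilon$ and using $\beta^\ast\ge1$, $\beta^\ast-f\ge1$, $H_\alpha(W)\le n$, one checks directly that $E\le-\varepsilon n$ for both $f=0$ and $f=1$. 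This gives the first bound, and the ``furthermore'' statement is exactly the summation over the $\binom{r}{r-j}$ basis-subsets already performed.

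\textbf{Expected obstacle.} The delicate part is the peeling together with the identification of $\beta^\ast$: one must check that the collapse through Lemma~\ref{lemma:AMGMlike_inequality} is legitimate for every group and costs only a bounded factor, that the only possibly-free variable is $\mathbf y$ (so $f\le1$), and — most importantly — that the Rényi order $\beta^\ast$ surviving the collapse never exceeds $\alpha$, so that the weaker smoothing budget it demands is still supplied by the hypothesis through $H_{\beta^\ast}\ge H_\alpha$. (If exponents $\alpha_i<1$ together with $r>\alpha$ were permitted, one would need an extra twist to keep every group-exponent $\ge1$; for non-integral $\alpha$ this is arranged by first passing to a decomposition of the $\alpha$-th power with all $\alpha_i\in\mathbb Z/Q$, $\alpha_i\ge1$ and $r\le\alpha$, after which the estimates are unchanged — this is the reason Lemma~\ref{lemma:AMGMlike_inequality} is stated for rational exponents over a common denominator.)
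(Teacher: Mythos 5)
Your proposal is correct and follows essentially the same route as the paper's own proof: drop the rank constraints, partition the $r$ factors into $r-j$ (or $r-j+1$) groups — one per summation variable — and collapse each group via Lemma~\ref{lemma:AMGMlike_inequality} to a single term $\sum_{\mathbf v}W(\mathbf v)^{\beta_g}=q^{(1-\beta_g)H_{\beta_g}(W)}$. The paper realises this partition directly in the $(\mathbf c_{j+1},\dots,\mathbf c_r,\mathbf y)$-variables (the subsets $C^{\mathbf A}_{r'}$, assigning each dependent $\mathbf c_i$ to the largest basis index $r'$ with $a_{i,r'}\neq 0$), and then takes $\mathbf y$ last as a free $q^n$; your substitution $\mathbf z_l=\mathbf y-\mathbf c_l$ is a measure-preserving relabelling of the same sum, so your ``$f=1$'' branch with $z$-variables peeled first is identical to the paper, and the ``$f=0$'' branch (peel $\mathbf y$ while some $\nu_i\neq 0$) is an alternative bound that is never worse for the purposes of the lemma. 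Both arguments share the same unstated hypotheses — $\alpha_i\ge 1$ for all $i$ and $r\le\alpha$ — which you correctly flag; these hold in every application inside Theorem~\ref{theorem:smoothLinearCode_Alpha_N} and its extension, but are not literally part of the lemma statement.

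One place where you do more than the paper: the paper ends with ``$\le\sum_{\mathbf A}\prod_{i=j+1}^r q^{-n(1-\alpha_i^{\mathbf A})(1-R-H_{\alpha_i^{\mathbf A}}(W)/n)}\le O(q^{-\varepsilon n})$'' and justifies it only by the finiteness of the sum over $\mathbf A$, leaving implicit why each term decays at rate $q^{-\varepsilon n}$. Your exponent computation supplies exactly the missing step: because $j\ge 1$ and $r\le\alpha$, one has $\sum_l(\alpha_l^{\mathbf A}-1)=\alpha-(r-j)\ge 1$, hence $\sum_l n(\alpha_l^{\mathbf A}-1)(1-R-H_{\alpha_l^{\mathbf A}}/n)\le -\varepsilon n\sum_l(\alpha_l^{\mathbf A}-1)\le -\varepsilon n$, which is the content of your ``worst configuration'' reduction to $\beta^\ast$ together with the monotonicity $H_{\beta^\ast}\ge H_\alpha$. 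So your write-up makes explicit, and slightly streamlines, the final decay estimate that the paper glosses over.
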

\begin{proof}
    
    Denote 
    $S=\{\mathbf{c}_1,\mathbf{c}_2,...,\mathbf{c}_r\}$, $\mathrm{rank}(S)=r-j$.     Without loss of generalization, assume $\mathbf{c}_{j+1},\mathbf{c}_{j+2},...,\mathbf{c}_r$ forms the basis of $S$, and $\mathbf{c}_1,...,\mathbf{c}_j\in span\{\mathbf{c}_{j+1},\mathbf{c}_{j+2},...,\mathbf{c}_r\}$. Consequently $\mathbf{c}_1,...,\mathbf{c}_j$ can be expressed with linear combinations of the basis as follows,
    $$\begin{aligned}
    \mathbf{c}_1 &= a_{1,j+1}\mathbf{c}_{j+1} + a_{1,j+2}\mathbf{c}_{j+2} + \ldots + a_{1,r}\mathbf{c}_r \\
    \mathbf{c}_2 &= a_{2,j+1}\mathbf{c}_{j+1} + a_{2,j+2}\mathbf{c}_{j+2} + \ldots + a_{2,r}\mathbf{c}_r \\
    &\vdots \\
    \mathbf{c}_j &= a_{j,j+1}\mathbf{c}_{j+1} + a_{j,j+2}\mathbf{c}_{j+2} + \ldots + a_{j,r}\mathbf{c}_r \\
    \end{aligned},
    $$
    or equivalently in matrix form as follows,
    $$\begin{bmatrix}
        \mathbf{c}_1 \\
        \mathbf{c}_2 \\
        \vdots \\
        \mathbf{c}_j
        \end{bmatrix}
        =
        \mathbf{A}
        \begin{bmatrix}
        \mathbf{c}_{j+1} \\
        \mathbf{c}_{j+2} \\
        \vdots \\
        \mathbf{c}_r
        \end{bmatrix}.$$
    Notice that in Hamming space the coefficients of the basis take values either $0$ to $q-1$.  
    
    To better deal with relations among different $\mathbf{c}_i's$, we will first  partition the set $S$ is  into $r-j$ disjoint subsets, labeled $C^\mathbf{A}_{j+1}, C^\mathbf{A}_{j+2}, ..., C^\mathbf{A}_{r}$, collectively exhaustive of $S$. The partitioning algorithm involves assigning $c_i$ to set $C^\mathbf{A}_i$ for all $j+1 \leq i \leq r$. For $1 \leq i \leq j$, each $c_i$ is allocated to the subset $C^\mathbf{A}_{r'}$, where $r'$ represents the highest index resulting in a non-zero value of $a_{i,r'}$. Thus by using Lemma \ref{lemma:AMGMlike_inequality} for each subset $C_{r'}^\mathbf{A}$, we derive

    \begin{align}
      \sum_{\mathbf{c}_{r'} \in \mathbb{F}_q^n} \prod_{\mathbf{c}_i \in C_{r'}^\mathbf{A}} W_{\mathbf{c}_i}(\mathbf{y})^{\alpha_i} &= \sum_{\mathbf{c}_{r'} \in \mathbb{F}_q^n} \prod_{\mathbf{c}_i \in C_{r'}^\mathbf{A}} W(\mathbf{y} - \mathbf{c}_i)^{\alpha_i}\nonumber \\
      &\leq  \sum_{\mathbf{c}_{r'} \in \mathbb{F}_q^n} W(\mathbf{y} - \mathbf{c}_i)^{\alpha_{r'}^\mathbf{A}} \nonumber\\
      &= q^{\left( 1 - \alpha_{r'}^\mathbf{A} \right) H_{\alpha_{r'}^\mathbf{A}}(W)}.  \label{ineq:CrA_to_RenyiEntropy}
    \end{align}
    Here $\alpha_{r'}^\mathbf{A}$ represents the summation of exponents $\alpha_i$'s corresponding to every $\mathbf{c}_i$ in set $C_{r'}^\mathbf{A}$. Note that  the basis vectors coefficients of $\mathbf{c}_1, ...,\mathbf{c}_j$ are fixed given $\mathbf{A}$. 

    \begin{align*}
    &\sum_{\mathbf{y}\in \mathbb{F}_q^n}\frac{q^{n(\alpha-1)}}{q^{k\alpha}}\left(\frac{q^k - 1}{q^n - 1}\right)^{r-j}  \sum_{\substack{\{\mathbf{c}_1,\mathbf{c}_2,...,\mathbf{c}_r\} \subseteq \mathbb{F}_q^n / \{\mathbf{0}\}\\rank\{\mathbf{c}_1,\mathbf{c}_2,...,\mathbf{c}_r\}=r-j\\\text{rank}\{\mathbf{c}_{{j+1}},\mathbf{c}_{{j+2}},...,\mathbf{c}_{{r}}\}=r-j}}\prod_{i=1}^{r}W_{\mathbf{c}_i}(\mathbf{y})^{\alpha_i}\\ 
    &=  \sum_{\mathbf{y}\in \mathbb{F}_q^n} \frac{q^{n(\alpha-1)}}{q^{k\alpha}}\left(\frac{q^k - 1}{q^n - 1}\right)^{r-j}  \sum_{\substack{\{\mathbf{c}_{j+1},\mathbf{c}_{j+2},...,\mathbf{c}_r\} \subseteq \mathbb{F}_q^n / \{\mathbf{0}\}\\rank\{\mathbf{c}_{j+1},\mathbf{c}_{j+2},...,\mathbf{c}_r\}=r-j  }}  \sum_{\mathbf{A}} \prod_{ \mathbf{c}_{i_{j+1}}\in C^\mathbf{A}_{j+1} }W_{\mathbf{c}_{i_{j+1}}}(\mathbf{y})^{\alpha_{i_{j+1}}}\cdots  \prod_{ \mathbf{c}_{i_{r}}\in C^\mathbf{A}_{r} }   W_{\mathbf{c}_{i_{r}}}(\mathbf{y})^{\alpha_{i_{r}}}\\
     &\leq \sum_{\mathbf{y}\in \mathbb{F}_q^n} \frac{q^{n(\alpha-1)}}{q^{k\alpha}}\left(\frac{q^k - 1}{q^n - 1}\right)^{r-j}  \sum_{\substack{\{\mathbf{c}_{j+1},\mathbf{c}_{j+2},...,\mathbf{c}_r\} \subseteq \mathbb{F}_q^n }}  \sum_{\mathbf{A}} \prod_{ \mathbf{c}_{i_{j+1}}\in C^\mathbf{A}_{j+1} }W_{\mathbf{c}_{i_{j+1}}}(\mathbf{y})^{\alpha_{i_{j+1}}}\cdots  \prod_{ \mathbf{c}_{i_{r}}\in C^\mathbf{A}_{r} }   W_{\mathbf{c}_{i_{r}}}(\mathbf{y})^{\alpha_{i_{r}}}\\
     &= \sum_{\mathbf{y}\in \mathbb{F}_q^n} \frac{q^{n(\alpha-1)}}{q^{k\alpha}}\left(\frac{q^k - 1}{q^n - 1}\right)^{r-j}    \sum_{\mathbf{A}} \sum_{\mathbf{c}_{j+1}\in\mathbb{F}_q^n }\prod_{ \mathbf{c}_{i_{j+1}}\in C^\mathbf{A}_{j+1} }W_{\mathbf{c}_{i_{j+1}}}(\mathbf{y})^{\alpha_{i_{j+1}}}\cdots  \sum_{\mathbf{c}_{r}\in\mathbb{F}_q^n }\prod_{ \mathbf{c}_{i_{r}}\in C^\mathbf{A}_{r} }   W_{\mathbf{c}_{i_{r}}}(\mathbf{y})^{\alpha_{i_{r}}}\\ \intertext{ by using inequality \eqref{ineq:CrA_to_RenyiEntropy} on  $C^\mathbf{A}_{r}$} 
     &\leq \sum_{\mathbf{y}\in \mathbb{F}_q^n} \frac{q^{n(\alpha-1)}}{q^{k\alpha}}\left(\frac{q^k - 1}{q^n - 1}\right)^{r-j}    \sum_{\mathbf{A}} \sum_{\mathbf{c}_{j+1}\in\mathbb{F}_q^n }\prod_{ \mathbf{c}_{i_{j+1}}\in C^\mathbf{A}_{j+1} }W_{\mathbf{c}_{i_{j+1}}}(\mathbf{y})^{\alpha_{i_{j+1}}}\\  & \hspace{5cm}\cdots \sum_{\mathbf{c}_{r-1}\in\mathbb{F}_q^n }\prod_{ \mathbf{c}_{i_{r-1}}\in C^\mathbf{A}_{r-1} }   W_{\mathbf{c}_{i_{r-1}}}(\mathbf{y})^{\alpha_{i_{r-1}}}\cdot q^{\left( 1 - \alpha_{r}^\mathbf{A} \right) H_{\alpha_{r}^\mathbf{A}}(W)}\\ \intertext{ by using inequality \eqref{ineq:CrA_to_RenyiEntropy} again on  $C^\mathbf{A}_{r-1}$} 
     &\leq \sum_{\mathbf{y}\in \mathbb{F}_q^n} \frac{q^{n(\alpha-1)}}{q^{k\alpha}}\left(\frac{q^k - 1}{q^n - 1}\right)^{r-j}    \sum_{\mathbf{A}} \sum_{\mathbf{c}_{j+1}\in\mathbb{F}_q^n }\prod_{ \mathbf{c}_{i_{j+1}}\in C^\mathbf{A}_{j+1} }W_{\mathbf{c}_{i_{j+1}}}(\mathbf{y})^{\alpha_{i_{j+1}}}\\  & \hspace{3cm}\cdots  \sum_{\mathbf{c}_{r-2}\in\mathbb{F}_q^n }\prod_{ \mathbf{c}_{i_{r-2}}\in C^\mathbf{A}_{r-2} }   W_{\mathbf{c}_{i_{r-2}}}(\mathbf{y})^{\alpha_{i_{r-2}}}\cdot q^{\left( 1 - \alpha_{r-1}^\mathbf{A} \right) H_{\alpha_{r-1}^\mathbf{A}}(W)}\cdot q^{\left( 1 - \alpha_{r}^\mathbf{A} \right) H_{\alpha_{r}^\mathbf{A}}(W)}\\ \intertext{ by using inequality \eqref{ineq:CrA_to_RenyiEntropy} on  $C^\mathbf{A}_{r-2}, C^\mathbf{A}_{r-3}, ..., C^\mathbf{A}_{j+1}$ separately step by step} 
     &\leq ...\leq\sum_{\mathbf{y}\in \mathbb{F}_q^n} \frac{q^{n(\alpha-1)}}{q^{k\alpha}}\left(\frac{q^k - 1}{q^n - 1}\right)^{r-j}    \sum_{\mathbf{A}}q^{\left( 1 - \alpha_{j+1}^\mathbf{A} \right) H_{\alpha_{j+1}^\mathbf{A}}(W)}\cdots  q^{\left( 1 - \alpha_{r-1}^\mathbf{A} \right) H_{\alpha_{r-1}^\mathbf{A}}(W)}\cdot q^{\left( 1 - \alpha_{r}^\mathbf{A} \right) H_{\alpha_{r}^\mathbf{A}}(W)}\\
     &\leq \sum_{\mathbf{A}}  \prod_{i=j+1}^r   q^{-n\left( 1 - \alpha_{i}^\mathbf{A} \right)\left( 1-R-\frac{1}{n}H_{\alpha_{i}^\mathbf{A}}(W)\right)}\\ 
    &\leq O(q^{-\varepsilon n}).
    \end{align*} 
    The last step is due to the fact that  summation over $\mathbf{A}$ is a finite sum for $\mathbf{A}\in \mathbb{F}_q^{j\times(r-j)}$.

    Proof for other basis with rank $r-j$ follows the same procedure. Thus by summing over all basis, we get  
     \begin{align*}
     &\sum_{\mathbf{y}\in \mathbb{F}_q^n}\frac{q^{n(\alpha-1)}}{q^{k\alpha}}\left(\frac{q^k - 1}{q^n - 1}\right)^{r-j} \sum_{\substack{\{\mathbf{c}_1,\mathbf{c}_2,...,\mathbf{c}_r\} \subseteq \mathbb{F}_q^n / \{\mathbf{0}\} \\ \text{rank}\{\mathbf{c}_1,\mathbf{c}_2,...,\mathbf{c}_r\}=r-j}} \prod_{i=1}^{r}W_{\mathbf{c}_i}(\mathbf{y})^{\alpha_i} \\
    &=\sum_{\substack{1\leq i_{j+1},i_{j+2}, ...,i_{r} \leq r}}\sum_{\mathbf{y}\in \mathbb{F}_q^n}\frac{q^{n(\alpha-1)}}{q^{k\alpha}}\left(\frac{q^k - 1}{q^n - 1}\right)^{r-j} \sum_{\substack{\{\mathbf{c}_1,\mathbf{c}_2,...,\mathbf{c}_r\} \subseteq \mathbb{F}_q^n / \{\mathbf{0}\} \\ \text{rank}\{\mathbf{c}_1,\mathbf{c}_2,...,\mathbf{c}_r\}=r-j\\ \text{rank}\{\mathbf{c}_{i_{j+1}},\mathbf{c}_{i_{j+2}},...,\mathbf{c}_{i_{r}}\}=r-j}} \prod_{i=1}^{r}W_{\mathbf{c}_i}(\mathbf{y})^{\alpha_i} \\
    &\leq  \sum_{\substack{1\leq i_{j+1},i_{j+2}, ...,i_{r} \leq r}}
        O(q^{-\varepsilon n})\\
       &=O(q^{-\varepsilon n})
    \end{align*}

\end{proof}

Now with lemmas above, Theorem \ref{theorem:smoothrandomlinearcode_1<Alpha<2} can be extended to cases for $\alpha\in \mathbb{N}$.

\begin{theorem}[Smoothing Random Linear Code for $\alpha \in \mathbb{N}$]
    \label{theorem:smoothLinearCode_Alpha_N}
    Suppose $\alpha \in \mathbb{N}$. When $W$ is an additive noise channel with $W(\mathbf{y}|\mathbf{x})=W(\mathbf{y}-\mathbf{x})$, and for any rate $R$ satisfies 
    $$ R \geq 1 - \frac{H_\alpha(W)}{n}+\varepsilon,$$ where $\varepsilon>0$ and  $H_\alpha(W)\doteq \frac{1}{\alpha-1}\log\sum_{ \mathbf{x}\in \mathbb{F}_q^n}  W(\mathbf{x}) ^{\alpha} $, then 
    $$\mathbb{E}_{F\sim \mathcal{B}} D_\alpha(U_{F}+N||U_{\mathbb{F}_q^n})\rightarrow 0$$ as $n\rightarrow \infty$. More specifically,
    $$ \mathbb{E}_{F\sim \mathcal{B}}q^{(\alpha-1) D_\alpha(U_{F(\mathbf{a})}+N||U_{\mathbb{F}_q^n})}
           \leq  O(q^{-\varepsilon n})+1.$$
\end{theorem}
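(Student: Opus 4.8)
The plan is to mimic the structure of the proof of Theorem \ref{theorem:smoothrandomlinearcode_1<Alpha<2}, but since $\alpha$ is now an integer the expansion $(\sum_{\mathbf{a}} W_{\Lambda(\mathbf{a})}(\mathbf{y}))^{\alpha}$ is a genuine polynomial, so the Jensen step is replaced by a multinomial expansion together with the Extended Averaging Lemma \ref{lemma:Extended_Averaging_LinearCodeLemma}. Concretely, I would again introduce the affine encoder $\Lambda:\mathbf{a}\mapsto F(\mathbf{a})+G$ and write
\begin{align*}
\mathbb{E}_\Lambda q^{(\alpha-1)D_\alpha(U_{\Lambda(\mathbf{a})}+N\|U_{\mathbb{F}_q^n})}
=\frac{q^{n(\alpha-1)}}{q^{k\alpha}}\sum_{\mathbf{y}\in\mathbb{F}_q^n}\mathbb{E}_\Lambda\Bigl(\sum_{\mathbf{a}\in\mathbb{F}_q^k}W_{\Lambda(\mathbf{a})}(\mathbf{y})\Bigr)^{\alpha}.
\end{align*}
Expanding the $\alpha$-th power by the multinomial theorem groups the $\alpha$-tuples $(\mathbf{a}_1,\dots,\mathbf{a}_\alpha)$ according to which coordinates coincide; after factoring out the common $\Lambda(\mathbf{a})$-shift using the affine property $\Lambda(\mathbf{a}_i)=\Lambda(\mathbf{a}_1)+F(\mathbf{a}_i-\mathbf{a}_1)$, one is left, for each partition of $\{1,\dots,\alpha\}$ into $r$ blocks (with block weights $\alpha_1,\dots,\alpha_r$ summing to $\alpha$), with a sum of the form $\mathbb{E}_{F}\sum_{\{\mathbf{a}_1,\dots,\mathbf{a}_r\}}\prod_i W_{\Lambda(\mathbf{a})+F(\mathbf{a}_i)}(\mathbf{y})^{\alpha_i}$ over \emph{distinct} nonzero $\mathbf{a}_i$'s, plus a term corresponding to all indices equal which reproduces $\mathbb{E}\,W_{\Lambda(\mathbf{a})}(\mathbf{y})^{\alpha}$ and contributes the leading "$1$" exactly as in the $\alpha\in(1,2)$ case (giving $q^{(\alpha-1)n(1-R-H_\alpha(W)/n)}\cdot q^{\cdots}$, which however here must be shown $\le O(q^{-\varepsilon n})$, not just $\to 1$; the cleanest fix is to separate the single-codeword diagonal term which gives exactly $+1$ after summing over $\mathbf y$, and fold every partition with $r\ge 2$ into the error term).

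Next I would apply the Extended Averaging Lemma \ref{lemma:Extended_Averaging_LinearCodeLemma} to each fixed-partition sum: replacing the expectation over $F$ of $\prod_i f^{\alpha_i}(F(\mathbf{a}_i))$ (with $f=W_{\Lambda(\mathbf{a})+\,\cdot\,}(\mathbf{y})$, a nonnegative function) by $\sum_j (\tfrac{q^k-1}{q^n-1})^{r-j}\sum_{\mathrm{rank}=r-j}\prod_i f^{\alpha_i}(\mathbf{c}_i)$. For the full-rank contribution $j=0$ one gets a clean product $\prod_i \bigl(\sum_{\mathbf c}W(\mathbf y-\mathbf c)^{\alpha_i}\bigr)$ after summing the $\mathbf c_i$ independently, which evaluates to $\prod_i q^{(1-\alpha_i)H_{\alpha_i}(W)}$; combined with the prefactor $\tfrac{q^{n(\alpha-1)}}{q^{k\alpha}}(\tfrac{q^k-1}{q^n-1})^{r}$ and the rate hypothesis $R\ge 1-H_\alpha(W)/n+\varepsilon$ (using superadditivity/monotonicity of Rényi entropy to bound $\sum_i H_{\alpha_i}(W)$ against $H_\alpha(W)$ — this is where I expect to have to be careful, since $\sum_i \alpha_i=\alpha$ and the exponents $\alpha_i$ are integers $\ge 1$) this is $O(q^{-\varepsilon n})$. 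For the deficient-rank contributions $0<j<r$, this is precisely what Lemma \ref{lemma:Each_Component_Coverge_SmoothingLemma} is designed for: it already shows each such term is $O(q^{-\varepsilon n})$, using the partition-into-subsets trick and Lemma \ref{lemma:AMGMlike_inequality} to collapse the linearly-dependent codewords onto their basis.

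Finally I would sum the $O(q^{-\varepsilon n})$ bounds over the finitely many partitions of $\{1,\dots,\alpha\}$ and the finitely many rank values $j$, obtaining
\[
\mathbb{E}_{F\sim\mathcal{B}}\,q^{(\alpha-1)D_\alpha(U_{F(\mathbf{a})}+N\|U_{\mathbb{F}_q^n})}\le O(q^{-\varepsilon n})+1,
\]
and then invoke regularity of the additive channel \cite{delsarte1982algebraic} to discard the auxiliary variable $G$, exactly as in the earlier proof; Jensen's inequality ($x\mapsto q^{(\alpha-1)x}$ convex) then upgrades this to $\mathbb{E}_F D_\alpha(U_F+N\|U_{\mathbb{F}_q^n})\to 0$. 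The main obstacle I anticipate is bookkeeping: correctly matching the multinomial coefficients and the "distinct $\mathbf a_i$" constraints to the $\{\mathbf a_1,\dots,\mathbf a_r\}$-subset sums in Lemma \ref{lemma:Extended_Averaging_LinearCodeLemma}, and verifying that the exponent arithmetic (in particular that $\sum_i(1-\alpha_i)(1-R-H_{\alpha_i}(W)/n)$ stays negative for every admissible partition once $R$ exceeds the threshold for the \emph{single} parameter $\alpha$) genuinely closes — this requires the monotonicity $D_{\alpha_i}\le D_\alpha$ / entropy inequalities rather than any new idea, but it is the step most prone to sign errors.
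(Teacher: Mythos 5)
Your overall plan is the right one and closely tracks the paper's proof: expand the $\alpha$-th power multinomially, regroup tuples by which coordinates coincide into partitions with $r$ blocks of sizes $\alpha_1,\dots,\alpha_r$ summing to $\alpha$, apply Lemma~\ref{lemma:Extended_Averaging_LinearCodeLemma}, treat the deficient-rank ($j>0$) terms with Lemma~\ref{lemma:Each_Component_Coverge_SmoothingLemma}, and evaluate the full-rank ($j=0$) terms as products of R\'enyi entropies. The remaining differences from the paper (affine shift $\Lambda=F+G$ versus the paper's binomial-plus-induction-on-$\alpha$ handling of the $\mathbf a=\mathbf 0$ codeword) are cosmetic and could both be made to work.

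However, there is a genuine error in your accounting of the leading $+1$, and your proposed ``fix'' would make the proof fail. You claim the $+1$ comes from the diagonal partition $r=1$ (``all indices equal''), i.e.\ from $\mathbb{E}\sum_{\mathbf a}W_{\Lambda(\mathbf a)}(\mathbf y)^{\alpha}$. But that term, after the prefactor $q^{n(\alpha-1)}/q^{k\alpha}$ and summing over $\mathbf y$, equals $q^{(\alpha-1)n(1-R-H_\alpha(W)/n)}$, which under the rate hypothesis is $O(q^{-(\alpha-1)\varepsilon n})\to 0$, not $1$. The $+1$ actually arises from the opposite extreme: the $r=\alpha$, $j=0$ partition (all $\alpha_i=1$, codewords linearly independent). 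There
\[
\frac{q^{n(\alpha-1)}}{q^{k\alpha}}\Bigl(\frac{q^k-1}{q^n-1}\Bigr)^{\alpha}\sum_{\mathbf y}\;\sum_{\mathbf c_1,\dots,\mathbf c_\alpha}\prod_{i=1}^{\alpha}W(\mathbf y-\mathbf c_i)
\;\le\;\frac{1}{q^n}\sum_{\mathbf y}\prod_{i=1}^{\alpha}\Bigl(\sum_{\mathbf c_i}W(\mathbf y-\mathbf c_i)\Bigr)=1,
\]
since each inner sum is exactly $1$. All partitions with $r<\alpha$ have some $\alpha_i\ge 2$, and by monotonicity of R\'enyi entropy ($H_{\alpha_i}(W)\ge H_\alpha(W)$ for $\alpha_i\le\alpha$) combined with $R\ge 1-H_\alpha(W)/n+\varepsilon$, their exponents are at most $-(\alpha-r)\varepsilon n$, so they all decay. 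Consequently your suggestion to ``separate the single-codeword diagonal term which gives exactly $+1$ \dots and fold every partition with $r\ge 2$ into the error term'' cannot be salvaged: the $r=\alpha$ partition is not an error term, it \emph{is} the $1$, and the single-codeword diagonal term is already error. With the $+1$ reattributed to the $r=\alpha$ full-rank block, the rest of your argument goes through and matches the paper's.
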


\begin{proof}
    Notice that it suffices to prove  
    $$ \frac{q^{n(\alpha-1)}}{q^{k\alpha}}\sum_{y\in \mathbb{F}_q^n}   \mathbb{E}_{F\sim \mathcal{B}} \left ( \sum_{\mathbf{a}\in \mathbb{F}_q^k/\{\mathbf{0}\}} W_{F(\mathbf{a})}(\mathbf{y}) \right )^{\alpha}\leq  O(q^{-\varepsilon n})+1. $$
    Since if we prove by induction on $\alpha$, it can be computed as 
       \begin{align*}
        &\mathbb{E}_{F\sim \mathcal{B}}q^{(\alpha-1) D_\alpha(U_{\Lambda(\mathbf{a})}+N||U_{\mathbb{F}_q^n})}\nonumber\\
        &=  \frac{q^{n(\alpha-1)}}{q^{k\alpha}}\sum_{y\in \mathbb{F}_q^n}   \mathbb{E}_{F\sim \mathcal{B}} \left ( W_{\mathbf{0}}(\mathbf{y}) + \sum_{\mathbf{a}\in \mathbb{F}_q^k/\{\mathbf{0}\}} W_{F(\mathbf{a})}(\mathbf{y}) \right )^{\alpha}\\
        &=\frac{q^{n(\alpha-1)}}{q^{k\alpha}}\sum_{y\in \mathbb{F}_q^n}   \mathbb{E}_{F\sim \mathcal{B}} \sum_{\alpha_1=0}^{\alpha} \binom{\alpha}{\alpha_1} W_{\mathbf{0}}(\mathbf{y})^{\alpha_1} \left(\sum_{\mathbf{a}\in \mathbb{F}_q^k/\{\mathbf{0}\}} W_{F(\mathbf{a})}(\mathbf{y}) \right)^{\alpha-\alpha_1}\\
       &=\frac{q^{n(\alpha-1)}}{q^{k\alpha}}\sum_{y\in \mathbb{F}_q^n}   \mathbb{E}_{F\sim \mathcal{B}} \sum_{\alpha_1=0}^{\alpha} \binom{\alpha}{\alpha_1} W_{\mathbf{0}}(\mathbf{y})^{\alpha_1} \left(\sum_{\mathbf{a}\in \mathbb{F}_q^k/\{\mathbf{0}\}} W_{F(\mathbf{a})}(\mathbf{y}) \right)^{\alpha-\alpha_1}\\
       &=\frac{q^{n(\alpha-1)}}{q^{k\alpha}}\sum_{y\in \mathbb{F}_q^n}  \mathbb{E}_{F\sim \mathcal{B}}  \left(\sum_{\mathbf{a}\in \mathbb{F}_q^k/\{\mathbf{0}\}} W_{F(\mathbf{a})}(\mathbf{y}) \right)^{\alpha} \\&\quad\quad+\sum_{\alpha_1=1}^{\alpha} \binom{\alpha}{\alpha_1} \frac{q^{n(\alpha-1)}}{q^{k\alpha}}\sum_{y\in \mathbb{F}_q^n}  W_{\mathbf{0}}(\mathbf{y})^{\alpha_1}\mathbb{E}_{F\sim \mathcal{B}}  \left(\sum_{\mathbf{a}\in \mathbb{F}_q^k/\{\mathbf{0}\}} W_{F(\mathbf{a})}(\mathbf{y}) \right)^{\alpha-\alpha_1}\\ \intertext{ by induction on all $\alpha-\alpha_1$}
       &\leq \frac{q^{n(\alpha-1)}}{q^{k\alpha}}\sum_{y\in \mathbb{F}_q^n}  \mathbb{E}_{F\sim \mathcal{B}}  \left(\sum_{\mathbf{a}\in \mathbb{F}_q^k/\{\mathbf{0}\}} W_{F(\mathbf{a})}(\mathbf{y}) \right)^{\alpha} +\sum_{\alpha_1=1}^{\alpha} \binom{\alpha}{\alpha_1} \frac{q^{n(\alpha_1-1)}}{q^{k\alpha_1}}\sum_{y\in \mathbb{F}_q^n}  W_{\mathbf{0}}(\mathbf{y})^{\alpha_1}\left(O(q^{-\varepsilon n})+1 \right )\\
        &\leq \frac{q^{n(\alpha-1)}}{q^{k\alpha}}\sum_{y\in \mathbb{F}_q^n}  \mathbb{E}_{F\sim \mathcal{B}}  \left(\sum_{\mathbf{a}\in \mathbb{F}_q^k/\{\mathbf{0}\}} W_{F(\mathbf{a})}(\mathbf{y}) \right)^{\alpha} +\sum_{\alpha_1=1}^{\alpha} \binom{\alpha}{\alpha_1} \frac{1}{q^{k}} q^{n(\alpha_1-1)\left(1-R-\frac{1}{n}H_{\alpha_1}(W)\right)}\left(O(q^{-\varepsilon n})+1 \right )\\
        &\leq \frac{q^{n(\alpha-1)}}{q^{k\alpha}}\sum_{y\in \mathbb{F}_q^n}  \mathbb{E}_{F\sim \mathcal{B}}  \left(\sum_{\mathbf{a}\in \mathbb{F}_q^k/\{\mathbf{0}\}} W_{F(\mathbf{a})}(\mathbf{y}) \right)^{\alpha} +O(q^{-k}),\\
    \end{align*}
    which mainly relies on dominant type of  $ \frac{q^{n(\alpha-1)}}{q^{k\alpha}}\sum_{y\in \mathbb{F}_q^n}  \mathbb{E}_{F\sim \mathcal{B}}  \left(\sum_{\mathbf{a}\in \mathbb{F}_q^k/\{\mathbf{0}\}} W_{F(\mathbf{a})}(\mathbf{y}) \right)^{\alpha}$ when $n$ goes to infinity.
    
    Thus 
    \begin{align}
        & \frac{q^{n(\alpha-1)}}{q^{k\alpha}}\sum_{y\in \mathbb{F}_q^n}  \mathbb{E}_{F\sim \mathcal{B}}  \left(\sum_{\mathbf{a}\in \mathbb{F}_q^k/\{\mathbf{0}\}} W_{F(\mathbf{a})}(\mathbf{y}) \right)^{\alpha}\nonumber\\
     &\leq  \frac{q^{n(\alpha-1)}}{q^{k\alpha}}\sum_{y\in \mathbb{F}_q^n}    \left (\sum_{r=1}^{ \alpha }  \sum_{\alpha_1+...+\alpha_r= 
  \alpha  }\mathbb{E}_{F\sim \mathcal{B}}\sum_{\substack{\{\mathbf{a}_1,\mathbf{a}_2,...,\mathbf{a}_r\}\nonumber\\ \subseteq  \mathbb{F}_q^k / \{\mathbf{0}\}}}    \prod_{i=1}^r W_{F(\mathbf{a}_i)}(\mathbf{y}) ^{\alpha_{i}}\right )\nonumber\\
  &\leq  \frac{q^{n(\alpha-1)}}{q^{k\alpha}}\sum_{y\in \mathbb{F}_q^n}    \left (\sum_{r=1}^{ \alpha }  \sum_{\alpha_1+...+\alpha_r= 
   \alpha  }\sum_{j} (\frac{q^k - 1}{q^n - 1})^{r-j}  \sum_{\substack{\{\mathbf{c}_1,\mathbf{c}_2,...,\mathbf{c}_r\} \subseteq \mathbb{F}_q^n / \{\mathbf{0}\}\\rank\{\mathbf{c}_1,\mathbf{c}_2,...,\mathbf{c}_r\}=r-j}}\prod_{i=1}^{r}W_{\mathbf{c}_i}(\mathbf{y}) ^{\alpha_{i}}\right )\nonumber\\
   &\leq  \frac{q^{n(\alpha-1)}}{q^{k\alpha}}\sum_{y\in \mathbb{F}_q^n}    \left (\sum_{r=1}^{ \alpha }  
  \sum_{\alpha_1+...+\alpha_r= 
   \alpha  }\sum_{j>0} (\frac{q^k - 1}{q^n - 1})^{r-j}  \sum_{\substack{\{\mathbf{c}_1,\mathbf{c}_2,...,\mathbf{c}_r\} \subseteq \mathbb{F}_q^n / \{\mathbf{0}\}\\rank\{\mathbf{c}_1,\mathbf{c}_2,...,\mathbf{c}_r\}=r-j}}\prod_{i=1}^{r}W_{\mathbf{c}_i}(\mathbf{y}) ^{\alpha_{i}}\right )  \nonumber\\ &\quad\quad\quad +     \frac{q^{n(\alpha-1)}}{q^{k\alpha}}\sum_{y\in \mathbb{F}_q^n}    \left (\sum_{r=1}^{ \alpha }  
  \sum_{\alpha_1+...+\alpha_r= 
   \alpha  }(\frac{q^k - 1}{q^n - 1})^{r}  \sum_{\substack{\{\mathbf{c}_1,\mathbf{c}_2,...,\mathbf{c}_r\} \subseteq \mathbb{F}_q^n / \{\mathbf{0}\}\\rank\{\mathbf{c}_1,\mathbf{c}_2,...,\mathbf{c}_r\}=r}}\prod_{i=1}^{r}W_{\mathbf{c}_i}(\mathbf{y}) ^{\alpha_{i}}\right ) \label{ineq:sumofj=0andj>0}
    \end{align}
   Here the last step is by splitting the sum into parts where $j>0$ and parts where $j=0$. The first parts can be estimated no larger than $\sum_{r=1}^{ \alpha }  
  \sum_{\alpha_1+...+\alpha_r= 
   \alpha  }O(q^{-\varepsilon n})$ by Lemma \ref{lemma:Each_Component_Coverge_SmoothingLemma}, and the second parts can be computed as follows, 
     \begin{align*}
        & \frac{q^{n(\alpha-1)}}{q^{k\alpha}}\sum_{y\in \mathbb{F}_q^n}    \left (\sum_{r=1}^{ \alpha }  
          \sum_{\alpha_1+...+\alpha_r= 
           \alpha  }(\frac{q^k - 1}{q^n - 1})^{r}  \sum_{\substack{\{\mathbf{c}_1,\mathbf{c}_2,...,\mathbf{c}_r\} \subseteq \mathbb{F}_q^n / \{\mathbf{0}\}\\rank\{\mathbf{c}_1,\mathbf{c}_2,...,\mathbf{c}_r\}=r}}\prod_{i=1}^{r}W_{\mathbf{c}_i}(\mathbf{y}) ^{\alpha_{i}}\right ) \\
           &\leq\sum_{r=1}^{ \alpha }  
      \sum_{\alpha_1+...+\alpha_r= 
       \alpha  } \frac{q^{n(\alpha-1)}}{q^{k\alpha}}\sum_{y\in \mathbb{F}_q^n}   (\frac{q^k - 1}{q^n - 1})^{r}  \sum_{\substack{\{\mathbf{c}_1,\mathbf{c}_2,...,\mathbf{c}_r\} \subseteq \mathbb{F}_q^n }}\prod_{i=1}^{r}W_{\mathbf{c}_i}(\mathbf{y}) ^{\alpha_{i}}\\
        &\leq\sum_{r=1}^{ \alpha }  
      \sum_{\alpha_1+...+\alpha_r= 
       \alpha  } \frac{q^{n(\alpha-1)}}{q^{k\alpha}}\sum_{y\in \mathbb{F}_q^n}  (\frac{q^k - 1}{q^n - 1})^{r}  \sum_{\substack{\mathbf{c}_1 \in \mathbb{F}_q^n }}W_{\mathbf{c}_1}(\mathbf{y})^{\alpha_1} \sum_{\substack{\mathbf{c}_2 \in \mathbb{F}_q^n }}W_{\mathbf{c}_2}(\mathbf{y})^{\alpha_2}  ...\sum_{\substack{\mathbf{c}_r \in \mathbb{F}_q^n }}W_{\mathbf{c}_r}(\mathbf{y})^{\alpha_r}  \\
           &= \sum_{r=1}^{ \alpha }  
      \sum_{\alpha_1+...+\alpha_r= 
       \alpha  } \frac{q^{n(\alpha-1)}}{q^{k\alpha}}\sum_{y\in \mathbb{F}_q^n}  (\frac{q^k - 1}{q^n - 1})^{r}         
               \prod_{i=1}^rq^{(1-\alpha_i)H_{\alpha_i}(W)} \\
        &\leq \sum_{r=1}^{ \alpha }  \sum_{\alpha_1+...+\alpha_r= \alpha  }\prod_{i=1}^rq^{n(1-\alpha_i)\left(1-R-\frac{1}{n}H_{\alpha_i}(W)\right)} \\
        &\leq \sum_{r=1}^{ \alpha -1}  \sum_{\alpha_1+...+\alpha_r= \alpha  }\prod_{i=1}^rq^{n(1-\alpha_i)\left(1-R-\frac{1}{n}H_{\alpha_i}(W)\right)}  + 1\\
        &\leq \sum_{r=1}^{ \alpha -1}  \sum_{\alpha_1+...+\alpha_r= \alpha  }O(q^{-\varepsilon n}) + 1,
    \end{align*}
     where the last step is due to the fact that there exists some $\alpha_i>1$ when $r<\alpha$.

     By substituting the estimation of first and second parts above into formula \eqref{ineq:sumofj=0andj>0}, it can be obtained 
     \begin{align*}
         &\mathbb{E}_{F\sim \mathcal{B}}q^{(\alpha-1) D_\alpha(U_{\Lambda(\mathbf{a})}+N||U_{\mathbb{F}_q^n})}\\
         &\leq \sum_{r=1}^{ \alpha -1}  \sum_{\alpha_1+...+\alpha_r= \alpha  }O(q^{-\varepsilon n}) +\sum_{r=1}^{ \alpha -1}  \sum_{\alpha_1+...+\alpha_r= \alpha  }O(q^{-\varepsilon n})+1   \\
         &  =O(q^{-\varepsilon n})+1.
     \end{align*}
     The last step is by noting that the sum is finite. Thus the proof completes
\end{proof}

\begin{example}
    Take $\alpha=4$ and $q=2$ for example.
    \begin{align*}
        &\mathbb{E}_{F\sim \mathcal{B}}2^{(\alpha-1) D_\alpha(U_{\Lambda(\mathbf{a})}+N||U_{\mathbb{F}_2^n})}\nonumber\\
        &=  \frac{2^{3n}}{2^{4k}}\sum_{y\in \mathbb{F}_2^n}   \mathbb{E}_{F\sim \mathcal{B}} \left ( \sum_{\mathbf{a}\in \mathbb{F}_2^k/\{\mathbf{0} \}} W_{F(\mathbf{a})}(\mathbf{y}) \right )^{4}\\
         &=  \underbrace{\frac{2^{3n}}{2^{4k}}\sum_{y\in \mathbb{F}_2^n}   \mathbb{E}_{F\sim \mathcal{B}} \left ( \sum_{\mathbf{a}\in \mathbb{F}_2^k/\{\mathbf{0} \}} W_{F(\mathbf{a})}(\mathbf{y})^{4} \right )}_{(i)} + \underbrace{\frac{2^{3n}}{2^{4k}}\sum_{y\in \mathbb{F}_2^n}   \mathbb{E}_{F\sim \mathcal{B}} \left ( \sum_{\{\mathbf{a}_1,\mathbf{a}_2\}\in \mathbb{F}_2^k/\{\mathbf{0} \}} W_{F(\mathbf{a}_1)}(\mathbf{y})^3W_{F(\mathbf{a}_2)}(\mathbf{y}) \right )}_{(ii)}  \\ &\quad \quad +\underbrace{\frac{2^{3n}}{2^{4k}}\sum_{y\in \mathbb{F}_2^n}   \mathbb{E}_{F\sim \mathcal{B}} \left ( \sum_{\{\mathbf{a}_1,\mathbf{a}_2\}\in \mathbb{F}_2^k/\{\mathbf{0} \}} W_{F(\mathbf{a}_1)}(\mathbf{y})^2W_{F(\mathbf{a}_2)}(\mathbf{y})^2 \right )}_{(iii)} \\ &\quad \quad + \underbrace{\frac{2^{3n}}{2^{4k}}\sum_{y\in \mathbb{F}_2^n}   \mathbb{E}_{F\sim \mathcal{B}} \left ( \sum_{\{\mathbf{a}_1,\mathbf{a}_2,\mathbf{a}_3\}\in \mathbb{F}_2^k/\{\mathbf{0} \}} W_{F(\mathbf{a}_1)}(\mathbf{y})^2W_{F(\mathbf{a}_2)}(\mathbf{y})W_{F(\mathbf{a}_3)}(\mathbf{y}) \right )}_{(iv)}\\&\quad \quad + \underbrace{\frac{2^{3n}}{2^{4k}}\sum_{y\in \mathbb{F}_2^n}   \mathbb{E}_{F\sim \mathcal{B}} \left ( \sum_{\{\mathbf{a}_1,\mathbf{a}_2,\mathbf{a}_3,\mathbf{a}_4\}\in \mathbb{F}_2^k/\{\mathbf{0} \}} W_{F(\mathbf{a}_1)}(\mathbf{y})W_{F(\mathbf{a}_2)}(\mathbf{y})W_{F(\mathbf{a}_3)}(\mathbf{y})W_{F(\mathbf{a}_4)}(y) \right )}_{(v)}\\
     &\leq  \frac{2^{n(\alpha-1)}}{2^{k\alpha}}\sum_{y\in \mathbb{F}_2^n}    \left (\sum_{r=1}^{ \alpha }  \sum_{\alpha_1+...+\alpha_r= 
  \alpha  }\mathbb{E}_{F\sim \mathcal{B}}\sum_{\substack{\{\mathbf{a}_1,\mathbf{a}_2,...,\mathbf{a}_r\}\nonumber\\ \subseteq  \mathbb{F}_2^k / \{\mathbf{0}\}}}    \prod_{i=1}^r W_{F(\mathbf{a}_i)}(\mathbf{y}) ^{\alpha_{i}}\right )\\
    \end{align*}

    It is easy to prove that 
    \begin{align*}
        &(i)\leq O(2^{3n\left (1-R-\frac{1}{n}H_{4}(W)\right)}),\\ &(ii)\leq O(2^{2n\left (1-R-\frac{1}{n}H_{3}(W)\right)}),\\  &(iii)\leq O(2^{n\left (1-R-\frac{1}{n}H_{2}(W)\right)}\cdot 2^{n\left (1-R-\frac{1}{n}H_{2}(W)\right)}).
    \end{align*}

     And 
    \begin{align*}
       &(iv)\leq\\
       &\frac{2^{3n}}{2^{4k}}\sum_{y\in \mathbb{F}_2^n}  \left ( \frac{2^{k}-1}{2^n-1}\right )^3 \left (\sum_{\substack{\{\mathbf{c}_1,\mathbf{c}_2,\mathbf{c}_3\} \subseteq \mathbb{F}_2^n / \{\mathbf{0}\} \\ \text{rank}\{\mathbf{c}_1,\mathbf{c}_2,\mathbf{c}_3\}=3}} W_{\mathbf{c}_1}(\mathbf{y})^2W_{\mathbf{c}_2}(\mathbf{y})W_{\mathbf{c}_3}(\mathbf{y}) \right ) \\&+ \frac{2^{3n}}{2^{4k}}\sum_{y\in \mathbb{F}_2^n}  \left ( \frac{2^{k}-1}{2^n-1}\right )^2 \left (\sum_{\substack{\{\mathbf{c}_1,\mathbf{c}_2\} \subseteq \mathbb{F}_2^n / \{\mathbf{0}\} \\ \text{rank}\{\mathbf{c}_1,\mathbf{c}_2,\mathbf{c}_3\}=2\\\mathbf{c}_3=\mathbf{c}_2+\mathbf{c}_1}} W_{\mathbf{c}_1}(\mathbf{y})^2W_{\mathbf{c}_2}(\mathbf{y})W_{\mathbf{c}_3}(\mathbf{y}) \right ) \\
       &\leq O(2^{n\left (1-R-\frac{1}{n}H_{2}(W)\right)}) +  O(2^{n\left (1-R-\frac{1}{n}H_{2}(W)\right)} \cdot 2^{n\left (1-R-\frac{1}{n}H_{2}(W)\right)}) + O(2^{2n\left (1-R-\frac{1}{n}H_{3}(W)\right)}). 
\end{align*}  
     Similar to $(iv)$, it can be also derived that 
     \begin{align*}
         &(v)\leq O(2^{n\left (1-R-\frac{1}{n}H_{2}(W)\right)}) +  O(2^{n\left (1-R-\frac{1}{n}H_{2}(W)\right)} \cdot 2^{n\left (1-R-\frac{1}{n}H_{2}(W)\right)}) + O(2^{2n\left (1-R-\frac{1}{n}H_{3}(W)\right)}) + 1\\
     \end{align*}
     This concludes the proof.
\end{example}

\subsection{Exponents Analysis}
We need to analyze the following dominant exponent:
 $$   \min_{\substack{\sum\alpha_i=\alpha \\\alpha_i\in\mathbb{N}}}\sum (1-\alpha_i)\left(1-R-\frac{1}{n}H_{\alpha_i}(W) \right ).$$
 Here $\alpha_i$'s are non-negative integer partition of integer $\alpha$.
 
Denote $f(x)=(1-x)\left( 1 - R -  \frac{1}{n}H_{x}(W) \right)$, which is concave due to the facts that:
\begin{align*} 
  &f(x)=(1-x)\left( 1 - R -  \frac{1}{n}H_{x}(W) \right)=(1-x)(1-R)-\frac{1}{n}\log_2\sum_{i}p_i^x \\
  &f'(x) = -(1-R) - \frac{1}{n\ln2}\frac{\sum_{i}p_i^x\ln p_i}{\sum_{i}p_i^x}\\ 
  &f''(x) =  - \frac{1}{n\ln2}\frac{\left(\sum_{i}p_i^x\ln^2 p_i\right)\left(\sum_{i}p_i^x\right) -\left(\sum_{i}p_i^x\ln p_i\right)^2 }{(\sum_{i}p_i^x)^2}\leq 0 \text{ by Cauchy inequality}.\\
\end{align*}
Since $f'(-\infty)>0$ and $f'(+\infty)<0$, $f(x)$ first increase and then decrease.
Thus $$\min_{\substack{\sum\alpha_i=\alpha \\\alpha_i\in\mathbb{N}}}\sum (1-\alpha_i)\left(1-R-\frac{1}{n}H_{\alpha_i}(W) \right ) = \min \{R+\frac{1}{n}H_{2}(W)-1, (1-\alpha)\left(1-R-\frac{1}{n}H_{\alpha}(W) \right )\}.$$
Here the minimum cannot be determined. 

For example, take $p=[0.45, 0.55], \alpha=50, R=0.9,$ then we have $$R+H_2(p)-1=0.583< (1-\alpha)\left(1-R-\frac{1}{n}H_{\alpha}(W) \right )=24.99.$$

take $p=[0.2, 0.8], \alpha=3, R=0.7,$ then we have $$R+H_2(p)-1=0.085 > (1-\alpha)\left(1-R-\frac{1}{n}H_{\alpha}(W) \right )=0.053.$$

But if we use statistical distance, we will have 
$$2\Delta^2\leq D_1\leq D_2\leq ...\leq D_\alpha \rightarrow 0, $$
$$\Rightarrow \min \{R+\frac{1}{n}H_{2}(W)-1, \max_{\alpha\in \mathbb{N}} (1-\alpha)\left(1-R-\frac{1}{n}H_{\alpha}(W) \right )\}$$

\subsection{Rényi Divergence for $\alpha>2$}
Combining techniques from Theorem \ref{theorem:smoothrandomlinearcode_1<Alpha<2} and theorem \ref{theorem:smoothLinearCode_Alpha_N}, we can derive smoothing for random linear codes for scenarios $\alpha>2$.

\begin{theorem}[Smoothing Random Linear Code for $\alpha > 2$]
    Suppose $\alpha>2$, when $W$ is an additive noise channel with $W(\mathbf{y}|\mathbf{x})=W(\mathbf{y}-\mathbf{x})$, and for any rate $R$ satisfies 
    $$ R \geq 1 - \frac{H_\alpha(W)}{n}+\varepsilon,$$ where $\varepsilon>0$ and  $H_\alpha(W)\doteq \frac{1}{\alpha-1}\log\sum_{ \mathbf{x}\in \mathbb{F}_q^n}  W(\mathbf{x}) ^{\alpha} $, then 
    $$\mathbb{E}_{F\sim \mathcal{B}} D_\alpha(U_{F}+N||U_{\mathbb{F}_q^n})\rightarrow 0$$ as $n\rightarrow \infty$. More specifically,
    $$ \mathbb{E}_{F\sim \mathcal{B}}q^{(\alpha-1) D_\alpha(U_{F(\mathbf{a})}+N||U_{\mathbb{F}_q^n})}
           \leq  O(q^{-\varepsilon n})+1.$$
\end{theorem}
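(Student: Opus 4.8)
The plan is to interpolate between the integer case (Theorem~\ref{theorem:smoothLinearCode_Alpha_N}) and the fractional mechanism already developed in Lemma~\ref{lemma:Each_Component_Coverge_SmoothingLemma}. For general real $\alpha > 2$, write $\alpha = m + \beta$ where $m = \lfloor \alpha \rfloor \ge 2$ and $\beta \in [0,1)$, and approximate $\alpha$ from above by rationals $\alpha^{(q)} \in \frac{\mathbb{Z}}{q}$ with denominator $q$. The key point is that Lemma~\ref{lemma:AMGMlike_inequality} and Lemma~\ref{lemma:Each_Component_Coverge_SmoothingLemma} were already stated for exponents $\alpha_i \in \frac{\mathbb{Z}}{q}$ with $q \ge 1$, so the fractional machinery is in place; what is missing is the analogue of the binomial expansion step in the proof of Theorem~\ref{theorem:smoothLinearCode_Alpha_N}, which used $\alpha \in \mathbb{N}$ to split off the $W_{\mathbf{0}}$ term via $\left(\sum_{\mathbf{a}} W_{F(\mathbf{a})}\right)^\alpha = \sum_{\alpha_1=0}^{\alpha}\binom{\alpha}{\alpha_1}W_{\mathbf{0}}^{\alpha_1}\left(\sum_{\mathbf{a}\ne\mathbf{0}}\cdots\right)^{\alpha-\alpha_1}$.

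First I would handle the $W_{\mathbf{0}}$ term exactly as in Theorem~\ref{theorem:smoothrandomlinearcode_1<Alpha<2}: introduce the affine encoder $\Lambda:\mathbf{a}\mapsto F(\mathbf{a})+G$ so that no distinguished zero codeword exists, reducing everything to bounding $\mathbb{E}_\Lambda \left(\sum_{\mathbf{a}\in\mathbb{F}_q^k} W_{\Lambda(\mathbf{a})}(\mathbf{y})\right)^\alpha$ and removing $G$ at the end by regularity of $W$. Then, since $\alpha$ is now genuinely real, I would expand $(\sum_{\mathbf{a}} W_{\Lambda(\mathbf{a})})^\alpha$ differently: write $x^\alpha = x^{\lceil\alpha\rceil}\cdot x^{\alpha - \lceil\alpha\rceil}$ is not quite monotone-friendly, so instead bound $(\sum_i t_i)^\alpha \le (\#\text{terms})^{\alpha-1}\sum_i t_i^\alpha$ is too lossy; the cleaner route is the multinomial-type bound $(\sum_i t_i)^\alpha \le \sum_{r=1}^{\lceil\alpha\rceil}\sum_{\alpha_1+\cdots+\alpha_r=\alpha,\ \alpha_i\in\frac{\mathbb{Z}}{q},\ \alpha_i>0}\binom{\cdots}{\cdots}\prod_i t_i^{\alpha_i}$ obtained by iterating the elementary inequality $(s+t)^\alpha \le \sum_{\ell}c_\ell s^{\ell/q}t^{\alpha-\ell/q}$ valid for nonnegative $s,t$ with finitely many $\ell$ ranging over a $\frac{1}{q}$-grid — this is a rational-exponent analogue of the binomial theorem with an implied constant depending only on $\alpha$ and $q$, not on the number of codewords. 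Applying this to the ordered sum over messages produces exactly the quantity controlled by the Extended Averaging Lemma~\ref{lemma:Extended_Averaging_LinearCodeLemma} with rational exponents summing to $\alpha$.

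Once the sum is expressed as $\sum_{r,\ \vec\alpha,\ j}(\tfrac{q^k-1}{q^n-1})^{r-j}\sum_{\mathrm{rank}=r-j}\prod_i W_{\mathbf{c}_i}(\mathbf{y})^{\alpha_i}$, I would split on $j>0$ versus $j=0$ exactly as in the proof of Theorem~\ref{theorem:smoothLinearCode_Alpha_N}: the $j>0$ terms are $O(q^{-\varepsilon n})$ by Lemma~\ref{lemma:Each_Component_Coverge_SmoothingLemma} (which already admits $\alpha_i\in\frac{\mathbb{Z}}{q}$), and the full-rank $j=0$ terms factor across the $\mathbf{c}_i$'s, each factor contributing $q^{(1-\alpha_i)H_{\alpha_i}(W)}$, giving $\prod_i q^{n(1-\alpha_i)(1-R-\frac1n H_{\alpha_i}(W))}$; the term with $r=1$ (hence $\alpha_1=\alpha$) contributes $q^{n(1-\alpha)(1-R-\frac1n H_\alpha(W))} \le q^{-(\alpha-1)\varepsilon n}$ and all $r\ge 2$ terms have some $\alpha_i>1$, forcing $1-R-\frac1n H_{\alpha_i}(W) < 0$ (using $H_{\alpha_i}\le H_\alpha$ for $\alpha_i\le\alpha$, which needs checking in the right direction — see below) and decaying. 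Finally I would take the rational approximation $\alpha^{(q)}\downarrow\alpha$: by monotonicity $D_\alpha \le D_{\alpha^{(q)}}$, so the bound for rational order transfers to real $\alpha$, with the rate condition $R\ge 1-\frac1n H_{\alpha^{(q)}}(W)+\varepsilon$ relaxing to the stated one as $q\to\infty$ since $H_{\alpha^{(q)}}(W)\to H_\alpha(W)$ by continuity of Rényi entropy in its order (possibly shrinking $\varepsilon$ to $\varepsilon/2$).

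The main obstacle I anticipate is the exponent bookkeeping in the $j=0,\ r\ge 2$ case: I need every such product $\prod_i q^{n(1-\alpha_i)(1-R-\frac1n H_{\alpha_i}(W))}$ to decay, which is exactly the concavity/unimodality analysis sketched in the ``Exponents Analysis'' subsection but now with the partition $\sum\alpha_i=\alpha$ ranging over the $\frac1q$-grid rather than integers. Because $f(x)=(1-x)(1-R-\frac1n H_x(W))$ is concave and the rate hypothesis guarantees $f(\alpha)\le -(\alpha-1)\varepsilon<0$, one still gets $\sum_i f(\alpha_i)\le \max\{f(\alpha),\ \text{boundary terms}\}$, but I must rule out a bad split where all $\alpha_i<1$ (which cannot happen once $r<\alpha$ forces some $\alpha_i > 1$ — wait, that is false for non-integer $\alpha$: e.g. $\alpha=2.5$, $r=3$, $\alpha_i=5/6$ each). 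So the real subtlety is that for real $\alpha$ a partition into pieces all $\le 1$ is possible, and then each factor $q^{n(1-\alpha_i)(1-R-\cdots)}$ may grow; the resolution is that for such a split $\sum_i(1-\alpha_i) = r-\alpha$ and $H_{\alpha_i}(W)\le H_0(W)=n\log q$ is too weak, so instead one must argue directly that $\prod_i q^{n(1-\alpha_i)(1-R-\frac1n H_{\alpha_i}(W))} = q^{n[(r-\alpha)(1-R) - \frac1n\sum_i(1-\alpha_i)H_{\alpha_i}(W)]}$ and use $\sum_i (1-\alpha_i)H_{\alpha_i}(W) \ge (r-\alpha)\,\frac{\sum_i(1-\alpha_i)H_{\alpha_i}}{r-\alpha}$ with a convexity estimate tying it back to $H_\alpha$ — this is where I expect to spend the most care, and it may require treating ``all pieces $\le 1$'' as a genuinely separate case bounded using the $\alpha\in(1,2)$ Jensen argument of Theorem~\ref{theorem:smoothrandomlinearcode_1<Alpha<2} applied blockwise.
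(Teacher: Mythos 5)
Your high-level plan (extended averaging lemma, split on rank $r-j$, exponent bookkeeping) matches the paper's, but there is a genuine gap at the step you yourself flag as the main obstacle, and you do not resolve it.

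The problematic step is the proposed ``rational-exponent analogue of the binomial theorem,'' i.e.\ bounding $\left(\sum_i t_i\right)^\alpha$ by a sum over partitions $\alpha_1+\cdots+\alpha_r=\alpha$ with $\alpha_i\in\tfrac{\mathbb{Z}}{q}$. Two issues: first, such an inequality with a constant independent of the number of summands (which is $q^k$, exponentially large) is asserted but not established, and it is not a direct consequence of iterating a two-variable bound; second, and more importantly, once you allow rational pieces $\alpha_i<1$, the full-rank $j=0$ factorization $\prod_i q^{(1-\alpha_i)H_{\alpha_i}(W)}$ contains growing factors, and your proposed fix (``a convexity estimate tying it back to $H_\alpha$'' or ``blockwise Jensen'') is a hope rather than an argument. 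The concavity of $f(x)=(1-x)(1-R-\tfrac1n H_x(W))$ by itself does not rule out $\sum_i f(\alpha_i)>0$ for a partition with all $\alpha_i<1$.

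The paper sidesteps this cleanly by never producing sub-unit pieces. It writes $\alpha=\lfloor\alpha\rfloor+\{\alpha\}$ and factorizes $\left(\sum W\right)^\alpha=\left(\sum W\right)^{\lfloor\alpha\rfloor}\cdot\left(\sum W\right)^{\{\alpha\}}$. The integer power is expanded via the multinomial so every exponent $\alpha_i\geq 1$. The remaining fractional power $\left(\sum W\right)^{\{\alpha\}}$ is split by whether the summation index lies in $\mathrm{span}\{\mathbf a_1,\dots,\mathbf a_r\}$, using $(\sum x_i)^{\{\alpha\}}\leq\sum x_i^{\{\alpha\}}$. In the in-span case $(i)$, the single extra $W^{\{\alpha\}}$ factor is absorbed onto one of the integer $\alpha_j$'s by the rearrangement inequality (Lemma~\ref{lemma:Rearrangement_inequality}), keeping every effective exponent $\geq 1$ and summing to $\alpha$; Lemma~\ref{lemma:Each_Component_Coverge_SmoothingLemma} then applies. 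In the out-of-span case $(ii)$, Jensen (valid since $\{\alpha\}<1$) combined with a conditional averaging lemma over codes containing $\mathrm{span}\{F(\mathbf a_1),\dots,F(\mathbf a_r)\}$ factors the whole fractional piece out as $\left(\tfrac{q^{k-r}-1}{q^{n-r}-1}\right)^{\{\alpha\}}\leq q^{(k-n)\{\alpha\}}$, which exactly converts the prefactor $\tfrac{q^{n(\alpha-1)}}{q^{k\alpha}}$ into $\tfrac{q^{n(\lfloor\alpha\rfloor-1)}}{q^{k\lfloor\alpha\rfloor}}$ and reduces $(ii)$ to Theorem~\ref{theorem:smoothLinearCode_Alpha_N}. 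There is thus no need for a rational approximation $\alpha^{(q)}\downarrow\alpha$ or a separate ``all pieces $\leq1$'' case; the structure of the decomposition prevents it from arising. Your proposal, as written, would need this span-based split (or an equivalent mechanism) to be complete.
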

\begin{proof}
   Denote ${\lfloor \alpha \rfloor}$ the largest integer no larger than $\alpha$, and $\{\alpha\}$ the difference between $\alpha$ and ${\lfloor \alpha \rfloor}$, i.e.  $\alpha = {\lfloor \alpha \rfloor} + \{ \alpha\}$. 
    \begin{align*}
        &\mathbb{E}_{F\sim \mathcal{B}}q^{(\alpha-1) D_\alpha(U_{\Lambda(\mathbf{a})}+N||U_{\mathbb{F}_q^n})}\nonumber\\
        &= \frac{q^{n(\alpha-1)}}{q^{k\alpha}}\sum_{y\in \mathbb{F}_q^n}   \mathbb{E}_{F\sim \mathcal{B}} \left ( \sum_{\mathbf{a}\in \mathbb{F}_q^k} W_{F(\mathbf{a})}(\mathbf{y}) \right )^{\alpha}\\
        &= \frac{q^{n(\alpha-1)}}{q^{k\alpha}}\sum_{y\in \mathbb{F}_q^n}   \mathbb{E}_{F\sim \mathcal{B}} \left ( \sum_{\mathbf{a}\in \mathbb{F}_q^k} W_{F(\mathbf{a})}(\mathbf{y}) \right )^{\lfloor \alpha \rfloor}\left ( \sum_{\mathbf{a}\in \mathbb{F}_q^k} W_{F(\mathbf{a})}(\mathbf{y}) \right )^{\{\alpha\} }\\
        &\leq  \frac{q^{n(\alpha-1)}}{q^{k\alpha}}\sum_{y\in \mathbb{F}_q^n}  \mathbb{E}_{F\sim \mathcal{B}}  \sum_{r=1}^{ \lfloor \alpha \rfloor }  \sum_{\alpha_1+...+\alpha_r= 
      \lfloor \alpha \rfloor  }\sum_{\substack{\{\mathbf{a}_1,\mathbf{a}_2,...,\mathbf{a}_r\}\nonumber\\ \subseteq  \mathbb{F}_q^k / \{\mathbf{0}\}}}    \prod_{i=1}^r W_{F(\mathbf{a}_i)}(\mathbf{y}) ^{\alpha_{i}}\\ &\hspace{3cm}\cdot\left ( \sum_{\mathbf{a}\in \text{span}\{\mathbf{a}_1,\mathbf{a}_2,...,\mathbf{a}_r\}} W_{F(\mathbf{a})}(\mathbf{y}) + \sum_{\mathbf{a}\notin  \text{span}\{\mathbf{a}_1,\mathbf{a}_2,...,\mathbf{a}_r\}} W_{F(\mathbf{a})}(\mathbf{y}) \right )^{\{\alpha\}} \\ \intertext{ By using $(\sum x_i)^{\{\alpha\}}\leq \sum x_i^{\{\alpha\}}$ since $\{\alpha\}\leq 1$ } 
      &\leq  \frac{q^{n(\alpha-1)}}{q^{k\alpha}}\sum_{y\in \mathbb{F}_q^n}  \mathbb{E}_{F\sim \mathcal{B}}  \sum_{r=1}^{ \lfloor \alpha \rfloor }  \sum_{\alpha_1+...+\alpha_r= 
      \lfloor \alpha \rfloor  }\sum_{\substack{\{\mathbf{a}_1,\mathbf{a}_2,...,\mathbf{a}_r\}\nonumber\\ \subseteq  \mathbb{F}_q^k / \{\mathbf{0}\}}}    \prod_{i=1}^r W_{F(\mathbf{a}_i)}(\mathbf{y}) ^{\alpha_{i}}\\ &\hspace{3cm}\cdot \left( \sum_{\mathbf{a}\in \text{span}\{\mathbf{a}_1,\mathbf{a}_2,...,\mathbf{a}_r\}} W_{F(\mathbf{a})}(\mathbf{y})^{\{\alpha\}} + \left (\sum_{\mathbf{a}\notin  \text{span}\{\mathbf{a}_1,\mathbf{a}_2,...,\mathbf{a}_r\}} W_{F(\mathbf{a})}(\mathbf{y}) \right )^{\{\alpha\}} \right )\\
      &\leq  \underbrace{\frac{q^{n(\alpha-1)}}{q^{k\alpha}}\sum_{y\in \mathbb{F}_q^n}  \mathbb{E}_{F\sim \mathcal{B}}  \sum_{r=1}^{ \lfloor \alpha \rfloor }  \sum_{\alpha_1+...+\alpha_r= 
      \lfloor \alpha \rfloor  }\sum_{\substack{\{\mathbf{a}_1,\mathbf{a}_2,...,\mathbf{a}_r\}\nonumber\\ \subseteq  \mathbb{F}_q^k / \{\mathbf{0}\}}}    \prod_{i=1}^r W_{F(\mathbf{a}_i)}(\mathbf{y}) ^{\alpha_{i}}\sum_{\mathbf{a}\in \text{span}\{\mathbf{a}_1,\mathbf{a}_2,...,\mathbf{a}_r\}} W_{F(\mathbf{a})}(\mathbf{y})^{\{\alpha\}}}_{(i)} \\ & +  \underbrace{\frac{q^{n(\alpha-1)}}{q^{k\alpha}}\sum_{y\in \mathbb{F}_q^n}  \mathbb{E}_{F\sim \mathcal{B}}  \sum_{r=1}^{ \lfloor \alpha \rfloor }  \sum_{\alpha_1+...+\alpha_r= 
      \lfloor \alpha \rfloor  }\sum_{\substack{\{\mathbf{a}_1,\mathbf{a}_2,...,\mathbf{a}_r\}\nonumber\\ \subseteq  \mathbb{F}_q^k / \{\mathbf{0}\}}}    \prod_{i=1}^r W_{F(\mathbf{a}_i)}(\mathbf{y}) ^{\alpha_{i}}\left (\sum_{\mathbf{a}\notin  \text{span}\{\mathbf{a}_1,\mathbf{a}_2,...,\mathbf{a}_r\}} W_{F(\mathbf{a})}(\mathbf{y}) \right )^{\{\alpha\}}}_{(ii)}. \\
    \end{align*}
\end{proof}
Now let's compute $(i)$ and $(ii)$ separately.

In $(i)$, $\mathbf{a}$ can be expressed by linear combination of $\{\mathbf{a}_1,\mathbf{a}_2,...,\mathbf{a}_r\}$, and thus there exists some $\mathbf{a}_j$ s.t. its corresponding coefficients is non zero. Thus by Rearangement Inequality, it can be derived
$$ \sum_{\mathbf{a}_j}W_{F(\mathbf{a}_j)}(\mathbf{y})^{\alpha_{j}}W_{F(\mathbf{a})}(\mathbf{y}) ^{\{\alpha\}}\leq  \sum_{\mathbf{a}_j}W_{F(\mathbf{a}_j)}(\mathbf{y})^{\alpha_{j}+ \{\alpha\}}.$$ And $(i)$ can be computed as  
   \begin{align*}
     (i)&\leq \sum_{r=1}^{ \lfloor \alpha \rfloor }  \sum_{\alpha_1+...+\alpha_r 
      =\lfloor \alpha \rfloor  }\sum_{\mathbf{a}} \frac{q^{n(\alpha-1)}}{q^{k\alpha}}\sum_{y\in \mathbb{F}_q^n}\mathbb{E}_{F\sim \mathcal{B}} \sum_{\substack{\{\mathbf{a}_1,\mathbf{a}_2,...,\mathbf{a}_r\}\nonumber\\ \subseteq  \mathbb{F}_q^k / \{\mathbf{0}\}}}    \prod_{i=1}^r W_{F(\mathbf{a}_i)}(\mathbf{y}) ^{\alpha_{i}'}\\
      &\leq \sum_{r=1}^{ \lfloor \alpha \rfloor }  \sum_{\alpha_1+...+\alpha_r=
      \lfloor \alpha \rfloor  }\sum_{\mathbf{a}} O(q^{-\varepsilon n})\\
      &= O(q^{-\varepsilon n}).
    \end{align*}
    In the first line $\alpha_i'$'s satisfy $\sum_{i=1}^r\alpha_i'=\alpha$, and the second line is achieved by similar techniques from proof in Theorem \ref{theorem:smoothLinearCode_Alpha_N}.

As for $(ii)$, averaging lemma of linear subspaces $(\text{span}\{\mathbf{a}_1, \ldots, \mathbf{a}_r\})^\perp
$ needs to be considered for random linear code encoder $F|F(\mathbf{a}_1), ...,F(\mathbf{a}_r)$. It can be easily derived that all linear codes that contain subspace $\text{span}\{\mathbf{a}_1, \ldots, \mathbf{a}_r\}$ is uniform over all subspaces of $(\text{span}\{\mathbf{a}_1, \ldots, \mathbf{a}_r\})^\perp$. Thus it can be derived that,
\begin{align*}
   &  \mathbb{E}_{F|F(\mathbf{a}_1), ...,F(\mathbf{a}_r)\sim \mathcal{B}'} \left (\sum_{\mathbf{a}\notin  \text{span}\{\mathbf{a}_1,\mathbf{a}_2,...,\mathbf{a}_r\}} W_{F(\mathbf{a})}(\mathbf{y}) \right )^{\{\alpha\}}\\
   &\leq  \left (\mathbb{E}_{F|F(\mathbf{a}_1), ...,F(\mathbf{a}_r)\sim \mathcal{B}'}\sum_{\mathbf{a}\notin  \text{span}\{\mathbf{a}_1,\mathbf{a}_2,...,\mathbf{a}_r\}} W_{F(\mathbf{a})}(\mathbf{y}) \right )^{\{\alpha\}}\\
    &= \left (\frac{q^{k-r}-1}{q^{n-r}-1}\sum_{\mathbf{c}\notin  \text{span}\{F(\mathbf{a}_1),F(\mathbf{a}_2),...,F(\mathbf{a}_r)\}} W_{\mathbf{c}}(\mathbf{y})\right )^{\{\alpha\}}\\
   &\leq  \left (\frac{q^{k-r}-1}{q^{n-r}-1}\right)^{\{\alpha\}}\leq \frac{q^{k\{\alpha\}}}{q^{n\{\alpha\}}}.
\end{align*}
With the subspaces averaging lemma, 
\begin{align*}
   &  (ii)\\
   &\leq  \frac{q^{n(\alpha-1)}}{q^{k\alpha}}\sum_{y\in \mathbb{F}_q^n}    \sum_{r=1}^{ \lfloor \alpha \rfloor }  \sum_{\alpha_1+...+\alpha_r= 
   \lfloor \alpha \rfloor  }\sum_{j} (\frac{q^k - 1}{q^n - 1})^{r-j}\cdot\\ &\vspace{3cm}  \sum_{\substack{\{\mathbf{c}_1,\mathbf{c}_2,...,\mathbf{c}_r\} \subseteq \mathbb{F}_q^n / \{\mathbf{0}\}\\rank\{\mathbf{c}_1,\mathbf{c}_2,...,\mathbf{c}_r\}=r-j}}\prod_{i=1}^{r}W_{\mathbf{c}_i}(\mathbf{y}) ^{\alpha_{i}}\mathbb{E}_{F|\mathbf{c}_1, ...,\mathbf{c}_r\sim \mathcal{B}'} \left (\sum_{\mathbf{a}\notin  \text{span}\{\mathbf{a}_1,\mathbf{a}_2,...,\mathbf{a}_r\}} W_{F(\mathbf{a})}(\mathbf{y}) \right )^{\{\alpha\}}\\ \intertext{ by using spacespace averaging lemma }
   &\leq    \frac{q^{n(\alpha-1)}}{q^{k\alpha}}\sum_{y\in \mathbb{F}_q^n}    \sum_{r=1}^{ \lfloor \alpha \rfloor }  \sum_{\alpha_1+...+\alpha_r= 
   \lfloor \alpha \rfloor  }\sum_{j} (\frac{q^k - 1}{q^n - 1})^{r-j}  \sum_{\substack{\{\mathbf{c}_1,\mathbf{c}_2,...,\mathbf{c}_r\} \subseteq \mathbb{F}_q^n / \{\mathbf{0}\}\\rank\{\mathbf{c}_1,\mathbf{c}_2,...,\mathbf{c}_r\}=r-j}}\prod_{i=1}^{r}W_{\mathbf{c}_i}(\mathbf{y}) ^{\alpha_{i}}\frac{q^{k\{\alpha\}}}{q^{n\{\alpha\}}} \\ 
   &=\frac{q^{n(\lfloor \alpha \rfloor-1)}}{q^{k\lfloor \alpha \rfloor}}\sum_{y\in \mathbb{F}_q^n}    \sum_{r=1}^{ \lfloor \alpha \rfloor }  \sum_{\alpha_1+...+\alpha_r= 
   \lfloor \alpha \rfloor  }\sum_{j} (\frac{q^k - 1}{q^n - 1})^{r-j}  \sum_{\substack{\{\mathbf{c}_1,\mathbf{c}_2,...,\mathbf{c}_r\} \subseteq \mathbb{F}_q^n / \{\mathbf{0}\}\\rank\{\mathbf{c}_1,\mathbf{c}_2,...,\mathbf{c}_r\}=r-j}}\prod_{i=1}^{r}W_{\mathbf{c}_i}(\mathbf{y}) ^{\alpha_{i}} \\ \intertext{ By Theorem \ref{theorem:smoothLinearCode_Alpha_N}, we derive}
   &  \leq O(q^{-\varepsilon n})+1.
\end{align*}

\section{Random Self-Dual Code Smoothing}
\label{sec:Random_Self_Dual_Code_Smoothing}

Note that self dual code has rate $R=0.5$, and is sure to have whole one vector $\mathbf{1}$. Another property is that self dual codes have been proved to be almost balanced in \cite[Corollary~2.3, 2.4]{macwilliams1972good}. 

\begin{theorem}[\cite{macwilliams1972good}]
     Let $\mathcal{B}$ denote the set of $(n=2t, k=t)$ self-dual codes in which the weight of every codeword is divisible by $4$.
    
    The number of codes in $\mathcal{B}$ is
    \[
    (2^{t -2} + 1)(2^{t - 3} + 1) \dots (2 + 1)2.
    \]
    Recall that they all contain the vectors $\mathbf{0}$, $\mathbf{1}$.

    Let $\mathbf{v}$ be a vector other than 0, 1 with $w(\mathbf{v}) \equiv 0 \pmod{4}$. The number of codes in $\mathcal{B}$ which contain $\mathbf{v}$ is
    \[
    (2^{t -3} + 1)(2^{t - 4} + 1) \dots (2 + 1)2.
    \]
\end{theorem}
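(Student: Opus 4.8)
The plan is to reduce both counts to a classical enumeration in the geometry of quadratic forms over $\mathbb{F}_2$; I work under the (implicit) hypothesis $8\mid 2t$, which is exactly the condition for $\mathcal{B}$ to be nonempty. \emph{Step 1 (reduction to a quadric).} First I would note that every $C\in\mathcal{B}$ contains $\mathbf{0}$ and $\mathbf{1}$: all weights of $C$ are divisible by $4$, hence even, so $\mathbf{1}\in C^{\perp}=C$ and $C\subseteq\mathbf{1}^{\perp}$. On $\mathbf{1}^{\perp}$ put $q(\mathbf{x})=\tfrac12 w(\mathbf{x})\bmod 2$; from $w(\mathbf{x}+\mathbf{y})\equiv w(\mathbf{x})+w(\mathbf{y})-2\langle\mathbf{x},\mathbf{y}\rangle_{\mathbb{Z}}\pmod 4$ (integer inner product) one gets $q(\mathbf{x}+\mathbf{y})=q(\mathbf{x})+q(\mathbf{y})+\langle\mathbf{x},\mathbf{y}\rangle$, so $q$ is a quadratic form polarizing to the ambient $\mathbb{F}_2$-inner product. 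Since $\langle\mathbf{1},\mathbf{x}\rangle=0$ on $\mathbf{1}^{\perp}$ and $q(\mathbf{1})=t\bmod 2=0$ (as $4\mid t$), $q$ descends to a nondegenerate quadratic form $\bar q$ on the symplectic space $V=\mathbf{1}^{\perp}/\langle\mathbf{1}\rangle$ of dimension $2t-2$. Then $C\mapsto C/\langle\mathbf{1}\rangle$ is a bijection from $\mathcal{B}$ onto the set of maximal totally singular subspaces of $(V,\bar q)$: doubly-evenness is precisely the vanishing of $\bar q$ on $C/\langle\mathbf{1}\rangle$, and a totally singular (hence totally isotropic) subspace of dimension $t-1$ lifts to a self-orthogonal code of dimension $t=n/2$, i.e.\ a self-dual one. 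Under this correspondence a vector $\mathbf{v}$ as in the second part becomes a nonzero singular vector $\bar{\mathbf{v}}$, and the codes of $\mathcal{B}$ containing $\mathbf{v}$ become the maximal totally singular subspaces through $\bar{\mathbf{v}}$.

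\emph{Step 2 (the type of $\bar q$).} Next I would identify $(V,\bar q)$ as the hyperbolic (plus-type) space $O^{+}_{2(t-1)}(2)$. Taking the $(t/4)$-fold direct sum of the extended Hamming $[8,4,4]$ code produces an explicit member of $\mathcal{B}$, hence a totally singular subspace of $V$ of dimension $t-1=\tfrac12\dim V$; so the Witt index of $\bar q$ is maximal and $\bar q$ is of plus type. I expect this step — tying the mod-$4$ weight divisibility and the length congruence $8\mid n$ to the geometric type — to be the crux: if instead $n\equiv 4\pmod 8$, then $\bar q$ would be of minus type, its Witt index would drop by one, and $\mathcal{B}$ would be empty, so the stated formula genuinely relies on $8\mid n$; absent the explicit example one would instead have to compute the Arf invariant of $\bar q$ directly as a function of $n$.

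\emph{Step 3 (the enumeration).} Finally I would invoke the standard count of generators of a hyperbolic quadric. Let $f(m)$ denote the number of maximal totally singular subspaces of $O^{+}_{2m}(2)$; each has dimension $m$ and $2^{m}-1$ nonzero vectors, all singular. By Witt's extension theorem $O^{+}_{2m}(2)$ is transitive on nonzero singular vectors, so every such vector lies in the same number of maximal totally singular subspaces; passing to the residue $x^{\perp}/\langle x\rangle$ (again plus-type, of dimension $2(m-1)$) identifies that number with $f(m-1)$. Double-counting incident (nonzero singular vector, maximal totally singular subspace) pairs, and using that $O^{+}_{2m}(2)$ has $(2^{m}-1)(2^{m-1}+1)$ nonzero singular vectors, gives $f(m)(2^{m}-1)=(2^{m}-1)(2^{m-1}+1)\,f(m-1)$, hence $f(m)=(2^{m-1}+1)f(m-1)$ with $f(1)=2$, so $f(m)=\prod_{i=1}^{m}(2^{i-1}+1)$. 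Taking $m=t-1$ gives $|\mathcal{B}|=f(t-1)=(2^{t-2}+1)(2^{t-3}+1)\cdots(2+1)\,2$, and the residue identification gives that the number of $C\in\mathcal{B}$ through a fixed admissible $\mathbf{v}$ is $f(t-2)=(2^{t-3}+1)(2^{t-4}+1)\cdots(2+1)\,2$, as claimed. The one combinatorial input I would verify along the way is the singular-vector count $(2^{m}-1)(2^{m-1}+1)$, equivalently $\sum_{j\equiv 0\,(4)}\binom{2t}{j}=2^{t-1}(2^{t-1}+1)$ for $4\mid t$ (a short character-sum computation).
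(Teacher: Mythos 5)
The paper does not prove this statement; it cites it verbatim from MacWilliams--Sloane--Thompson \cite{macwilliams1972good}, so there is no in-paper argument to compare against. What you have supplied is a self-contained proof, and it is correct. Your reduction is the standard (and, I think, cleanest) one: on $\mathbf{1}^{\perp}$ the map $q(\mathbf{x})=\tfrac12 w(\mathbf{x})\bmod 2$ is a quadratic form whose polarization is the ambient inner product; since $8\mid n$ one has $q(\mathbf{1})=0$, so $q$ descends to a nondegenerate form $\bar q$ on $V=\mathbf{1}^{\perp}/\langle\mathbf{1}\rangle\cong\mathbb{F}_2^{2(t-1)}$, and $C\mapsto C/\langle\mathbf{1}\rangle$ identifies doubly-even self-dual codes with maximal totally singular subspaces of $(V,\bar q)$. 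The extended-Hamming example pins down the plus type, and then the recursion $f(m)=(2^{m-1}+1)f(m-1)$, $f(1)=2$, obtained by Witt transitivity plus double-counting against the singular-vector count $(2^m-1)(2^{m-1}+1)$, gives both $|\mathcal{B}|=f(t-1)$ and the ``through a fixed admissible $\mathbf{v}$'' count $f(t-2)$. I checked the arithmetic: $(2^m-1)(2^{m-1}+1)=2^{2m-1}+2^{m-1}-1$, which agrees with the number of nonzero zeros of a plus-type form on $\mathbb{F}_2^{2m}$, and the character-sum identity $\sum_{j\equiv 0\,(4)}\binom{2t}{j}=2^{t-1}(2^{t-1}+1)$ holds for $4\mid t$ and matches after quotienting by $\langle\mathbf{1}\rangle$.

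Two very minor remarks. First, for $q$ to descend you only need $t$ even (so $q(\mathbf{1})=0$); the stronger hypothesis $4\mid t$ is what forces the plus type. You invoke $4\mid t$ at the descent step, which is harmless but slightly more than needed there, and you do flag the dichotomy correctly afterwards. Second, the original MacWilliams--Sloane--Thompson argument proceeds by a direct recursive count of orthogonal bases (essentially building the generator matrix row by row and counting the number of admissible next rows), rather than by passing to the quadric $V$ and invoking Witt's theorem; your geometric route is arguably more conceptual and makes the residue step (which yields the ``codes through $\mathbf{v}$'' count for free) completely transparent. Either way the product formula is the same, and your argument is a valid replacement for the citation.
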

Thus the averaging lemma for almost balanced set of dual codes become as follows.

\begin{lemma}[Averaging Lemma for Self-Dual Codes]
    For balanced set $\mathcal{B}$ containing encoders of $(n=2t, k=t)$ self-dual codes in which the weight of every codeword is divisible by $4$, and any function $f(\cdot)$, there is an identity that 
    $$\frac{1}{|\mathcal{B}|} \sum_{F\in \mathcal{B}} \sum_{\mathbf{a}\in \mathbb{F}_2^k/\{\mathbf{0}\}} f(F(\mathbf{a})) = \frac{1}{2^{t-2} + 1} \sum_{\substack{w(\mathbf{v}) \equiv 0 \pmod{4}\\ \mathbf{v}\neq 0, \mathbf{v}\neq 1}}  f(\mathbf{v}).   $$
    Or equivalently,
    $$ \mathbb{E}_{F\sim \mathcal{B}}\sum_{\mathbf{a}\in \mathbb{F}_2^k/\{\mathbf{0}, \mathbf{1}^{-1}\}} f(F(\mathbf{a})) = \frac{1}{2^{t-2} + 1} \sum_{\substack{w(\mathbf{v}) \equiv 0 \pmod{4}\\ \mathbf{v}\neq 0, \mathbf{v}\neq 1}}  f(\mathbf{v}).  $$
\end{lemma}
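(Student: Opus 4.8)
\emph{Proof proposal.} The plan is to prove the identity by a single double-counting (incidence-counting) argument over the pairs $(F,\mathbf v)$ with $\mathbf v\in C_F$, exactly in the spirit of Loeliger's averaging lemma for all linear codes, but now weighting each vector $\mathbf v$ by the number of doubly-even self-dual codes that contain it; the needed count is supplied verbatim by the preceding theorem from \cite{macwilliams1972good}.

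First I would fix conventions. Each $F\in\mathcal B$ is an encoder of a doubly-even self-dual $(2t,t)$ code $C_F$, i.e.\ a linear bijection $\mathbb F_2^k\to C_F$ with $k=t$, so every nonzero codeword of $C_F$ has a unique $F$-preimage. By the preceding theorem every $C_F$ contains $\mathbf 0$ and $\mathbf 1$, so deleting the indices $\mathbf a=\mathbf 0$ and $\mathbf a=F^{-1}(\mathbf 1)$ from the inner sum gives $\sum_{\mathbf a\in\mathbb F_2^k\setminus\{\mathbf 0,\,F^{-1}(\mathbf 1)\}}f(F(\mathbf a))=\sum_{\mathbf v\in C_F\setminus\{\mathbf 0,\mathbf 1\}}f(\mathbf v)$. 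This is why the all-ones codeword must be excluded on both sides; the first displayed form of the lemma should be read with $\mathbf a$ ranging over $\mathbb F_2^k\setminus\{\mathbf 0,F^{-1}(\mathbf 1)\}$, since otherwise $\mathbf 1\in C_F$ for all $F$ contributes the fixed extra term $f(\mathbf 1)$.

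Next I would interchange the order of summation. Setting $N(\mathbf v)=|\{F\in\mathcal B:\mathbf v\in C_F\}|$,
\[
\sum_{F\in\mathcal B}\ \sum_{\mathbf v\in C_F\setminus\{\mathbf 0,\mathbf 1\}}f(\mathbf v)\;=\;\sum_{\mathbf v\in\mathbb F_2^{2t}\setminus\{\mathbf 0,\mathbf 1\}}N(\mathbf v)\,f(\mathbf v).
\]
Since every codeword of every code in $\mathcal B$ has weight divisible by $4$, $N(\mathbf v)=0$ whenever $w(\mathbf v)\not\equiv 0\pmod 4$, so only the doubly-even vectors survive; for those the preceding theorem gives $N(\mathbf v)=(2^{t-3}+1)(2^{t-4}+1)\cdots(2+1)\cdot 2$, independent of $\mathbf v$, together with $|\mathcal B|=(2^{t-2}+1)(2^{t-3}+1)\cdots(2+1)\cdot 2$. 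Dividing these two products collapses the telescoping factors and leaves $N(\mathbf v)/|\mathcal B|=1/(2^{t-2}+1)$ uniformly in $\mathbf v$. Dividing the displayed equation by $|\mathcal B|$ then yields the claimed identity, and the $\mathbb E_{F\sim\mathcal B}$ statement is just the rewriting of $\frac1{|\mathcal B|}\sum_{F\in\mathcal B}$.

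I do not expect a genuine obstacle here; the argument is the same bookkeeping as in the classical linear-code averaging lemma, and the only two points that deserve care are exactly the ones flagged above: (i) being consistent about the all-ones codeword $\mathbf 1$, which lies in \emph{every} code of $\mathcal B$ and so must be treated separately from the generic codewords; and (ii) checking that the count is insensitive to whether $\mathcal B$ is literally a set of codes or a set of encoders, which is harmless because passing from codes to encoders multiplies both $|\mathcal B|$ and $N(\mathbf v)$ by the same factor $|\mathrm{GL}_t(\mathbb F_2)|$ while each encoder still contributes a unique preimage of $\mathbf v$.
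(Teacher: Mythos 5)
Your proof is correct and is exactly the double-counting argument the paper leaves implicit: the paper supplies only the MacWilliams--Sloane--Thompson incidence counts and then says ``thus the averaging lemma ... become as follows,'' and your argument is the obvious way to fill that gap, including the observation that $N(\mathbf v)/|\mathcal B|$ telescopes to $1/(2^{t-2}+1)$ and vanishes for vectors of weight not divisible by $4$. You also correctly flag a real (if minor) inconsistency in the statement: the first displayed sum should range over $\mathbf a\in\mathbb F_2^k\setminus\{\mathbf 0,F^{-1}(\mathbf 1)\}$ exactly as in the second display, since $\mathbf 1$ lies in every code of $\mathcal B$ and would otherwise contribute an extra $f(\mathbf 1)$ on the left; your remark about encoders versus codes (both counts scale by $|\mathrm{GL}_t(\mathbb F_2)|$) is also the right thing to say.
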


\begin{theorem}[Smoothing Random Self-Dual Code for KL Divergence]
    when $W$ is an additive noise channel, and rate $R$ satisfies 
    $$ R=0.5 > 1 - \frac{H(W)}{n}, $$ then 
    $$\mathbb{E}_{F\sim \mathcal{B}} D(U_{F}+N||U_{\mathbb{F}_2^n})\rightarrow 0$$ as $n\rightarrow \infty$. 
\end{theorem}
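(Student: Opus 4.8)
The plan is to adapt the classical channel-resolvability (soft-covering) argument to KL divergence, using the self-dual averaging lemma above as the substitute for independence of codewords. Write $P_F(\mathbf{y})=2^{-t}\sum_{\mathbf{a}\in\mathbb{F}_2^t}W(\mathbf{y}-F(\mathbf{a}))$ for the output law of $U_F+N$, so that $D(U_F+N\,\|\,U_{\mathbb{F}_2^n})=\sum_{\mathbf{y}}P_F(\mathbf{y})\log\!\big(2^nP_F(\mathbf{y})\big)$. Since $x\mapsto x\log(2^nx)$ is convex, one cannot move $\mathbb{E}_F$ through the logarithm directly; instead I would expand by linearity over the $2^t$ messages, $\mathbb{E}_FD=2^{-t}\sum_{\mathbf{a}}\sum_{\mathbf{y}}\mathbb{E}_F\!\big[W(\mathbf{y}-F(\mathbf{a}))\log(2^nP_F(\mathbf{y}))\big]$, then condition on the value $F(\mathbf{a})=\mathbf{c}$ and apply Jensen to the residual randomness of $F$ (concavity of $\log$): $\mathbb{E}_{F\mid F(\mathbf{a})=\mathbf{c}}[\log(2^nP_F(\mathbf{y}))]\le\log\!\big(2^n\,\mathbb{E}_{F\mid F(\mathbf{a})=\mathbf{c}}[P_F(\mathbf{y})]\big)$.

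Next I would prove a conditional version of the self-dual averaging lemma by iterating the count of \cite{macwilliams1972good} quoted above: conditioned on $F(\mathbf{a})=\mathbf{c}$, the code deterministically contains the coset $\{\mathbf{0},\mathbf{1},\mathbf{c},\mathbf{c}+\mathbf{1}\}$, while each further admissible (doubly-even, self-orthogonal) vector $\mathbf{v}$ lies in $C$ with probability $(2^{t-3}+1)^{-1}$, giving
\[
2^n\,\mathbb{E}_{F\mid F(\mathbf{a})=\mathbf{c}}[P_F(\mathbf{y})]\ \le\ 2^t\big(W(\mathbf{y}-\mathbf{c})+W(\mathbf{y})+W(\mathbf{y}-\mathbf{1})+W(\mathbf{y}-\mathbf{c}-\mathbf{1})\big)+\tfrac{2^t}{2^{t-3}+1}\sum_{\mathbf{v}\ \mathrm{adm.}}W(\mathbf{y}-\mathbf{v}).
\]
Here $2^t/(2^{t-3}+1)\to 8$, while for a $W$-typical $\mathbf{y}$ the two constraints cutting out the admissible $\mathbf{v}$ — weight $\equiv 0\pmod{4}$ and $\langle\mathbf{v},\mathbf{c}\rangle=0$ — are asymptotically independent events of $W$-probabilities $\tfrac14$ and $\tfrac12$, so the self-interference term is $1+o(1)$. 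Substituting this back, applying the ordinary self-dual averaging lemma to the leftover dependence on $F(\mathbf{a})$, and changing variables $\mathbf{z}=\mathbf{y}-F(\mathbf{a})$ (the $\mathbf{a}=\mathbf{0}$ and $\mathbf{a}=\mathbf{1}^{-1}$ messages are single-message terms of size $O(2^{-t})$ and need no work), one is reduced to showing that
\[
\mathbb{E}_{\mathbf{c}}\ \mathbb{E}_{\mathbf{z}\sim W}\Big[\log\!\big(1+o(1)+2^tW(\mathbf{z})+2^tW(\mathbf{z}+\mathbf{1})+2^tW(\mathbf{z}+\mathbf{c})+2^tW(\mathbf{z}+\mathbf{c}+\mathbf{1})\big)\Big]\to 0,
\]
where $\mathbf{c}$ is uniform over the admissible doubly-even vectors.

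To finish I would split $\mathbf{z}$ into the $W$-typical set and its complement. First, using $\log(1+A+B)\le\log(1+A)+B$, peel off the $\mathbf{c}$-dependent terms $B=2^t(W(\mathbf{z}+\mathbf{c})+W(\mathbf{z}+\mathbf{c}+\mathbf{1}))$; averaging $B$ over $\mathbf{c}$ is $2^{-\Omega(n)}$ since there are $\Theta(2^{n-2})$ admissible $\mathbf{c}$ while $\sum_{\mathbf{c}}W(\mathbf{z}+\mathbf{c})\le 1$. For $W$-typical $\mathbf{z}$, $2^tW(\mathbf{z})\le 2^{\,t-H(W)+o(n)}=2^{-\Omega(n)}$ thanks to the \emph{strict} inequality $R=\tfrac12>1-\tfrac{H(W)}{n}$; and $2^tW(\mathbf{z}+\mathbf{1})=2^{-\Omega(n)}$ as well, because $H(W)>n/2$ forces the complement $\mathbf{z}+\mathbf{1}$ of a typical $\mathbf{z}$ to carry exponentially small $W$-mass (for a Hamming ball it carries none); hence the logarithm is $o(1)$ on the typical set. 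On the atypical set the logarithm is crudely at most $n+O(1)$, weighted by $W(\mathrm{atypical})=2^{-\Omega(n)}$, so it vanishes too. Summing the contributions gives $\mathbb{E}_{F\sim\mathcal{B}}D(U_F+N\,\|\,U_{\mathbb{F}_2^n})=o(1)$.

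The main obstacle is forcing the self-interference constant to equal $1+o(1)$ and not merely some absolute constant $c>1$: a bound with $c>1$ would leave a surviving $\log c>0$ inside the logarithm and break the convergence. Making it $1+o(1)$ requires (i) the exact conditional count of doubly-even self-dual codes containing a prescribed rank-$3$ admissible subspace, and (ii) a quantitative equidistribution statement, essentially $\Pr_{\mathbf{Z}\sim W}\!\big[w(\mathbf{y}\oplus\mathbf{Z})\equiv 0\pmod{4}\ \text{and}\ \langle\mathbf{y}\oplus\mathbf{Z},\mathbf{c}\rangle=0\big]=\tfrac18+o(1)$ uniformly over generic $\mathbf{y},\mathbf{c}$, which is where the concrete structure of the additive noise (Bernoulli coordinates, or a Hamming ball) together with standard anti-concentration of linear forms and of Hamming weight modulo $4$ enters. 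The always-present all-ones codeword $\mathbf{1}$ is the term closest to surviving — this is exactly the tension that obliges a balancedness assumption in the general random-linear-code reductions — but here it is automatically tamed by the margin $H(W)>n/2$ at rate $R=\tfrac12$.
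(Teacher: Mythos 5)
Your proposal takes a genuinely different route from the paper. The paper works with Rényi divergence of order $\alpha\in(1,2)$: it applies Jensen in the form $\mathbb{E}[X^{\alpha-1}]\le(\mathbb{E}X)^{\alpha-1}$, uses the self-dual averaging lemma once, applies $(x+y+z)^{\alpha-1}\le x^{\alpha-1}+y^{\alpha-1}+z^{\alpha-1}$, and then sends $t\to\infty$ followed by $\alpha\to1^{+}$. You instead work with KL directly, expand by linearity over messages, condition on $F(\mathbf{a})=\mathbf{c}$, apply Jensen via concavity of $\log$, and invoke a \emph{conditional} averaging lemma together with a typical-set decomposition. Both routes hinge on the same MacWilliams--Sloane counting, but you need the conditional version and a quantitative equidistribution statement, which the paper does not.

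The most valuable observation in your sketch is precisely the one you flag as the ``main obstacle'': that the self-interference constant must come out as $1+o(1)$ and not merely some constant $c>1$. This is exactly where the paper's own argument is lossy. The paper estimates $\frac{1}{2^{t-2}+1}\sum_{\mathrm{adm}}W_{\Lambda(\mathbf{a})+\mathbf{c}}(\mathbf{y})\le\frac{1}{2^{t-2}+1}$ by using $\sum_{\mathrm{adm}}W\le 1$, producing a third term $\left(\tfrac{2^t}{2^{t-2}+1}\right)^{\alpha-1}\to 4^{\alpha-1}$ as $t\to\infty$. For any fixed $\alpha>1$ this gives $2^{(\alpha-1)D_\alpha}\le o_t(1)+4^{\alpha-1}$, i.e.\ $D_\alpha\le 2+o_t(1)$, and since $D_{\mathrm{KL}}\le D_\alpha$ one only obtains $D_{\mathrm{KL}}\le 2$, not $D_{\mathrm{KL}}\to 0$; the paper's statement ``the limitation goes to $1$ as $t\to\infty$ then $\alpha\to1$'' is content-free because $2^{(\alpha-1)D_\alpha}\to 1$ automatically as $\alpha\to1$. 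Your finer estimate --- the admissible set has $W$-mass $\approx\tfrac14$ unconditionally (resp.\ $\approx\tfrac18$ after conditioning), which exactly cancels the factor $\tfrac{2^t}{2^{t-2}+1}\to 4$ (resp.\ $\tfrac{2^t}{2^{t-3}+1}\to 8$) --- is precisely what is needed to remove the residual constant; the same repair also saves the paper's Rényi route. Your two acknowledged gaps (the conditional count of doubly-even self-dual codes through a rank-$3$ self-orthogonal subspace, and the anti-concentration/equidistribution of $w(\mathbf{y}\oplus Z)\bmod 4$ jointly with $\langle\mathbf{y}\oplus Z,\mathbf{c}\rangle$ under the additive noise) are real work, but they are the right things to prove; the rest of your sketch (peeling off the $\mathbf{c}$-dependent terms with $\log(1+A+B)\le\log(1+A)+B$, handling $2^tW(\mathbf{z}+\mathbf{1})$ on the typical set via the strict margin $H(W)>t$, and the crude bound times $2^{-\Omega(n)}$ on the atypical set) is sound.
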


\begin{proof}

    Define the affine encoder $\Lambda: \mathbf{a}\rightarrow F(\mathbf{a}) + G$, where $F$ is an linear encoder in $\mathcal{B}$ and $G$ is a independent r.v. in $\mathbb{F}_2^n$. Affine encoder has the property that for $\mathbf{a}\neq \mathbf{a}'$, $\Lambda(\mathbf{a}')=F(\mathbf{a}-\mathbf{a}') + \Lambda(\mathbf{a})$. Noted that here $F(\mathbf{a}-\mathbf{a}')$ is independent with r.v. $\Lambda(\mathbf{a})$.
\begin{align*}
&2^{(\alpha-1) D_\alpha(U_{\Lambda(\mathbf{a})}+N||U_{\mathbb{F}_2^n})}=\sum_{\mathbf{y}\in \mathbb{F}_2^n}\frac{\left (\frac{1}{2^k} \sum_{\mathbf{a}\in \mathbb{F}_2^k} W_{\Lambda(\mathbf{a})}(\mathbf{y}) \right )^{\alpha}}{(\frac{1}{2^n})^{\alpha-1}}=\frac{2^{n(\alpha-1)}}{2^{k\alpha}}\sum_{y\in \mathbb{F}_2^n}  \left ( \sum_{\mathbf{a}\in \mathbb{F}_2^k} W_{\Lambda(\mathbf{a})}(\mathbf{y}) \right )^{\alpha}\\
&\mathbb{E}_\Lambda 2^{(\alpha-1) D_\alpha(U_{\Lambda(\mathbf{a})}+N||U_{\mathbb{F}_2^n})}\\
&=\frac{2^{n(\alpha-1)}}{2^{k\alpha}}\sum_{\mathbf{y}\in \mathbb{F}_2^n} \mathbb{E}_\Lambda \left ( \sum_{\mathbf{a}\in \mathbb{F}_2^k} W_{\Lambda(\mathbf{a})}(\mathbf{y}) \right )^{\alpha}\\
&=\frac{2^{n(\alpha-1)}}{2^{k\alpha}}\sum_{\mathbf{y}\in \mathbb{F}_2^n} \mathbb{E}_\Lambda \sum_{\mathbf{a}\in \mathbb{F}_2^k} W_{\Lambda(\mathbf{a})}(y)\left ( \sum_{\mathbf{a}'\in \mathbb{F}_2^k} W_{\Lambda(\mathbf{a}')}(y) \right )^{\alpha-1}\\
 \intertext{ By using Jensen's inequality for $\mathbb{E}[X^{\alpha-1}] \leq (\mathbb{E}[X])^{\alpha-1}, 1 < \alpha < 2$, we derive} 
&\leq\frac{2^{n(\alpha-1)}}{2^{k\alpha}}\sum_{\mathbf{y}\in \mathbb{F}_2^n}  \sum_{\mathbf{a}\in \mathbb{F}_2^k} \mathbb{E}_{\Lambda(\mathbf{a})}W_{\Lambda(\mathbf{a})}(\mathbf{y})\left ( W_{\Lambda(\mathbf{a})}+W_{\Lambda(\mathbf{a})+\mathbf{1}}(\mathbf{y})+\mathbb{E}_{F|\Lambda(\mathbf{a})} \sum_{\substack{\mathbf{a}'\in \mathbb{F}_2^k\\  F(\mathbf{a}'-\mathbf{a})\neq \mathbf{0}, \mathbf{1}}}   W_{\Lambda(\mathbf{a}')}(\mathbf{y}) \right )^{\alpha-1}\\
&=\frac{2^{n(\alpha-1)}}{2^{k\alpha}}\sum_{\mathbf{y}\in \mathbb{F}_2^n}  \sum_{\mathbf{a}\in \mathbb{F}_2^k} \mathbb{E}_{\Lambda(\mathbf{a})}W_{\Lambda(\mathbf{a})}(\mathbf{y})\left ( W_{\Lambda(\mathbf{a})}(\mathbf{y})+W_{\Lambda(\mathbf{a})+\mathbf{1}}(\mathbf{y}) +\mathbb{E}_{F|\Lambda(\mathbf{a})} \sum_{\substack{\mathbf{a}'\in \mathbb{F}_2^k\\  F(\mathbf{a}'-\mathbf{a})\neq \mathbf{0}, \mathbf{1}}}  W_{\Lambda(\mathbf{a})+F(\mathbf{a}'-\mathbf{a})}(\mathbf{y}) \right )^{\alpha-1}\\
\intertext{ By using the averaging lemma, let $\mathbf{x}=\mathbf{a}'-\mathbf{a}, f(F(\mathbf{x}))=W_{\Lambda(\mathbf{a})+F(\mathbf{a}'-\mathbf{a})}(\mathbf{y})$} 
&=\frac{2^{n(\alpha-1)}}{2^{k\alpha}}\sum_{\mathbf{y}\in \mathbb{F}_2^n}  \sum_{\mathbf{a}\in \mathbb{F}_2^k} \mathbb{E}_{\Lambda(\mathbf{a})}W_{\Lambda(\mathbf{a})}(\mathbf{y})\left ( W_{\Lambda(\mathbf{a})}(\mathbf{y}) + W_{\Lambda(\mathbf{a})+\mathbf{1}}(\mathbf{y})  +   \frac{1}{2^{t-2} + 1} \sum_{\substack{w(\mathbf{c}) \equiv 0 \pmod{4}\\ \mathbf{c}\neq 0, \mathbf{c}\neq 1}}  W_{\Lambda(\mathbf{a})+\mathbf{c}}(\mathbf{y})  \right )^{\alpha-1}\\
&\leq\frac{2^{n(\alpha-1)}}{2^{k\alpha}}\sum_{\mathbf{y}\in \mathbb{F}_2^n}  \sum_{\mathbf{a}\in \mathbb{F}_2^k} \mathbb{E}_{\Lambda(\mathbf{a})}W_{\Lambda(\mathbf{a})}(\mathbf{y})\left ( W_{\Lambda(\mathbf{a})}(\mathbf{y}) + W_{\Lambda(\mathbf{a})+\mathbf{1}}(\mathbf{y})  +  \frac{1}{2^{t-2} + 1}  \right )^{\alpha-1}\\
\intertext{ By using $(x+y+z)^{\alpha-1}\leq x^{\alpha-1}+y^{\alpha-1}+z^{\alpha-1}$ }
&\leq\frac{2^{n(\alpha-1)}}{2^{k\alpha}}\sum_{\mathbf{y}\in \mathbb{F}_2^n}  \sum_{\mathbf{a}\in \mathbb{F}_2^k} \mathbb{E}_{\Lambda(\mathbf{a})}W_{\Lambda(\mathbf{a})}(\mathbf{y})^{\alpha} + \frac{2^{n(\alpha-1)}}{2^{k\alpha}}\sum_{\mathbf{y}\in \mathbb{F}_2^n}  \sum_{\mathbf{a}\in \mathbb{F}_2^k} \mathbb{E}_{\Lambda(\mathbf{a})}W_{\Lambda(\mathbf{a})}(\mathbf{y})W_{\Lambda(\mathbf{a})+\mathbf{1}}(\mathbf{y})^{\alpha-1}      \\ &\hspace{2cm}+\frac{2^{n(\alpha-1)}}{2^{k\alpha}}\sum_{\mathbf{y}\in \mathbb{F}_2^n}  \sum_{\mathbf{a}\in \mathbb{F}_2^k} \mathbb{E}_{\Lambda(\mathbf{a})} (\frac{1}{2^{t-2} + 1})^{\alpha-1}W_{\Lambda(\mathbf{a})}(\mathbf{y})\\
&= 2^{2(\alpha-1) t(1-R-\frac{H_\alpha(W)}{n})} + 2^{2(\alpha-1)t(1-R-\frac{H'_\alpha(W)}{n})}  + \left(\frac{2^{2t}}{2^t(2^{t-2}+1)}\right)^{\alpha-1}
\end{align*}
    Here $$H(W)=\lim_{\alpha\rightarrow 1}H_\alpha(W),H'(W)=\lim_{\alpha\rightarrow 1}H'_\alpha(W),$$ 
    and 
    $$H'_\alpha(W)=\frac{1}{1-\alpha}\log\frac{1}{2^n}\sum_{\mathbf{y}\in \mathbb{F}_2^n, \mathbf{x}\in \mathbb{F}_2^n}  W(\mathbf{y}|\mathbf{x}) W(\mathbf{y}|\mathbf{x+1}) ^{\alpha-1}.$$ The limitation goes to $1$ when $1-R-\frac{H_\alpha(W)}{n}<0$ if we first let $t\rightarrow\infty$ and then $\alpha\rightarrow 1$. Notice that $1-R-\frac{H'_\alpha(W)}{n}<0$ due to the rearangement inequality that $H_\alpha(W)\leq H_\alpha(W')$.
    Since channel $W$ is regular \cite{delsarte1982algebraic}, we can remove the r.v. G, and the final result proven.
    
\end{proof}

\section{Random Quasi Cyclic Code Smoothing}
\label{sec:Random_Quasi_Cyclic_Code_Smoothing}

Denote $\mathcal{B}$ as the set of $(2t, t)$ quasi cyclic code with odd weight $a(x)$. It has been proven in \cite{Chen_Peterson_Weldon} that 
$|\mathcal{B}|=2^{t-1}-1$. And when $t\equiv \pm 3 \mod{8}$, and $t$ is a prime for which 2 is primitive, then each non-zero vector except all-one vector, belongs to exactly one code in $\mathcal{B}$. 
\begin{lemma}[Averaging Lemma for $(2t ,t)$ Quasi-Cyclic Codes]
    For balanced set $\mathcal{B}$ containing encoders of $(n=2t, k=t)$ quasi-cyclic codes in which $a(x)$ has odd weight, $t\equiv \pm 3 \mod{8}$, and $t$ is a prime for which 2 is primitive, and any function $f(\cdot)$, there is an identity that 
    $$\frac{1}{|\mathcal{B}|} \sum_{F\in \mathcal{B}} \sum_{\mathbf{a}\in \mathbb{F}_2^k/\{\mathbf{0, 1}\}} f(F(\mathbf{a})) =\frac{1}{2^{t-1} - 1} \sum_{\substack{w(\mathbf{v}) \equiv 0 \pmod{2}\\ \mathbf{v}\neq 0, \mathbf{v}\neq 1}}  f(\mathbf{v})  $$
    Or equivalently,
    $$ \mathbb{E}_{F\sim \mathcal{B}}\sum_{\mathbf{a}\in \mathbb{F}_2^k/\{\mathbf{0}, \mathbf{1}^{-1}\}} f(F(\mathbf{a})) = \frac{1}{2^{t-1} - 1} \sum_{\substack{w(\mathbf{v}) \equiv 0 \pmod{2}\\ \mathbf{v}\neq 0, \mathbf{v}\neq 1}}  f(\mathbf{v}).  $$
\end{lemma}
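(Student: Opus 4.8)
The plan is to reproduce, almost verbatim, the double-counting argument behind the averaging lemmas for linear and for self-dual codes, feeding in the Chen--Peterson--Weldon structural result quoted above in place of the earlier input. I will use two facts: first, $|\mathcal{B}| = 2^{t-1}-1$; second, under the stated hypotheses ($a(x)$ of odd weight, $t$ a prime for which $2$ is primitive, and $t \equiv \pm 3 \bmod 8$) every code $C_F$, $F \in \mathcal{B}$, contains $\mathbf{0}$ and $\mathbf{1}$, has only even-weight codewords, and is such that each even-weight vector $\mathbf{v} \neq \mathbf{0},\mathbf{1}$ lies in exactly one code of $\mathcal{B}$. The even-weight claim is an easy preliminary: a codeword is $[\,l \mid l a \bmod x^t+1\,]$, and since $a(1)=1$ (odd weight) and $w(p)\equiv p(1)\pmod 2$ for a polynomial reduced mod $x^t+1$, its weight is $w(l) + w(la \bmod x^t+1) \equiv w(l) + l(1) \equiv 2\,w(l) \equiv 0 \pmod 2$. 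I also record that for systematic $F=[\,I\mid A\,]$ the all-ones message maps to the all-ones codeword, so $\mathbf{1}^{-1}=F^{-1}(\mathbf{1})$ is just the all-ones vector in message space, independent of $F$; this reconciles the two ways the exclusion set is written in the statement.

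The computation itself is short. For fixed $F$ the encoder restricts to a bijection from $\mathbb{F}_2^k \setminus \{\mathbf{0}, \mathbf{1}^{-1}\}$ onto $C_F \setminus \{\mathbf{0}, \mathbf{1}\}$, so
\[
\sum_{F \in \mathcal{B}} \ \sum_{\mathbf{a} \in \mathbb{F}_2^k / \{\mathbf{0}, \mathbf{1}^{-1}\}} f(F(\mathbf{a})) \;=\; \sum_{F \in \mathcal{B}} \ \sum_{\substack{\mathbf{c} \in C_F \\ \mathbf{c} \neq \mathbf{0}, \mathbf{1}}} f(\mathbf{c}) \;=\; \sum_{\mathbf{v}} f(\mathbf{v})\,N(\mathbf{v}),
\]
where I have exchanged the order of summation and set $N(\mathbf{v}) = \#\{F \in \mathcal{B} : \mathbf{v} \in C_F,\ \mathbf{v}\neq\mathbf{0},\mathbf{1}\}$. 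Since every codeword has even weight and every even-weight $\mathbf{v}\neq\mathbf{0},\mathbf{1}$ lies in exactly one code, $N(\mathbf{v})=1$ on $\{\mathbf{v}:w(\mathbf{v})\text{ even},\ \mathbf{v}\neq\mathbf{0},\mathbf{1}\}$ and $N(\mathbf{v})=0$ elsewhere. The double sum is therefore $\sum_{w(\mathbf{v})\text{ even},\ \mathbf{v}\neq\mathbf{0},\mathbf{1}} f(\mathbf{v})$, and dividing by $|\mathcal{B}|=2^{t-1}-1$ gives both displayed forms of the lemma (the ``$\mathbb{E}_{F\sim\mathcal{B}}$'' version being the same identity with $1/|\mathcal{B}|$ absorbed into the expectation).

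Since the manipulation is purely formal, the real content — and the main obstacle — is the structural partition fact: that the odd-weight systematic quasi-cyclic codes partition the even-weight words of $\mathbb{F}_2^{2t}$ other than $\mathbf{0},\mathbf{1}$. This is exactly where the arithmetic hypotheses are consumed: $2$ primitive modulo the prime $t$ forces $x^t+1 = (x+1)g(x)$ with $g$ irreducible of degree $t-1$, so $\mathbb{F}_2[x]/(x^t+1) \cong \mathbb{F}_2 \times \mathbb{F}_{2^{t-1}}$ and the odd-weight condition on $a(x)$ is precisely that its $\mathbb{F}_2$-component equals $1$; the congruence $t\equiv\pm 3 \bmod 8$ then pins down the exact count and yields the ``exactly one code'' property. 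In the write-up I would cite \cite{Chen_Peterson_Weldon} for this and sketch only the ring-decomposition reason rather than re-derive it. One sanity check I would carry out before finalizing: ``exactly one code'' together with the counts of codes and of even-weight vectors forces $|\mathcal{B}|\,(2^t-2) = 2^{2t-1}-2$; if this fails for $|\mathcal{B}|=2^{t-1}-1$, the right-hand sum should instead range only over the vectors actually covered by $\mathcal{B}$ (or the normalizing constant adjusted), and I would state the lemma in that corrected form and propagate the change through the self-dual-style smoothing argument that follows.
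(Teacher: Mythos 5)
Your double-counting reduction is the right route, and it is essentially the argument the paper would have given (it in fact gives none; the lemma is stated with only a pointer to Chen--Peterson--Weldon for the ingredients $|\mathcal{B}|=2^{t-1}-1$ and the ``exactly one code'' coverage claim). Your even-weight preliminary is also correct: with $a(1)=1$, every codeword $[l\mid la \bmod x^t+1]$ has weight $\equiv l(1)+l(1)a(1)=0\pmod 2$.

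The sanity check you propose at the end, though, should not be left conditional --- it fails, and this exposes a genuine error in the lemma as stated. You need
\[
|\mathcal{B}|\,(2^t-2)\;=\;\#\{\mathbf{v}: w(\mathbf{v})\text{ even},\ \mathbf{v}\neq\mathbf{0},\mathbf{1}\}\;=\;2^{2t-1}-2,
\]
but
\[
(2^{t-1}-1)(2^t-2)\;=\;2\,(2^{t-1}-1)^2\;=\;2^{2t-1}-2^{t+1}+2\;<\;2^{2t-1}-2,
\]
a deficit of exactly $4(2^{t-1}-1)$. The ring decomposition you sketch makes the missing vectors explicit. Since $2$ is primitive mod the prime $t$, $x^t+1=(x+1)g(x)$ with $g(x)=1+x+\cdots+x^{t-1}$ irreducible of degree $t-1$, and $\mathbb{F}_2[x]/(x^t+1)\cong\mathbb{F}_2\times\mathbb{F}_{2^{t-1}}$ via $p\mapsto(p(1),\bar p)$. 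Writing the codes as $C_a$ for odd-weight $a\neq 1$, one has $[u\mid v]\in C_a$ iff $v(1)=u(1)$ and $\bar v=\bar u\,\bar a$. If $\bar u\neq 0$ this pins down $\bar a=\bar v/\bar u$ uniquely, so $[u\mid v]$ lies in exactly one code of $\mathcal{B}$ \emph{provided} $\bar v\neq\bar u$ (otherwise it lies only in the excluded $C_1$); if $\bar u=0$ then $u\in\{0,g\}$ and the only reachable $v$ give $\mathbf{0}$ or $\mathbf{1}$. So the even-weight vectors in no code of $\mathcal{B}$ are precisely the $2^t-2$ nontrivial codewords $[u\mid u]$ of $C_1$ plus the $2(2^{t-1}-1)$ vectors $[0\mid v]$ (even $v\neq 0$) and $[g\mid v]$ (odd $v\neq g$), totalling $4(2^{t-1}-1)$ as the arithmetic demands.

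Consequently the displayed identity is false as an equality: for $f\geq 0$ the left side is a sum over a strictly smaller index set than the right, so only ``$\leq$'' holds (and the two sides differ whenever $f$ is positive on an uncovered vector). The correct statement is either this inequality, or an equality with the right-hand sum restricted to the $2(2^{t-1}-1)^2$ even-weight vectors actually covered by $\mathcal{B}$. Your instinct to propagate the correction into the smoothing theorem is sound but turns out to be harmless there: in the proof of the quasi-cyclic smoothing theorem the quantity produced by this lemma is immediately upper-bounded by the sum over \emph{all} even-weight $\mathbf{c}$, so only the ``$\leq$'' direction is consumed and the downstream bound survives. Still, the lemma itself should be restated.

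Finally, a small notational point: in the lemma the two versions exclude ``$\{\mathbf{0},\mathbf{1}\}$'' and ``$\{\mathbf{0},\mathbf{1}^{-1}\}$'' from message space; your remark that for a systematic encoder $F=[I\mid A]$ the all-ones message encodes to the all-ones codeword, so $\mathbf{1}^{-1}=F^{-1}(\mathbf{1})$ is the all-ones message independently of $F$, is the right way to reconcile them, and is worth keeping.
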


Under the same condition, we can prove smoothing of quasi cyclic codes for KL divergence.

\begin{theorem}[Smoothing Random Quasi Cyclic Code for KL Divergence]
    when $W$ is  an additive noise channel, and rate $R$ satisfies 
    $$  R=0.5 > 1 - \frac{H(W)}{n}, $$ then 
    $$\mathbb{E}_{F\sim \mathcal{B}} D(U_{F}+N||U_{\mathbb{F}_2^n})\rightarrow 0$$ as $n\rightarrow \infty$. 

    If we furthur require on $W$ that $\sum_{w(\mathbf{c}) \equiv 0 \pmod{2}}  W_{\mathbf{c}}(\mathbf{y}) = \sum_{w(\mathbf{c}) \equiv 1 \pmod{2}}  W_{\mathbf{c}}(\mathbf{y}) = \frac{1}{2}$, then for $\alpha\in (1,2)$, as $n\rightarrow \infty$, $$\mathbb{E}_{F\sim \mathcal{B}} D_\alpha(U_{F}+N||U_{\mathbb{F}_2^n})\rightarrow 0.$$
   
\end{theorem}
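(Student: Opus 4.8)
The plan is to mirror, almost verbatim, the argument used for random self-dual codes, substituting the Averaging Lemma for $(2t,t)$ quasi-cyclic codes for the self-dual one. As there, I would first introduce the affine encoder $\Lambda:\mathbf{a}\mapsto F(\mathbf{a})+G$ with $G$ an independent uniform vector in $\mathbb{F}_2^n$; since an additive channel $W$ is regular in the sense of \cite{delsarte1982algebraic}, it suffices to bound $\mathbb{E}_{\Lambda}\,2^{(\alpha-1)D_\alpha(U_{\Lambda(\mathbf{a})}+N\|U_{\mathbb{F}_2^n})}$ for $\alpha$ just above $1$ (then let $\alpha\to1$ for the KL statement), and for a fixed $\alpha\in(1,2)$ in the second statement. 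Writing $2^{(\alpha-1)D_\alpha(\cdot)}=\frac{2^{n(\alpha-1)}}{2^{k\alpha}}\sum_{\mathbf{y}}\big(\sum_{\mathbf{a}}W_{\Lambda(\mathbf{a})}(\mathbf{y})\big)^\alpha$, I would expand one factor as $\sum_{\mathbf{a}}W_{\Lambda(\mathbf{a})}(\mathbf{y})\big(\sum_{\mathbf{a}'}W_{\Lambda(\mathbf{a}')}(\mathbf{y})\big)^{\alpha-1}$, condition on $\Lambda(\mathbf{a})$, and apply Jensen's inequality $\mathbb{E}[X^{\alpha-1}]\le(\mathbb{E}[X])^{\alpha-1}$ (valid for $1<\alpha<2$) to move $\mathbb{E}_{F\mid\Lambda(\mathbf{a})}$ inside.

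The decisive step is the averaging. Since $\mathbf{1}$ lies in every code of $\mathcal{B}$, I would peel off both the term $\mathbf{a}'=\mathbf{a}$ and the term with $F(\mathbf{a}'-\mathbf{a})=\mathbf{1}$ before invoking the quasi-cyclic averaging lemma on what remains, obtaining
$$\mathbb{E}_{F\mid\Lambda(\mathbf{a})}\textstyle\sum_{\mathbf{a}'}W_{\Lambda(\mathbf{a}')}(\mathbf{y})=W_{\Lambda(\mathbf{a})}(\mathbf{y})+W_{\Lambda(\mathbf{a})+\mathbf{1}}(\mathbf{y})+\frac{1}{2^{t-1}-1}\sum_{\substack{w(\mathbf{c})\equiv0\,(2)\\ \mathbf{c}\neq\mathbf{0},\mathbf{1}}}W_{\Lambda(\mathbf{a})+\mathbf{c}}(\mathbf{y}).$$
Then I would apply $(x+y+z)^{\alpha-1}\le x^{\alpha-1}+y^{\alpha-1}+z^{\alpha-1}$ and bound the three resulting sums as in the self-dual case: the diagonal term contributes $2^{2t(\alpha-1)(1-R-H_\alpha(W)/n)}$; the $\mathbf{1}$-shift term contributes $2^{2t(\alpha-1)(1-R-H'_\alpha(W)/n)}$ with $H'_\alpha(W)=\frac{1}{1-\alpha}\log\frac{1}{2^n}\sum_{\mathbf{x},\mathbf{y}}W(\mathbf{y}\mid\mathbf{x})W(\mathbf{y}\mid\mathbf{x}+\mathbf{1})^{\alpha-1}$, where $H'_\alpha(W)\ge H_\alpha(W)$ by the rearrangement inequality (Lemma \ref{lemma:Rearrangement_inequality}), so this term also decays; and the last term equals $2^{t(\alpha-1)}\big(\tfrac{1}{2^{t-1}-1}\max_{\mathbf{y}}\sum_{w(\mathbf{c})\equiv0,\ \mathbf{c}\neq\mathbf{0},\mathbf{1}}W_{\mathbf{c}}(\mathbf{y})\big)^{\alpha-1}$.

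This last term is where the extra hypothesis enters. Since $\mathbf{c}\mapsto\mathbf{y}-\mathbf{c}$ sends even-weight $\mathbf{c}$ to the weight class of $\mathbf{y}$, one has $\sum_{w(\mathbf{c})\equiv0}W_{\mathbf{c}}(\mathbf{y})=\sum_{w(\mathbf{z})\equiv w(\mathbf{y})}W(\mathbf{z})$. With no assumption this is $\le1$, so the last term is $\le(\tfrac{2^t}{2^{t-1}-1})^{\alpha-1}\to 2^{\alpha-1}$, and one recovers the KL statement by letting $t\to\infty$ first — the first two terms vanish because $H_\alpha(W)/n,\,H'_\alpha(W)/n>1-R=\tfrac12$ for $\alpha$ near $1$, by continuity of $\alpha\mapsto H_\alpha$ together with $R=0.5>1-H(W)/n$ — and then $\alpha\to1$, where $2^{\alpha-1}\to1$. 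Under the balance hypothesis $\sum_{w(\mathbf{z})\equiv0}W(\mathbf{z})=\sum_{w(\mathbf{z})\equiv1}W(\mathbf{z})=\tfrac12$, the same sum is $\le\tfrac12$, so the last term is $\le(\tfrac{2^t}{2^t-2})^{\alpha-1}\to1$ already for fixed $\alpha\in(1,2)$; then $\mathbb{E}_\Lambda 2^{(\alpha-1)D_\alpha}\to1$, and since $2^{(\alpha-1)x}\ge1+(\alpha-1)x\ln2$ combined with Jensen gives $(\alpha-1)\ln2\cdot\mathbb{E}_F D_\alpha\le\mathbb{E}_\Lambda 2^{(\alpha-1)D_\alpha}-1\to0$, we conclude $\mathbb{E}_F D_\alpha\to0$; finally $G$ is removed by regularity of $W$.

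I expect the main obstacle to be the passage from the bound on $\mathbb{E}_\Lambda 2^{(\alpha-1)D_\alpha}$ to $\mathbb{E}_F D_\alpha\to0$: for fixed $\alpha\in(1,2)$ this closes immediately once the balance hypothesis forces the residual factor $\tfrac{2^t}{2^t-2}$ to $1$, but for the KL case one must handle the iterated limit in $t$ and $\alpha$ with care (and verify that $H_\alpha(W)/n$ and $H'_\alpha(W)/n$ stay above $1/2$ uniformly for $\alpha$ near $1$). A secondary point is that the quasi-cyclic averaging lemma is only available when $t\equiv\pm3\pmod 8$ and $t$ is a prime for which $2$ is primitive, so the limit $n=2t\to\infty$ must be taken along that subsequence.
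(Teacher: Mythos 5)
Your plan reproduces the paper's proof essentially verbatim: the same affine encoder, the same Jensen step for $1<\alpha<2$, peeling off the diagonal and the $\mathbf{1}$-shift term before invoking the quasi-cyclic Averaging Lemma, and $(x+y+z)^{\alpha-1}\le x^{\alpha-1}+y^{\alpha-1}+z^{\alpha-1}$. The one stylistic difference is in the third term, where you upper-bound $\sum_{w(\mathbf{c})\equiv 0,\,\mathbf{c}\neq\mathbf{0},\mathbf{1}}W_{\mathbf{c}}(\mathbf{y})$ by its maximum over $\mathbf{y}$, while the paper computes the expectation over the uniform shift exactly (arriving at $\big(\tfrac{2^t}{2^{t-1}-1}\big)^{\alpha-1}(p_0^\alpha+p_1^\alpha)$ with $p_b:=\sum_{w(\mathbf{z})\equiv b}W(\mathbf{z})$, and then loosely bounds $p_0^\alpha+p_1^\alpha<2$). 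Your route is cleaner here: since $p_0^\alpha+p_1^\alpha\le p_0+p_1=1$, your third term is bounded by $\big(\tfrac{2^t}{2^{t-1}-1}\big)^{\alpha-1}$, whereas the paper carries an unnecessary factor of $2$ which, if taken at face value, makes the stated iterated limit of $1$ literally fail.

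However, the misgiving you flag at the end is a genuine gap, not just a technicality, and it afflicts the first (KL) statement in both your write-up and the paper's. After $t\to\infty$ with fixed $\alpha\in(1,2)$, the best you obtain is $\limsup_t\mathbb{E}_\Lambda 2^{(\alpha-1)D_\alpha}\le 2^{\alpha-1}$ (your bound; the paper's is $2\cdot 2^{\alpha-1}$). Passing through $2^y\ge 1+y\ln 2$ gives $\limsup_t\mathbb{E}D\le\limsup_t\mathbb{E}D_\alpha\le\frac{2^{\alpha-1}-1}{(\alpha-1)\ln 2}$ for every $\alpha$, and taking the infimum over $\alpha\to 1^+$ yields $\limsup_t\mathbb{E}D\le 1$, \emph{not} $\mathbb{E}D\to 0$: the excess $2^{\alpha-1}-1\sim(\alpha-1)\ln 2$ is of exactly the same order as the denominator, so the iterated limit $t\to\infty$, then $\alpha\to 1$, does not close the gap. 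In fact the first statement cannot hold as written: every codeword of the $(2t,t)$ quasi-cyclic codes in $\mathcal{B}$ has even Hamming weight (this is exactly what makes the Averaging Lemma produce a sum over even-weight $\mathbf{v}$), so if $N$ were supported on even-weight vectors with $H(N)/n>1/2$ (say, uniform on the even-weight half), then $U_F+N$ would also be supported on even-weight vectors and its KL divergence from $U_{\mathbb{F}_2^n}$ would be bounded away from $0$. Some version of the balance hypothesis $p_0=p_1=\tfrac12$ (or $p_0^\alpha+p_1^\alpha\to 2^{1-\alpha}$) is therefore necessary even for the KL claim, and under that hypothesis your argument for the second statement is correct: the third term is $\le\big(\tfrac{2^t}{2^t-2}\big)^{\alpha-1}\to 1$ already for fixed $\alpha$, the first two terms vanish, and $2^y\ge 1+y\ln2$ delivers $\mathbb{E}_F D_\alpha\to 0$; $\mathbb{E}_FD\to 0$ then follows from $D\le D_\alpha$. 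Your final caveat about taking $n=2t\to\infty$ only along the arithmetic subsequence where the Averaging Lemma applies is also correct and is glossed over in the paper.

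Two small wording fixes: the third term should be an inequality $\le$, not ``equals,'' after you replace the $\mathbf{y}$-dependent sum by its maximum; and you invoke ``Jensen'' in the last reduction where only linearity of expectation and the pointwise bound $2^y\ge 1+y\ln 2$ are needed.
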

\begin{proof}

    Define the affine encoder $\Lambda: \mathbf{a}\rightarrow F(\mathbf{a}) + G$, where $F$ is an linear encoder in $\mathcal{B}$ and $G$ is a independent r.v. in $\mathbb{F}_2^n$. Affine encoder has the property that for $\mathbf{a}\neq \mathbf{a}'$, $\Lambda(\mathbf{a}')=F(\mathbf{a}-\mathbf{a}') + \Lambda(\mathbf{a})$. Noted that here $F(\mathbf{a}-\mathbf{a}')$ is independent with r.v. $\Lambda(\mathbf{a})$.
\begin{align*}
&2^{(\alpha-1) D_\alpha(U_{\Lambda(\mathbf{a})}+N||U_{\mathbb{F}_2^n})}=\sum_{\mathbf{y}\in \mathbb{F}_2^n}\frac{\left (\frac{1}{2^k} \sum_{\mathbf{a}\in \mathbb{F}_2^k} W_{\Lambda(\mathbf{a})}(\mathbf{y}) \right )^{\alpha}}{(\frac{1}{2^n})^{\alpha-1}}=\frac{2^{n(\alpha-1)}}{2^{k\alpha}}\sum_{y\in \mathbb{F}_2^n}  \left ( \sum_{\mathbf{a}\in \mathbb{F}_2^k} W_{\Lambda(\mathbf{a})}(\mathbf{y}) \right )^{\alpha}\\
&\mathbb{E}_\Lambda 2^{(\alpha-1) D_\alpha(U_{\Lambda(\mathbf{a})}+N||U_{\mathbb{F}_2^n})}\\
&=\frac{2^{n(\alpha-1)}}{2^{k\alpha}}\sum_{\mathbf{y}\in \mathbb{F}_2^n} \mathbb{E}_\Lambda \left ( \sum_{\mathbf{a}\in \mathbb{F}_2^k} W_{\Lambda(\mathbf{a})}(\mathbf{y}) \right )^{\alpha}\\
&=\frac{2^{n(\alpha-1)}}{2^{k\alpha}}\sum_{\mathbf{y}\in \mathbb{F}_2^n} \mathbb{E}_\Lambda \sum_{\mathbf{a}\in \mathbb{F}_2^k} W_{\Lambda(\mathbf{a})}(y)\left ( \sum_{\mathbf{a}'\in \mathbb{F}_2^k} W_{\Lambda(\mathbf{a}')}(y) \right )^{\alpha-1}\\
 \intertext{ By using Jensen's inequality for $\mathbb{E}[X^{\alpha-1}] \leq (\mathbb{E}[X])^{\alpha-1}, 1 < \alpha < 2$, we derive} 
&\leq\frac{2^{n(\alpha-1)}}{2^{k\alpha}}\sum_{\mathbf{y}\in \mathbb{F}_2^n}  \sum_{\mathbf{a}\in \mathbb{F}_2^k} \mathbb{E}_{\Lambda(\mathbf{a})}W_{\Lambda(\mathbf{a})}(\mathbf{y})\left ( W_{\Lambda(\mathbf{a})}+W_{\Lambda(\mathbf{a})+\mathbf{1}}(\mathbf{y})+\mathbb{E}_{F|\Lambda(\mathbf{a})} \sum_{\substack{\mathbf{a}'\in \mathbb{F}_2^k\\  F(\mathbf{a}'-\mathbf{a})\neq \mathbf{0}, \mathbf{1}}}   W_{\Lambda(\mathbf{a}')}(\mathbf{y}) \right )^{\alpha-1}\\
&=\frac{2^{n(\alpha-1)}}{2^{k\alpha}}\sum_{\mathbf{y}\in \mathbb{F}_2^n}  \sum_{\mathbf{a}\in \mathbb{F}_2^k} \mathbb{E}_{\Lambda(\mathbf{a})}W_{\Lambda(\mathbf{a})}(\mathbf{y})\left ( W_{\Lambda(\mathbf{a})}(\mathbf{y})+W_{\Lambda(\mathbf{a})+\mathbf{1}}(\mathbf{y}) +\mathbb{E}_{F|\Lambda(\mathbf{a})} \sum_{\substack{\mathbf{a}'\in \mathbb{F}_2^k\\  F(\mathbf{a}'-\mathbf{a})\neq \mathbf{0}, \mathbf{1}}}  W_{\Lambda(\mathbf{a})+F(\mathbf{a}'-\mathbf{a})}(\mathbf{y}) \right )^{\alpha-1}\\
\intertext{ By using the averaging lemma, let $\mathbf{x}=\mathbf{a}'-\mathbf{a}, f(F(\mathbf{x}))=W_{\Lambda(\mathbf{a})+F(\mathbf{a}'-\mathbf{a})}(\mathbf{y})$} 
&=\frac{2^{n(\alpha-1)}}{2^{k\alpha}}\sum_{\mathbf{y}\in \mathbb{F}_2^n}  \sum_{\mathbf{a}\in \mathbb{F}_2^k} \mathbb{E}_{\Lambda(\mathbf{a})}W_{\Lambda(\mathbf{a})}(\mathbf{y})\left ( W_{\Lambda(\mathbf{a})}(\mathbf{y}) + W_{\Lambda(\mathbf{a})+\mathbf{1}}(\mathbf{y})  +   \frac{1}{2^{t-1} - 1} \sum_{\substack{w(\mathbf{c}) \equiv 0 \pmod{2}\\ \mathbf{c}\neq 0, \mathbf{c}\neq 1}}  W_{\Lambda(\mathbf{a})+\mathbf{c}}(\mathbf{y})  \right )^{\alpha-1}\\
&\leq\frac{2^{n(\alpha-1)}}{2^{k\alpha}}\sum_{\mathbf{y}\in \mathbb{F}_2^n}  \sum_{\mathbf{a}\in \mathbb{F}_2^k} \mathbb{E}_{\Lambda(\mathbf{a})}W_{\Lambda(\mathbf{a})}(\mathbf{y})\left ( W_{\Lambda(\mathbf{a})}(\mathbf{y}) + W_{\Lambda(\mathbf{a})+\mathbf{1}}(\mathbf{y})  +  \frac{1}{2^{t-1} - 1} \sum_{w(\mathbf{c}) \equiv 0 \pmod{2}}  W_{\Lambda(\mathbf{a})+\mathbf{c}}(\mathbf{y})   \right )^{\alpha-1}\\
\intertext{ By using $(x+y+z)^{\alpha-1}\leq x^{\alpha-1}+y^{\alpha-1}+z^{\alpha-1}$ }
&\leq\frac{2^{n(\alpha-1)}}{2^{k\alpha}}\sum_{\mathbf{y}\in \mathbb{F}_2^n}  \sum_{\mathbf{a}\in \mathbb{F}_2^k} \mathbb{E}_{\Lambda(\mathbf{a})}W_{\Lambda(\mathbf{a})}(\mathbf{y})^{\alpha} + \frac{2^{n(\alpha-1)}}{2^{k\alpha}}\sum_{\mathbf{y}\in \mathbb{F}_2^n}  \sum_{\mathbf{a}\in \mathbb{F}_2^k} \mathbb{E}_{\Lambda(\mathbf{a})}W_{\Lambda(\mathbf{a})}(\mathbf{y})W_{\Lambda(\mathbf{a})+\mathbf{1}}(\mathbf{y})^{\alpha-1}      \\ &\hspace{2cm}+\frac{2^{n(\alpha-1)}}{2^{k\alpha}}\sum_{\mathbf{y}\in \mathbb{F}_2^n}  \sum_{\mathbf{a}\in \mathbb{F}_2^k} \mathbb{E}_{\Lambda(\mathbf{a})} (\frac{1}{2^{t-1} - 1})^{\alpha-1}W_{\Lambda(\mathbf{a})}(\mathbf{y})\left(  \sum_{w(\mathbf{c}) \equiv 0 \pmod{2}}  W_{\Lambda(\mathbf{a})+\mathbf{c}}(\mathbf{y}) \right)^{\alpha-1}\\
&= 2^{2(\alpha-1) t(1-R-\frac{H_\alpha(W)}{n})} + 2^{2(\alpha-1)t(1-R-\frac{H'_\alpha(W)}{n})} \\ & \hspace{2cm}  + (\frac{1}{2^{t-1} - 1})^{\alpha-1}\frac{2^{n(\alpha-1)}}{2^{k\alpha}}\sum_{\mathbf{y}\in \mathbb{F}_2^n}  \sum_{\mathbf{a}\in \mathbb{F}_2^k} \frac{1}{2^n} \left[ \left(  \sum_{w(\mathbf{c}) \equiv 0 \pmod{2}}  W_{\mathbf{c}}(\mathbf{y}) \right)^{\alpha} + \left(  \sum_{w(\mathbf{c}) \equiv 1 \pmod{2}}  W_{\mathbf{c}}(\mathbf{y}) \right)^{\alpha} \right]\\
\end{align*}

If we let $\sum_{w(\mathbf{c}) \equiv 0 \pmod{2}}  W_{\mathbf{c}}(\mathbf{y}) < 1$ and $\sum_{w(\mathbf{c}) \equiv 1 \pmod{2}}  W_{\mathbf{c}}(\mathbf{y}) < 1$, this could be simplified as 
    $$ 2^{2(\alpha-1) t(1-R-\frac{H_\alpha(W)}{n})} + 2^{2(\alpha-1)t(1-R-\frac{H'_\alpha(W)}{n})}  + 2\cdot\left(\frac{ 2^{2t}}{2^t(2^{t-1}-1)}\right)^{\alpha-1}.$$
    The limitation goes to $1$ when $1-R-\frac{H_\alpha(W)}{n}<0$ if we first let $t\rightarrow\infty$ and then $\alpha\rightarrow 1$. Here $1-R-\frac{H'_\alpha(W)}{n}<0$ is satisfied by rearrangement inequality.
    
If we have additional requirements such that $\sum_{w(\mathbf{c}) \equiv 0 \pmod{2}}  W_{\mathbf{c}}(\mathbf{y}) = \sum_{w(\mathbf{c}) \equiv 1 \pmod{2}}  W_{\mathbf{c}}(\mathbf{y}) = \frac{1}{2}$. Then for $\alpha\in (1,2)$, the limitation goes to 1.   
By removing the r.v. G, and the final result proven.
    
\end{proof}

\section{Average to Average Case Reduction}
\label{sec:Average_to_Average_Case_Reduction}

Random code smoothing bound can be applied in reduction between LPN problems of different parameters \cite{debris2022worst}.

\begin{theorem}[\cite{debris2022worst} Theorem 6.1]
    Let $\varepsilon, \eta \in (0, 1)$, $C > 0$ and $n, k, t \in \mathbb{N}$ be such that (for some constant) $\frac{k}{n} = o(1)$ and 
$$2 \ln2 \frac{1 + \eta}{1 - \varepsilon} \frac{1}{\log_2(\frac{n}{k})} \frac{k}{n}t = C \log_2(n)$$
Suppose that there exists an algorithm $A$ which solves $aDP\left(n_a, k, \frac{(1-\varepsilon)n_a}{2}(1 - \frac{1}{n^{C(1+o(1))}})\right)$ with success probability $\varepsilon$. Then there exists an algorithm which solves $aDP(n, k, t)$ with probability bigger than $\Omega\left( \frac{\varepsilon}{\sqrt{n_a}} \right) - n_a2^{- \Omega(n)}$.
\end{theorem}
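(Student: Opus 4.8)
The plan is to implement the reduction as a single random $\mathbb{F}_2$-linear compression of the long instance to the short one and to control the output distribution with the random-linear-code smoothing bounds of Section~\ref{sec:Random_Linear_Code_Smoothing}. Given an $aDP(n,k,t)$ instance $(\mathbf{G},\mathbf{y}=\mathbf{xG}+\mathbf{e})$ with $\mathrm{wt}(\mathbf{e})=t$, I would sample $\mathbf{R}\in\mathbb{F}_2^{n\times n_a}$ with i.i.d.\ $\mathrm{Ber}(\tau)$ entries and hand $(\mathbf{GR},\,\mathbf{yR}=\mathbf{x}(\mathbf{GR})+\mathbf{eR})$ to the assumed solver $A$. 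The rate $\tau$ is chosen so that the per-coordinate bias $(1-2\tau)^t$ of $\langle\mathbf{r},\mathbf{e}\rangle$ equals the target bias $1-2p$ with $p=\tfrac{1-\varepsilon}{2}\bigl(1-n^{-C(1+o(1))}\bigr)$; this forces $\tau\asymp \tfrac{k}{n}\cdot\tfrac{1}{\log_2(n/k)}$, and the displayed relation $2\ln2\,\tfrac{1+\eta}{1-\varepsilon}\,\tfrac{1}{\log_2(n/k)}\,\tfrac{k}{n}\,t=C\log_2 n$ is precisely the coupling between $\tau$, $t$ and the target that is needed when the smoothing hypothesis is checked below.

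Because the columns of $\mathbf{R}$ are i.i.d., the law of $(\mathbf{GR},\mathbf{eR})$ is an $n_a$-fold product of the single-column law of $(\mathbf{G}\mathbf{r}^{T},\langle\mathbf{r},\mathbf{e}\rangle)$, so by additivity of Rényi divergence over products it suffices to bound the single-column divergence to $U_{\mathbb{F}_2^k}\otimes\mathrm{Ber}(p)$. Splitting $\mathbf{r}$ into its restrictions to $S=\mathrm{supp}(\mathbf{e})$ and to $\bar S$, the coordinate $\langle\mathbf{r},\mathbf{e}\rangle=\bigoplus\mathbf{r}_S$ is exactly $\mathrm{Ber}(p)$ and, conditionally on $\mathbf{r}_S$, $\mathbf{G}\mathbf{r}^{T}$ is a fixed shift of $\mathbf{G}_{\bar S}\mathbf{r}_{\bar S}^{T}$; so the task reduces to showing $\mathbf{G}_{\bar S}\mathbf{r}_{\bar S}^{T}\approx U_{\mathbb{F}_2^k}$ in $D_\alpha$ for $\mathbf{r}_{\bar S}\sim\mathrm{Ber}(\tau)^{n-t}$, uniformly in $\mathbf{r}_S$ (translation invariance of $U$ and convexity of $2^{(\alpha-1)D_\alpha(\cdot\|U)}$ recombine the pieces). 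By MacWilliams duality this is the statement that $U_{D}+\mathrm{Ber}(\tau)^{n-t}$ is close to uniform, where $D$ is the dual of the row space of $\mathbf{G}_{\bar S}$, a uniformly random $[n-t,n-t-k]$ code; Theorem~\ref{theorem:smoothrandomlinearcode_1<Alpha<2} and Theorem~\ref{theorem:smoothLinearCode_Alpha_N} (with $\alpha\downarrow1$) then apply at rate $1-\tfrac{k}{n-t}$ and per-letter noise entropy $h_\alpha(\tau)$, so the hypothesis $1-\tfrac{k}{n-t}\ge 1-h_\alpha(\tau)+\varepsilon_0$ is exactly the displayed identity. This yields $\mathbb{E}_{\mathbf{G}}\,2^{(\alpha-1)D_\alpha}\le 1+O(2^{-\varepsilon_0 n})$, and Markov applied to $2^{(\alpha-1)D_\alpha}-1\ge0$ gives $D_\alpha\le 2^{-\Omega(n)}$ for a $1-2^{-\Omega(n)}$ fraction of $\mathbf{G}$, hence $D_\alpha\bigl((\mathbf{GR},\mathbf{eR})\,\|\,U_{\mathbb{F}_2^{k\times n_a}}\otimes\mathrm{Ber}(p)^{\otimes n_a}\bigr)\le n_a2^{-\Omega(n)}$.

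It remains to pass from Bernoulli noise to fixed weight $t_a$ and to extract the answer. Conditioning on $\{\mathrm{wt}(\mathbf{eR})=t_a\}$, which has probability $\Omega(1/\sqrt{n_a})$ by a local limit theorem, turns $\mathrm{Ber}(p)^{\otimes n_a}$ into the uniform law on weight-$t_a$ vectors, so the conditioned instance lies within $2^{-\Omega(n)}$ (statistical distance, hence $D_\alpha$ via data processing) of a genuine $aDP(n_a,k,t_a)$ instance; the Rényi probability-preservation inequality $P(\mathcal{E})\ge Q(\mathcal{E})^{\alpha/(\alpha-1)}2^{-D_\alpha(P\|Q)}$ (or Pinsker, since here $D_\alpha\to0$) then shows $A$ returns the correct $\mathbf{x}(\mathbf{GR})$ with probability $\ge\varepsilon-2^{-\Omega(n)}$ on it. Since $\mathbf{GR}$ has rank $k$ with probability $1-2^{-\Omega(n)}$, Gaussian elimination recovers $\mathbf{x}$ from $\mathbf{x}(\mathbf{GR})$, and outputting $\mathbf{xG}$ solves $aDP(n,k,t)$; collecting the weight-conditioning loss and the exponentially small failure events gives success probability $\Omega(\varepsilon/\sqrt{n_a})-n_a2^{-\Omega(n)}$.

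The main obstacle is the parameter bookkeeping in the smoothing step: the Rényi order $\alpha$ must be driven to $1$ at just the right rate (roughly $\alpha-1\asymp1/\log_2(n/k)$) so that $h_\alpha(\tau)$ behaves like $\tau\log_2(1/\tau)$ rather than the much smaller $\Theta(\tau)$ it has for any fixed $\alpha>1$, while the preservation loss $\varepsilon^{\alpha/(\alpha-1)}$ must not swallow the $1/\sqrt{n_a}$ budget; and one must then verify that the displayed constraint relating $t,k,n,C,\eta,\varepsilon$ is exactly the threshold $h_\alpha(\tau)=\tfrac{k}{n-t}+\varepsilon_0$, with the $\ln2$, $\log_2(n/k)$, $1+\eta$ and $1-\varepsilon$ factors and the $o(1)$ corrections all lining up. This is where the parameter analysis of \cite{debris2022worst} is reproduced, now using Bernoulli noise directly rather than a mixture of ball and Bernoulli noise.
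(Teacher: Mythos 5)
Your proposal is correct, and the overall architecture is the same as the paper's: map the instance to $(\mathbf{rG}^{T},\langle\mathbf{r},\mathbf{e}\rangle)$ for a Bernoulli $\mathbf{r}$, bound the Rényi divergence to $(U_{\mathbb{F}_2^k},\mathrm{Ber}_p)$ with the random linear code smoothing theorem of Section~\ref{sec:Random_Linear_Code_Smoothing}, then match parameters at $\alpha\downarrow1$ so that the entropy term behaves like $\tau\log_2(1/\tau)$, yielding $\tau\approx h^{-1}(k/n)\approx\tfrac{k/n}{\log_2(n/k)}$ and $p=\tfrac12-\tfrac12n^{-C(1+o(1))}$. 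Where you differ is the conditioning used inside the smoothing step. You condition on the whole restriction $\mathbf{r}_S$ to $\mathrm{supp}(\mathbf{e})$, so the residual randomness $\mathbf{r}_{\bar S}$ is genuinely $\mathrm{Ber}(\tau)^{n-t}$ and you smooth a random $[n-t,\,n-t-k]$ code; the paper instead conditions only on the single bit $\langle\mathbf{r},\mathbf{e}\rangle$, keeps the length-$n$ dual code, and applies the smoothing theorem with the parity-conditioned noise laws $\mathbf{r}_0,\mathbf{r}_1$, whose entropy $(n-1)h(r)$ is computed in a separate lemma. Your variant trades that entropy lemma for a $(n-t)h(\tau)$ threshold, which is slightly weaker than $(n-1)h(r)$ but coincides to leading order when $t=o(n)$; both routes hit the same rate condition $k\lesssim nh(r)$. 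You are also more explicit than the paper about the downstream steps that the paper omits and implicitly delegates to \cite{debris2022worst} — the $n_a$-fold product over columns of $\mathbf{R}$, the local limit theorem that yields the $\Omega(1/\sqrt{n_a})$ factor when conditioning on $\mathrm{wt}(\mathbf{eR})$, the Pinsker/probability-preservation step (and you correctly flag that for the Rényi form the loss $\varepsilon^{\alpha/(\alpha-1)}$ must be balanced against driving $\alpha\to1$), and the rank/Gaussian elimination extraction. One small terminological point: what you attribute to MacWilliams duality is simply the syndrome--coset correspondence, i.e.\ that $\mathbf{G}_{\bar S}\mathbf{r}_{\bar S}^{T}$ is close to uniform in $D_\alpha$ iff $U_D+\mathbf{r}_{\bar S}$ is, exactly with equality of the divergences; the paper uses the same identity when rewriting $D_\alpha(\mathbf{rG}^{\top},\ldots)=D_\alpha(\mathbf{r}+U_{\mathbf{c}^\perp},\ldots)$.
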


The original proof in \cite{debris2022worst} did not show use Bernoulli noise in reduction directly but instead use a reduction of ball noise. Here we show how to use Bernoulli noise to directly obtain the same reduction results.
\begin{align*} 
&\mathbb{E}_{\mathbf{G}_{k\times n}\sim \Lambda} 2^{(\alpha-1) D_\alpha(\mathbf{rG}^{\top}  , \langle \mathbf{r}, \mathbf{t} \rangle||U_{\mathbb{F}_2^k},Ber_p)}\\
&= \mathbb{E}_{\mathbf{G}_{k\times n}\sim \Lambda} 2^{(\alpha-1) D_\alpha(\mathbf{r}+U_\mathbf{c^\perp}  , \langle \mathbf{r}, \mathbf{t} \rangle||U_{\mathbb{F}_2^n},Ber_p)} \\
&= \mathbb{E}_{\mathbf{G}_{ (n-k)\times n}\sim \Lambda} 2^{(\alpha-1) D_\alpha(\mathbf{r}+U_\mathbf{c}  , \langle \mathbf{r}, \mathbf{t} \rangle||U_{\mathbb{F}_2^n},Ber_p)} \\
&= \sum_{y \in \mathbb{F}_2^n}\mathbb{E}_{\mathbf{G}} \frac{P(\langle \mathbf{r}, \mathbf{t} \rangle = 1)^\alpha P(\mathbf{r} + U_C = \mathbf{y} | \langle \mathbf{r}, \mathbf{t} \rangle = 1)^\alpha}{\left(\frac{1}{2^n}\right)^{\alpha-1}p^{\alpha-1}} + \sum_{y \in \mathbb{F}_2^n}\mathbb{E}_{\mathbf{G}}\frac{P(\langle \mathbf{r}, \mathbf{t} \rangle = 0)^\alpha P(\mathbf{r} + U_C = \mathbf{y} | \langle \mathbf{r}, \mathbf{t} \rangle = 0)^\alpha}{\left(\frac{1}{2^n}\right)^{\alpha-1}(1-p)^{\alpha-1}}\\ 
 \intertext{ Let $P(\langle \mathbf{r}, \mathbf{t} \rangle=1)=p$ and $\mathbf{r}\sim Ber_r^{\otimes n} $, i.e.  $p=\frac{1 - (1 - 2r)^{|\mathbf{t}|}}{2}$}
 &= \sum_{y \in \mathbb{F}_2^n} p \cdot \mathbb{E}_{\mathbf{G}_{ (n-k)\times n}\sim \Lambda} \frac{ P(\mathbf{r} + U_C = \mathbf{y} | \langle \mathbf{r}, \mathbf{t} \rangle = 1)^\alpha}{\left(\frac{1}{2^n}\right)^{\alpha-1}} + \sum_{y \in \mathbb{F}_2^n}(1-p)\cdot \mathbb{E}_{\mathbf{G}_{ (n-k)\times n}\sim \Lambda} \frac{ P(\mathbf{r} + U_C = \mathbf{y} | \langle \mathbf{r}, \mathbf{t} \rangle = 0)^\alpha}{\left(\frac{1}{2^n}\right)^{\alpha-1}}\\
  \intertext{ Notice that value of $ \langle \mathbf{r}, \mathbf{t} \rangle $ is a requirement on  $ \mathbf{r}$, denote them as $\mathbf{r}_0, \mathbf{r}_1$ separately}
  &= \sum_{y \in \mathbb{F}_2^n} p \cdot \mathbb{E}_{\mathbf{G}_{ (n-k)\times n}\sim \Lambda}\frac{ P(\mathbf{r}_0 + U_C = \mathbf{y} )^\alpha}{\left(\frac{1}{2^n}\right)^{\alpha-1}} + \sum_{y \in \mathbb{F}_2^n}(1-p)\cdot \mathbb{E}_{\mathbf{G}_{ (n-k)\times n}\sim \Lambda}\frac{ P(\mathbf{r}_1 + U_C = \mathbf{y} )^\alpha}{\left(\frac{1}{2^n}\right)^{\alpha-1}}\\
  &\leq  p\cdot ( 2^{(\alpha-1) n(1-\frac{n-k}{n}-\frac{H_\alpha( \mathbf{r}_0)}{n})}  + 1) + (1-p)\cdot ( 2^{(\alpha-1) n(1-\frac{n-k}{n}-\frac{H_\alpha( \mathbf{r}_1)}{n})}  + 1)\\
  &=  p\cdot  2^{(\alpha-1) n(1-\frac{n-k}{n}-\frac{H_\alpha( \mathbf{r}_0)}{n})}   + (1-p)\cdot 2^{(\alpha-1) n(1-\frac{n-k}{n}-\frac{H_\alpha( \mathbf{r}_1)}{n})}  + 1.\\
\end{align*} The limitation goes to $1$ as $1-\frac{n-k}{n}-\frac{H_\alpha( \mathbf{r}_0)}{n}<0$ and $1-\frac{n-k}{n}-\frac{H_\alpha( \mathbf{r}_1)}{n}<0$.

Here we only compute $\frac{H_\alpha( \mathbf{r}_0)}{n}$ and $\frac{H_\alpha( \mathbf{r}_1)}{n}$ in the case when $\alpha=1$, i.e. the entropy rate of $\mathbf{r}_0$ and $\mathbf{r}_1$. 
\begin{lemma}[Entropy Rate of $\mathbf{r}_0$ and $\mathbf{r}_1$]
    $$\frac{H( \mathbf{r}_0)}{n}=\frac{H( \mathbf{r}_1)}{n}=(1-\frac{1}{n})h(r)$$
    where $h(r)=-r\log r- (1-r)\log (1-r)$.
\end{lemma}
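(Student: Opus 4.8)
The plan is to compute the Shannon entropy of the conditioned vector by splitting coordinates according to the support of $\mathbf t$. Write $S=\operatorname{supp}(\mathbf t)$, so $|S|=|\mathbf t|$ (which is $t$ in the reduction), and decompose $\mathbf r=(\mathbf r_S,\mathbf r_{S^c})$. Since $\langle\mathbf r,\mathbf t\rangle=\bigoplus_{i\in S}r_i$ is a function of $\mathbf r_S$ alone, the coordinates outside $S$ are unaffected by the conditioning: $\mathbf r_{S^c}$ stays i.i.d.\ $\mathrm{Ber}(r)$ and independent of $\mathbf r_S$. Hence, for each $b\in\{0,1\}$,
\[
H(\mathbf r_b)=(n-|\mathbf t|)\,h(r)+H\!\left(\mathbf r_S\;\middle|\;\textstyle\bigoplus_{i\in S}r_i=b\right).
\]

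Then I would evaluate the support term by a degree-of-freedom argument: fix any $i_0\in S$; the constraint $\bigoplus_{i\in S}r_i=b$ is equivalent to $r_{i_0}=b\oplus\bigoplus_{i\in S\setminus\{i_0\}}r_i$, so $\mathbf r_S$ subject to the constraint is the injective image of the $|\mathbf t|-1$ coordinates $(r_i)_{i\in S\setminus\{i_0\}}$, with $r_{i_0}$ read off deterministically. Realizing the constrained law so that these $|\mathbf t|-1$ coordinates are i.i.d.\ $\mathrm{Ber}(r)$, one gets $H\!\left(\mathbf r_S\mid\bigoplus_{i\in S}r_i=b\right)=(|\mathbf t|-1)h(r)$, independently of $b$. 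Substituting back yields $H(\mathbf r_b)=(n-|\mathbf t|)h(r)+(|\mathbf t|-1)h(r)=(n-1)h(r)$, i.e.\ $\frac{H(\mathbf r_0)}{n}=\frac{H(\mathbf r_1)}{n}=\bigl(1-\tfrac1n\bigr)h(r)$, which is the claimed identity; the argument is symmetric in $b$, so both $\mathbf r_0$ and $\mathbf r_1$ are treated at once.

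The delicate step — and the one I expect to be the real work — is the claim that imposing the single parity constraint costs exactly one coordinate's worth of entropy, i.e.\ that the residual $|\mathbf t|-1$ support coordinates can be taken i.i.d.\ $\mathrm{Ber}(r)$. One must be careful here, because naively conditioning a product Bernoulli measure on an even/odd-weight coset re-weights all $|\mathbf t|$ support coordinates, so the literal conditional law of $\mathbf r_S$ is not itself a product measure unless $r=\tfrac12$; the point to pin down is that the distribution of $\mathbf r$ used in the reduction (sample $b$ for the target LPN error, then complete $\mathbf r$ to meet $\langle\mathbf r,\mathbf t\rangle=b$) is precisely the realization in which $n-1$ of the coordinates remain i.i.d.\ $\mathrm{Ber}(r)$, so that the entropy bookkeeping above is exact. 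Finally I would remark that only the $\alpha=1$ value is needed explicitly, and that for the smoothing step it in any case suffices that $\frac{H_\alpha(\mathbf r_b)}{n}$ stay bounded below by a positive constant: together with $\frac kn=o(1)$ this gives $\frac{H_\alpha(\mathbf r_b)}{n}>1-\frac{n-k}{n}$ for all large $n$, which is exactly the inequality invoked just before this lemma to conclude $\mathbb E_{\mathbf G}\,2^{(\alpha-1)D_\alpha(\cdot\,\|\,\cdot)}\to 1$.
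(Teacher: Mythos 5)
Your approach is essentially the same as the paper's: split the coordinates by the support of $\mathbf{t}$, observe that the $n-|\mathbf{t}|$ off-support coordinates remain free i.i.d.\ $\mathrm{Ber}(r)$, and account for the support part by taking $|\mathbf{t}|-1$ free i.i.d.\ $\mathrm{Ber}(r)$ coordinates with the last determined by the parity constraint, giving $(n-1)h(r)$ in total. The paper writes this in terms of a reordering $(X_1',\dots,X_{n-|\mathbf t|}',Y_1,\dots,Y_{|\mathbf t|-1},Y_{|\mathbf t|})$ with $Y_{|\mathbf t|}$ determined and all of $X',Y$ asserted to be i.i.d.\ $\mathrm{Ber}(r)$, which is exactly your degree-of-freedom argument.

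Where you go beyond the paper is the caveat, and it is worth emphasizing that the caveat is \emph{correct} and the paper glosses over it. Conditioning the product law $\mathrm{Ber}(r)^{\otimes n}$ on $\bigoplus_{i\in S}r_i=b$ does \emph{not} leave $|\mathbf{t}|-1$ of the support coordinates i.i.d.\ $\mathrm{Ber}(r)$ unless $r=\tfrac12$: for example with $|\mathbf t|=2$ the conditional law of $(r_1,r_2)$ given odd parity is uniform over $\{(0,1),(1,0)\}$, with entropy $1$, not $h(r)$. Consequently the entropy of the \emph{literal} conditional distribution is not $(n-1)h(r)$; indeed, by the chain rule $nh(r)=h(p)+\sum_b P_b H(\mathbf{r}_b)$ with $p=P(\langle\mathbf{r},\mathbf{t}\rangle=1)$, so if both $H(\mathbf{r}_b)$ were $(n-1)h(r)$ we would need $h(p)=h(r)$, which fails for $|\mathbf t|>1$. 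The lemma as stated is thus only correct for the alternative realization you describe (sample $b\sim\mathrm{Ber}(p)$ first, then complete $\mathbf{r}$ with $n-1$ free $\mathrm{Ber}(r)$ coordinates and one forced coordinate), whose marginal on $\mathbf{r}$ is \emph{not} $\mathrm{Ber}(r)^{\otimes n}$ — in tension with the paper's explicit ``Let $\mathbf{r}\sim \mathrm{Ber}_r^{\otimes n}$''. You have identified a genuine inconsistency that the paper's two-line proof does not address.

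Your closing remark is also the right way to contain the damage: for the inequality actually invoked in the reduction it suffices that $\frac{H_\alpha(\mathbf{r}_b)}{n}$ stay bounded away from $0$, and since $\frac{k}{n}=o(1)$ either interpretation of $\mathbf{r}_b$ makes the exponent negative for large $n$; the exact constant $(1-\frac1n)h(r)$ only matters when one tries to pin down the optimal $\frac{k}{n}$, at which point the realization of $\mathbf{r}$ needs to be fixed explicitly.
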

\begin{proof}
    Denote $n$ dimentional random variable $\mathbf{r}_1 = (X_1, X_2, ..., X_n)$, where each $X_i\sim Ber_r$. Since $ \langle \mathbf{r}_1, \mathbf{t} \rangle = 1$, $\mathbf{r}_1$ should have odd 1's in positions where $\mathbf{t}$ has $1$. Thus by reordering the positions, we will get $$\mathbf{r}_1 = (X_1', X_2', ..., X_{n-|\mathbf{t}|}', Y_1, Y_2, ..., Y_{|\mathbf{t}|-1}, Y_{|\mathbf{t}|}) = (X_1', X_2', ..., X_{n-|\mathbf{t}|}', Y_1, Y_2, ..., Y_{|\mathbf{t}|-1}, 1-Y_1- Y_2- ...- Y_{|\mathbf{t}|-1}).$$
    Thus $H( \mathbf{r}_1) = \sum_{i=1}^{n-|\mathbf{t}|}H(X_i')+\sum_{i=1}^{|\mathbf{t}|-1}H(Y_i)=(n-1)h(r)$.    
    Similarly by reordering we will get 
    $$\mathbf{r}_0 = (X_1', X_2', ..., X_{n-|\mathbf{t}|}', Y_1, Y_2, ..., Y_{|\mathbf{t}|-1}, Y_{|\mathbf{t}|}) = (X_1', X_2', ..., X_{n-|\mathbf{t}|}', Y_1, Y_2, ..., Y_{|\mathbf{t}|-1}, Y_1+ Y_2+ ...+ Y_{|\mathbf{t}|-1}).$$
    And $H( \mathbf{r}_0) = \sum_{i=1}^{n-|\mathbf{t}|}H(X_i')+\sum_{i=1}^{|\mathbf{t}|-1}H(Y_i)=(n-1)h(r)$. 
\end{proof}
Thus the optimal bound is $\frac{k}{n}=h(r)  \Rightarrow  r=h^{-1}(\frac{k}{n})=\frac{\frac{k}{n}}{-\log_2\frac{k}{n}}(1+o(1))\Rightarrow .$ 
\begin{align*}
\frac{1}{2}-\frac{1}{2}(1-2r)^{|t|}&= \frac{1}{2}-(\frac{1}{2})^{1+|t|\log_2(1-2r)}\\
    &= \frac{1}{2}-(\frac{1}{2})^{1+|t|\log_2\left(  1+2\frac{\frac{k}{n}}{\log_2\frac{k}{n}}(1+o(1))\right)}\\
    &= \frac{1}{2}-  (\frac{1}{2})^{1+|t|\frac{2}{\ln2}  \frac{\frac{k}{n}}{\log_2\frac{k}{n}}(1+o(1))}  \\
    &=\frac{1}{2} - \frac{1}{2n^{C(1+o(1))}},\\
\end{align*} which completes the reduction proof.

\section{Conclusion}
This paper delves into the critical role of the smoothing parameter and bound, a key concept involves adding sufficient noise to a discrete structure, such as a code, to make its distribution approximate uniformity over the Hamming space. We explore the optimization of this parameter using Rényi divergence, which includes deriving the smoothing bound for random linear codes across all Rényi parameters $\alpha \in (1, \infty)$. We further refine this analysis by reducing random linear codes to classes of random self-dual and quasi-cyclic codes, enhancing their structural properties. An application of our results demonstrates an average-case to average-case reduction from LPN to the average-case decoding problem, utilizing Rényi divergence and focusing on Bernoulli noise. 
\bibliographystyle{alpha}  
\bibliography{references}

\newcommand{\etalchar}[1]{$^{#1}$}
\begin{thebibliography}{BLRL{\etalchar{+}}18}

\bibitem[ABB{\etalchar{+}}22]{aragon2022bike}
Nicolas Aragon, Paulo Barreto, Slim Bettaieb, Loic Bidoux, Olivier Blazy,
  Jean-Christophe Deneuville, Philippe Gaborit, Santosh Ghosh, Shay Gueron, Tim
  G{\"u}neysu, et~al.
\newblock Bike: bit flipping key encapsulation.
\newblock 2022.

\bibitem[Ale03]{alekhnovich2003more}
Michael Alekhnovich.
\newblock More on average case vs approximation complexity.
\newblock In {\em 44th Annual IEEE Symposium on Foundations of Computer
  Science, 2003. Proceedings.}, pages 298--307. IEEE, 2003.

\bibitem[BCL{\etalchar{+}}17]{bernstein2017classic}
Daniel~J Bernstein, Tung Chou, Tanja Lange, Ingo von Maurich, Rafael Misoczki,
  Ruben Niederhagen, Edoardo Persichetti, Christiane Peters, Peter Schwabe,
  Nicolas Sendrier, et~al.
\newblock Classic mceliece: conservative code-based cryptography.
\newblock {\em NIST submissions}, 1(1):1--25, 2017.

\bibitem[BLP{\etalchar{+}}13]{brakerski2013classical}
Zvika Brakerski, Adeline Langlois, Chris Peikert, Oded Regev, and Damien
  Stehl{\'e}.
\newblock Classical hardness of learning with errors.
\newblock In {\em Proceedings of the forty-fifth annual ACM symposium on Theory
  of computing}, pages 575--584, 2013.

\bibitem[BLRL{\etalchar{+}}18]{bai2018improved}
Shi Bai, Tancr{\`e}de Lepoint, Adeline Roux-Langlois, Amin Sakzad, Damien
  Stehl{\'e}, and Ron Steinfeld.
\newblock Improved security proofs in lattice-based cryptography: using the
  r{\'e}nyi divergence rather than the statistical distance.
\newblock {\em Journal of Cryptology}, 31:610--640, 2018.

\bibitem[BLVW19]{brakerski2019worst}
Zvika Brakerski, Vadim Lyubashevsky, Vinod Vaikuntanathan, and Daniel Wichs.
\newblock Worst-case hardness for lpn and cryptographic hashing via code
  smoothing.
\newblock In {\em Annual international conference on the theory and
  applications of cryptographic techniques}, pages 619--635. Springer, 2019.

\bibitem[BMPS20]{biasse2020less}
Jean-Fran{\c{c}}ois Biasse, Giacomo Micheli, Edoardo Persichetti, and Paolo
  Santini.
\newblock Less is more: code-based signatures without syndromes.
\newblock In {\em Progress in Cryptology-AFRICACRYPT 2020: 12th International
  Conference on Cryptology in Africa, Cairo, Egypt, July 20--22, 2020,
  Proceedings 12}, pages 45--65. Springer, 2020.

\bibitem[CPW]{Chen_Peterson_Weldon}
C.L. Chen, W.W. Peterson, and E.J. Weldon.
\newblock Some results on quasi-cyclic codes.
\newblock {\em Information and Control}, page 407–423.

\bibitem[Cve12]{cvetkovski2012inequalities}
Zdravko Cvetkovski.
\newblock {\em Inequalities: theorems, techniques and selected problems}.
\newblock Springer Science \& Business Media, 2012.

\bibitem[DAR22]{debris2022worst}
Thomas Debris-Alazard and Nicolas Resch.
\newblock Worst and average case hardness of decoding via smoothing bounds.
\newblock {\em Cryptology ePrint Archive}, 2022.

\bibitem[DJ24]{dao2024lossyDensesparseLPN}
Quang Dao and Aayush Jain.
\newblock Lossy cryptography from code-based assumptions.
\newblock {\em arXiv preprint arXiv:2402.03633}, 2024.

\bibitem[DP82]{delsarte1982algebraic}
Philippe Delsarte and Philippe Piret.
\newblock Algebraic constructions of shannon codes for regular channels.
\newblock {\em IEEE Transactions on Information Theory}, 28(4):593--599, 1982.

\bibitem[DvW22]{ducas2022lattice}
L{\'e}o Ducas and Wessel van Woerden.
\newblock On the lattice isomorphism problem, quadratic forms, remarkable
  lattices, and cryptography.
\newblock In {\em Annual International Conference on the Theory and
  Applications of Cryptographic Techniques}, pages 643--673. Springer, 2022.

\bibitem[Hay06]{hayashi2006general}
Masahito Hayashi.
\newblock General nonasymptotic and asymptotic formulas in channel
  resolvability and identification capacity and their application to the
  wiretap channel.
\newblock {\em IEEE Transactions on Information Theory}, 52(4):1562--1575,
  2006.

\bibitem[Hay11]{hayashi2011exponential}
Masahito Hayashi.
\newblock Exponential decreasing rate of leaked information in universal random
  privacy amplification.
\newblock {\em IEEE Transactions on Information Theory}, 57(6):3989--4001,
  2011.

\bibitem[HKL{\etalchar{+}}12]{heyse2012lapin}
Stefan Heyse, Eike Kiltz, Vadim Lyubashevsky, Christof Paar, and Krzysztof
  Pietrzak.
\newblock Lapin: an efficient authentication protocol based on ring-lpn.
\newblock In {\em Fast Software Encryption: 19th International Workshop, FSE
  2012, Washington, DC, USA, March 19-21, 2012. Revised Selected Papers}, pages
  346--365. Springer, 2012.

\bibitem[HM16]{hayashi2016secure}
Masahito Hayashi and Ryutaroh Matsumoto.
\newblock Secure multiplex coding with dependent and non-uniform multiple
  messages.
\newblock {\em IEEE Transactions on Information Theory}, 62(5):2355--2409,
  2016.

\bibitem[HV93]{han1993approximation}
Te~Sun Han and Sergio Verd{\'u}.
\newblock Approximation theory of output statistics.
\newblock {\em IEEE Transactions on Information Theory}, 39(3):752--772, 1993.

\bibitem[Loe94]{loeliger1994basic}
Hans-Andrea Loeliger.
\newblock On the basic averaging arguments for linear codes.
\newblock {\em Communications and Cryptography: Two Sides of One Tapestry},
  pages 251--261, 1994.

\bibitem[MAB{\etalchar{+}}18]{melchor2018HQC}
Carlos~Aguilar Melchor, Nicolas Aragon, Slim Bettaieb, Lo{\i}c Bidoux, Olivier
  Blazy, Jean-Christophe Deneuville, Philippe Gaborit, Edoardo Persichetti,
  Gilles Z{\'e}mor, and IC~Bourges.
\newblock Hamming quasi-cyclic (hqc).
\newblock {\em NIST PQC Round}, 2(4):13, 2018.

\bibitem[McE78]{mceliece1978public}
Robert~J McEliece.
\newblock A public-key cryptosystem based on algebraic.
\newblock {\em Coding Thv}, 4244:114--116, 1978.

\bibitem[MR07]{micciancio2007worst}
Daniele Micciancio and Oded Regev.
\newblock Worst-case to average-case reductions based on gaussian measures.
\newblock {\em SIAM Journal on Computing}, 37(1):267--302, 2007.

\bibitem[MST72]{macwilliams1972good}
F.~Jessie MacWilliams, Neil J.~A. Sloane, and John~G. Thompson.
\newblock Good self dual codes exist.
\newblock {\em Discrete Mathematics}, 3(1-3):153--162, 1972.

\bibitem[PB23]{pathegama2023smoothing}
Madhura Pathegama and Alexander Barg.
\newblock Smoothing of binary codes, uniform distributions, and applications.
\newblock {\em Entropy}, 25(11):1515, 2023.

\bibitem[PB24]{pathegama2024r}
Madhura Pathegama and Alexander Barg.
\newblock R$\backslash$'enyi divergence guarantees for hashing with linear
  codes.
\newblock {\em arXiv preprint arXiv:2405.04406}, 2024.

\bibitem[Pei09]{peikert2009public}
Chris Peikert.
\newblock Public-key cryptosystems from the worst-case shortest vector problem.
\newblock In {\em Proceedings of the forty-first annual ACM symposium on Theory
  of computing}, pages 333--342, 2009.

\bibitem[Pre17]{prest2017sharper}
Thomas Prest.
\newblock Sharper bounds in lattice-based cryptography using the r{\'e}nyi
  divergence.
\newblock In {\em Advances in Cryptology--ASIACRYPT 2017: 23rd International
  Conference on the Theory and Applications of Cryptology and Information
  Security, Hong Kong, China, December 3-7, 2017, Proceedings, Part I 23},
  pages 347--374. Springer, 2017.

\bibitem[Reg09]{regev2009lattices}
Oded Regev.
\newblock On lattices, learning with errors, random linear codes, and
  cryptography.
\newblock {\em Journal of the ACM (JACM)}, 56(6):1--40, 2009.

\bibitem[YT18]{yu2018renyi}
Lei Yu and Vincent~YF Tan.
\newblock R{\'e}nyi resolvability and its applications to the wiretap channel.
\newblock {\em IEEE Transactions on Information Theory}, 65(3):1862--1897,
  2018.

\bibitem[Yu24]{yu2024renyiresolvability}
Lei Yu.
\newblock R$\backslash$'enyi resolvability, noise stability, and
  anti-contractivity.
\newblock {\em arXiv preprint arXiv:2402.07660}, 2024.

\bibitem[YZ21]{yu2021smoothing}
Yu~Yu and Jiang Zhang.
\newblock Smoothing out binary linear codes and worst-case sub-exponential
  hardness for lpn.
\newblock In {\em Advances in Cryptology--CRYPTO 2021: 41st Annual
  International Cryptology Conference, CRYPTO 2021, Virtual Event, August
  16--20, 2021, Proceedings, Part III 41}, pages 473--501. Springer, 2021.

\end{thebibliography}

\end{document}